\documentclass[11pt]{article}
\usepackage[T1]{fontenc}
\usepackage{lmodern}
\usepackage[a4paper,margin=2.2cm]{geometry}
\usepackage{amsmath,amssymb,amsthm,bm}
\usepackage{graphicx}
\usepackage{pgfplots}
\pgfplotsset{compat=1.18}
\usepgfplotslibrary{groupplots,fillbetween}
\usetikzlibrary{backgrounds,fadings,arrows.meta,decorations.pathreplacing}
\pgfplotsset{table/search path={figs}}
\pgfplotsset{
  paperaxis/.style={
    tick align=outside, tick pos=left, line width=0.5pt,
    grid style={line width=0.3pt, draw=black!12},
    legend style={draw=black!25, fill=white, fill opacity=0.9, text opacity=1, rounded corners=1pt},
  }
}
\usepackage{placeins}   % \FloatBarrier available; NO automatic per-section barrier (lets floats flow to pack pages)
\usepackage{booktabs}
\usepackage{array}
\usepackage{microtype}
\usepackage[dvipsnames]{xcolor}
\usepackage{cite}
\usepackage[colorlinks=true,allcolors=blue,breaklinks=true]{hyperref}
\hypersetup{
  pdftitle={Dynamical spectral functions from bitstring-sampled quantum subspaces: entanglement, not one-body magic, tracks the sampling cost},
  pdfauthor={Nicolas Bonilla Vargas},
  pdfkeywords={sample-based quantum diagonalization, QSCI, spectral functions, fermionic magic, entanglement, Hubbard model, quantum simulation}
}
\usepackage{caption}
\usepackage{titlesec}
\titleformat{\section}{\normalfont\large\bfseries}{\thesection.}{0.6em}{}
\titleformat{\subsection}{\normalfont\normalsize\bfseries}{\thesubsection}{0.6em}{}
\graphicspath{{figs/}}
\newtheorem{theorem}{Theorem}
\newtheorem{proposition}{Proposition}
\providecommand{\braket}[1]{\langle #1 \rangle}
\providecommand{\ket}[1]{\lvert #1 \rangle}

\title{\vspace{-1.2cm}\bfseries Dynamical spectral functions from bitstring-sampled quantum subspaces:
entanglement, not one-body magic, tracks the sampling cost}
\author{Nicol\'as Bonilla Vargas\thanks{M.Sc.\ Physics, Universidad Nacional de Colombia;
M.Sc.\ Applied Data Science \& AI, SRH University Heidelberg / Munich; Daita AI.
\texttt{ngbonillav@unal.edu.co}}}
\date{\today}

\begin{document}
\maketitle

\begin{abstract}\noindent
Sample-based quantum diagonalization (SQD) and quantum-selected configuration interaction (QSCI) are
the electronic-structure methods with the most quantum-hardware traction, yet their canonical
target---the ground-state \emph{energy}---is precisely where classical methods have caught up:
matchgate simulators, autoregressive neural samplers, and heat-bath configuration interaction now
match or beat the quantum-sampled subspace. We move the target to \emph{dynamics} and to the
underlying \emph{resource} question. From one bitstring-sampling primitive---computational-basis
measurements of a shallow real-time circuit, with no Hadamard or controlled-unitary tests---we
reconstruct, from sampled subspaces, the single-particle spectral function $A(\omega)$ and its
momentum-resolved form $A(k,\omega)$; the same primitive targets the neutral-sector dynamical structure
factors $S(q,\omega)$ and $S^{zz}(q,\omega)$, each channel assembled classically in the Lehmann
representation from its own sector-specific subspace. The sampled reconstruction matches exact
diagonalization on Hubbard chains and, for the orbital-projected
$A(\omega)$, across a nineteen-molecule suite (energies FCI-verified to $<10^{-5}$~Ha), and runs on the
IBM~Heron processor.

Second, we ask which resource controls the cost---the determinant support $|\mathcal S|$ the sampler
must populate---and correct a natural expectation. On the number-conserving states of chemistry the
fermionic AntiFlatness (fermionic magic) collapses to a single one-particle-reduced-density-matrix
invariant, $\mathcal F_1=4\,\mathrm{tr}[\gamma(1-\gamma)]$ (equal to $2N_{\mathrm u}$ for
spin-restricted natural orbitals), the effectively-unpaired-electron index and one-body-entanglement
linear entropy. Being an orbital-rotation (Gaussian) invariant while $|\mathcal S|$ is basis dependent,
$\mathcal F_1$ is provably \emph{decoupled} from the cost, two-sidedly and without a stable sign. The
cost is instead \emph{lower-bounded and empirically tracked} by the basis-dependent entanglement---the
minimal bond dimension $\chi$ (Spearman $\rho=0.90$, $n=38$). One-body fermionic magic is thus a
faithful multireference diagnostic but an unreliable predictor of the sampling cost; any genuine
advantage lives one stratum above, in the high-rank non-Gaussianity of the connected higher-body
cumulants in the operative sampling frame. We prove moment exactness and a sampling-capture bound
polynomial in $|\mathcal S|$ and independent of the Hilbert-space dimension. On the
artificial-intelligence side of the hybrid loop, self-consistent configuration recovery improves the
sampled subspace under realistic device noise, while a \emph{learned} generative model does not beat
that classical baseline---delimiting the quantum--AI frontier.
\end{abstract}

\section{Introduction}

The promise that quantum computers will solve the central problem of electronic
structure---computing the ground-state energy of a many-electron Hamiltonian, whose
configuration space grows combinatorially and whose exact solution is QMA-complete in the worst
case~\cite{schuchverstraete2009}---has driven a decade of algorithmic development~\cite{reiher2017,cerezo2021,tilly2022}. The
variational quantum eigensolver (VQE)~\cite{peruzzo2014,kandala2017} dominated the noisy-device era
but has, for the purpose of demonstrating a scaling advantage, largely run its course: barren
plateaus render deep expressive circuits untrainable~\cite{mcclean2018}, and the very structural
conditions that let a circuit avoid them tend also to render it classically
simulable~\cite{cerezo2025bp}. Sample-based quantum diagonalization
(SQD)~\cite{robledomoreno2025}, equivalently quantum-selected configuration interaction
(QSCI)~\cite{kanno2023}, took a different route: rather than optimize a circuit to \emph{be} the ground state, one uses a
shallow, noise-tolerant circuit only to \emph{sample} the electronic configurations that carry
appreciable weight, and then diagonalizes the Hamiltonian classically in the sampled determinant
subspace on high-performance hardware~\cite{robledomoreno2025,skqd2025,yoshioka2025,kirby2023}.
Within two years SQD became the de facto default of pre-fault-tolerant quantum chemistry, reaching
iron--sulfur clusters at 77 qubits with the Fugaku supercomputer in the
loop~\cite{robledomoreno2025,shirakawa2025} and extending to excited
states~\cite{extsqd2025}, adaptive state preparation~\cite{adaptqsci}, density-matrix
embedding~\cite{shajan2024}, and auxiliary-field quantum Monte Carlo~\cite{danilov2025,qcqmc,qcafqmcion2025}.

Yet the target of SQD---the ground-state \emph{energy}---is exactly where the classical frontier has
closed the gap. The broader case for a generic exponential quantum advantage in ground-state
chemistry was already found wanting: across sampled chemical space, heuristic classical methods
remain competitive and the assumed advantage rests on efficient state
preparation~\cite{leechan2023}. For SQD specifically, an extended matchgate simulator reproduces the
unitary-cluster-Jastrow--SQD workflow~\cite{extraferm}, single-layer unitary cluster Jastrow
energies are computable in polynomial time---reproducing the flagship 77-qubit experiment on a
laptop at \emph{lower} energy~\cite{belagali2026}---and classical autoregressive neural samplers
reconstruct the determinant-selection loop on a GPU~\cite{hinqs2026,arnnsci2026}. A direct
quantum-chemistry benchmark finds classical heat-bath configuration interaction more compact and
cheaper than QSCI at accessible sizes~\cite{reinholdt2025}, and a complementary lattice-model study
reports the intrinsic exponential configuration-space scaling that drives this~\cite{gaberle2026}---a
conclusion we reached independently in a companion review~\cite{review}. This mirrors the fate of the first ``quantum utility''
demonstration~\cite{kim2023}, reproduced classically within months by tensor-network~\cite{tindall2024}
and sparse-Pauli~\cite{begusic2024} methods. The lesson is quantitative: under fault-tolerant
overheads even polynomial speedups do not deliver practical advantage~\cite{babbush2021}, so a
\emph{defensible} advantage must live where the classical cost genuinely
explodes~\cite{mazzola2024}.

We argue that the \emph{defensible} quantum advantage of a sample-based method, if one exists, is not in
the energy but in the \emph{sampling} of a hard distribution, realized most cleanly for
\emph{dynamical} observables---the regime our primitive targets, a necessary precondition for any such
advantage rather than a demonstration of one. Real-time dynamics is where classical tensor networks hit the
entanglement barrier~\cite{calabresecardy2005}, where the fermion sign problem becomes a phase
problem~\cite{troyerwiese2005}, and where every
surviving beyond-classical hardware demonstration now lives---from the random-circuit-sampling
programme~\cite{arute2019} and verifiable-advantage correlators~\cite{google_echoes} to real-time
simulations of the Fermi--Hubbard model that already outrun exact classical methods on
superconducting~\cite{googlefh2025} and trapped-ion~\cite{quantinuumfh2025} processors, and to
precision estimation of prethermal dynamics beyond classical
reach~\cite{qedma2026,algorithmiq2026}. The natural dynamical targets are the
\emph{spectral functions} of correlated matter: the single-particle spectral function $A(\omega)$
measured by angle-resolved photoemission~\cite{damascelli2003}, its momentum-resolved form
$A(k,\omega)$, and the dynamical structure factor $S(q,\omega)$ probed by neutron and X-ray
scattering~\cite{squires2012,ament2011}---the central outputs of dynamical mean-field theory~\cite{dmft1996}. Classically these
are obtained by dynamical density-matrix renormalization~\cite{jeckelmann2002,nocera2016},
time-dependent DMRG~\cite{whitefeiguin2004}, and kernel-polynomial expansions~\cite{kpm2006} in one
dimension, where the density-matrix renormalization group is near-optimal~\cite{whitedmrg1992,schollwock2011}, but
they remain hard beyond it.

Crucially, no published method reconstructs a frequency-resolved single-particle spectral function from
a \emph{bitstring-sampled} quantum subspace in the charged $(N{\pm}1)$ sectors, with the Green's function
assembled classically in the Lehmann representation. The SQD/SKQD/TE-QSCI
lineage stops at energies and, at most, a handful of discrete excited-state
levels~\cite{extsqd2025,teqsci2024}; quantum Green's-function methods use Hadamard
tests~\cite{bauer2016,endo2020} or Krylov and quantum-subspace recursions
built from measured overlaps~\cite{mcclean2017qse,rungger2020,cortesgray2022,parrish2019,onqse2024,umeano2025}---the
last computing a dynamical structure factor $S(q,\omega)$ for a Kitaev spin liquid from a measured-overlap
subspace rather than sampled configurations---or hybrid quantum--classical response-and-structure
frameworks~\cite{du2026}, rather than sampled subspaces; and
the hardware real-time-observable results come from a separate
error-mitigation-plus-Trotter line~\cite{qedma2026,algorithmiq2026,quantinuum_dsf2026}. The two
closest results---a QSCI quasiparticle band structure that returns discrete
poles~\cite{ohgoe2025} and a molecular spectrum obtained from a neutral-sector autocorrelation built from
short-time quantum-evolved samples with classically projected long-time dynamics~\cite{santini2025}---return
static levels or a neutral-sector response, not the charged-sector $(N{\pm}1)$ single-particle Green's
function that our Lehmann construction assembles from the sampled subspace. We occupy that gap.

To know \emph{where} such a method is hard---and, just as important, where it is not---one must know
which resource controls its cost. Free-fermion (matchgate) circuits are classically
efficient~\cite{valiant2002,terhaldivincenzo2002,bravyi2005flo,jozsamiyake2008}; the resource that
leaves this manifold is fermionic non-Gaussianity, and every pure non-Gaussian fermionic state is a
\emph{magic} state for matchgate computation~\cite{hebenstreit2019}---the fermionic counterpart of the
non-stabilizerness that powers universal qubit computation~\cite{bravyikitaev2005,leone2022sre}. The
efficiently computable fermionic AntiFlatness (FAF)~\cite{sierant2026}, built from the Majorana
covariance matrix, quantifies that distance from the Gaussian manifold, and it is natural to hope that
this single measured number decides when the quantum sampler is essential. \emph{It does not.} The
cost of a sample-based method is its determinant-subspace size $|\mathcal S|$ (defined below); we prove
one-body fermionic magic is decoupled from it. Understanding this
decoupling---and locating the genuine hardness in the higher-order sector, where the paired
``magic-input'' regime is classically tractable for additive-error \emph{estimation} under
free-fermionic dynamics~\cite{oh2026}, though computational-basis \emph{sampling} of the same family in
a rotated frame may remain hard~\cite{fermionsampling2022}---is the second contribution of this work.

This work makes \emph{two} central contributions to quantum computation, each closing a gap that a
comprehensive survey of the literature confirms is open, and each supported by a rigorous account of
its scope.

\textbf{First, a new use of the sampled-subspace toolkit: frequency-resolved dynamical response
from bitstring-sampled subspaces---charged-sector single-particle spectra and neutral-sector structure factors.} We compute the single-particle spectral function $A(\omega)$ and, from the same
charged-$(N{\pm}1)$ sampled configurations, its momentum-resolved form $A(k,\omega)$; the density and
spin responses $S(q,\omega)$ and $S^{zz}(q,\omega)$ follow from number-conserving seeds in the neutral
$N$ sector---all purely from computational-basis bitstrings, with no Hadamard tests, no controlled unitaries, no
overlap estimation. This lifts the entire SQD/QSCI toolkit from the static energies and discrete
levels it has been confined to~\cite{robledomoreno2025,kanno2023,extsqd2025,ohgoe2025} to
continuous, frequency-resolved dynamics, and we validate it against exact diagonalization on Hubbard
chains, across nineteen molecules spanning every bonding regime, and on the IBM Heron
processor.

\textbf{Second, the resource that sets the cost---and the one that does not.} The cost of a
sample-based method is the size $|\mathcal S|$ of the determinant subspace the sampler must populate to
reach a target accuracy, and we identify what controls it by first ruling out what does not. Restricted
to the number-conserving states of chemistry, the fermionic AntiFlatness~\cite{sierant2026} collapses
to a single one-particle reduced-density-matrix invariant,
$\mathcal F_1=4\,\mathrm{tr}[\gamma(1-\gamma)]$ (equal to $2N_{\mathrm u}$ for spin-restricted natural
orbitals)---the effectively-unpaired-electron
index~\cite{takatsuka1978,staroverov2000} and the one-body-entanglement linear
entropy~\cite{gigena2020} (\S\ref{sec:resource}). Being a function of the orbital-rotation-invariant
occupation spectrum it is a Gaussian invariant, whereas $|\mathcal S|$ is basis dependent; a quantity
unchanged by the orbital rotations that sweep $|\mathcal S|$ over its whole range cannot predict it.
Across the same nineteen FCI-verified molecules one-body magic does track static, multireference
character---switching on for covalent bond-breaking, already large for inherently multireference
$\mathrm{C_2}$ and $\mathrm{O_2}$, near-Gaussian for the ionic bond of LiF, refining to fermionic
non-Gaussianity the qubit non-stabilizerness peak of molecular bonding~\cite{sarkis2025}---yet this
correlation is not even sign stable: a free-fermion determinant has $\mathcal F_1=0$ with exponentially
large $|\mathcal S|$, and increasing on-site repulsion raises $\mathcal F_1$ while \emph{lowering}
$|\mathcal S|$. One-body fermionic magic is a faithful multireference diagnostic but an unreliable
predictor of the sampling cost, which is set instead by the basis-dependent entanglement---the minimal
bond dimension $\chi$, which tracks $|\mathcal S|$ across the suite. Any genuine quantum advantage lives
one stratum above, in the generic, high-rank non-Gaussianity of the connected higher-body cumulants in
the fixed sampling basis---not the orbital-invariant higher-order antiflatness $\mathcal F_{k\ge2}$, which
we exhibit large on a classically trivial paired family---beyond the stratum whose free-fermionic
estimation primitives remain classically tractable~\cite{oh2026}.
To our knowledge this is the first time a fermionic-magic measure is connected to sample-based
diagonalization.

These contributions are delivered with the rigor the dequantization debate now
demands~\cite{leechan2023,reinholdt2025,belagali2026}: we scope the reach of the method (it
is asymptotic, and lives in two-dimensional and chemical settings, not in one dimension where tensor
networks remain near-optimal~\cite{whitedmrg1992,schollwock2011}), prove a moment-exactness and polynomial-shot bound
for the sampled reconstruction, and, on the artificial-intelligence side of the hybrid loop, show that
self-consistent configuration recovery~\cite{robledomoreno2025} improves the sampled subspace under
realistic device noise~\cite{wray2025,vaquero2026}---through valid-shot economy rather than quantum advantage---while a \emph{learned}
generative tail-completer does \emph{not} beat that classical baseline (Fig.~\ref{fig:gflow}), delimiting
the quantum--AI frontier consistent with, not counter to, the classical-simulability frontier we and
others have charted~\cite{review,reinholdt2025}. The
resulting resource diagnostics, computed from bitstrings already in hand, map where that frontier lies
for the spectral and dynamical workloads of quantum-centric
computation~\cite{robledomoreno2025,shirakawa2025}.

Beyond the specific algorithm, the reframing carries weight for both the science and the industry of
quantum computing. \emph{For the field}, it forges a first link between the resource theory of
fermionic non-Gaussianity~\cite{hebenstreit2019,sierant2026} and sample-based electronic-structure
computation~\cite{robledomoreno2025,kanno2023}, and replaces a natural but mistaken hope---that a
one-body magic number certifies where sampling is hard---with a precise statement of which resource
sets the cost and which does not. \emph{For practice}, the observables it delivers---the single-particle
spectral function, its momentum resolution, and the dynamical structure factor of correlated
matter~\cite{damascelli2003,dmft1996}---are precisely those that materials discovery, catalysis, and
correlated-electron device design demand and that the energy-only output of today's sample-based
hardware does not provide; and because the resource diagnostics are measured from bitstrings already in
hand, they chart, before further runtime is spent, where the classical frontier lies for a given
spectral or dynamical workload on the quantum-centric supercomputers now in
operation. We develop both threads in Sec.~\ref{sec:honest}.

\section{The method: spectral functions from bitstring-sampled subspaces}\label{sec:method}

\subsection{Construction}
For a Hamiltonian $\hat H$ with $N$-electron ground state $\lvert 0\rangle$ of energy $E_0$, the
single-particle spectral function has an electron-addition and an electron-removal branch,
\begin{equation}
\begin{aligned}
A_{p}(\omega)=&\sum_{n}\bigl|\langle n^{N+1}\lvert \hat c^\dagger_p\rvert 0\rangle\bigr|^{2}\,
\delta\!\bigl(\omega-(E^{N+1}_n-E_0)\bigr)\\
+&\sum_{m}\bigl|\langle m^{N-1}\lvert \hat c_p\rvert 0\rangle\bigr|^{2}\,
\delta\!\bigl(\omega-(E_0-E^{N-1}_m)\bigr),
\end{aligned}
\label{eq:lehmann}
\end{equation}
energies measured from $E_0$; the chemical potential $\mu=\tfrac12(E^{N+1}_0-E^{N-1}_0)$, which for the
particle--hole-symmetric Hubbard model at half filling places the Fermi level mid-gap. We build the
diagonalization subspace in three steps. \emph{(1)}~Prepare the response seeds
$\lvert\phi^{\pm}\rangle=\hat c^{\dagger}_p\lvert 0\rangle,\ \hat c_p\lvert 0\rangle$ in the
$(N{\pm}1)$ sectors. \emph{(2)}~Time-evolve each, $\lvert\phi^{\pm}(t_k)\rangle=e^{-i\hat H
t_k}\lvert\phi^{\pm}\rangle$, and \emph{sample computational-basis bitstrings}; the union of sampled
configurations across the evolution times is the determinant subspace $\mathcal S^{\pm}$.
\emph{(3)}~Form $\hat H_{\mathcal S^{\pm}}=P_{\mathcal S^{\pm}}\hat H P_{\mathcal S^{\pm}}$,
diagonalize each as $\hat H_{\mathcal S^{\pm}}=\sum_m \tilde E^{\pm}_m\lvert\tilde m^{\pm}\rangle\langle\tilde m^{\pm}\rvert$,
and evaluate the Lehmann sum inside $\mathcal S^{\pm}$. The retarded Green's function is then assembled
entirely on the classical processor,
\begin{equation}
G_{pq}(\omega)=\sum_{m}\frac{\langle 0\lvert \hat c_p\rvert\tilde m^{+}\rangle\langle\tilde m^{+}\lvert \hat
c^\dagger_q\rvert 0\rangle}{\omega-(\tilde E^{+}_m-E_0)+i\eta}
+\sum_{m}\frac{\langle 0\lvert \hat c^{\dagger}_q\rvert\tilde m^{-}\rangle\langle\tilde m^{-}\lvert \hat
c_p\rvert 0\rangle}{\omega-(E_0-\tilde E^{-}_m)+i\eta},
\label{eq:green}
\end{equation}
whose imaginary part is the Lorentzian-broadened $A_{pq}(\omega)$ at resolution $\eta$. Because the sampled
configurations---the union over all orbital seeds $\{\hat c^\dagger_q\lvert 0\rangle\}_q$ in the charged
$(N{\pm}1)$ sectors, and number-conserving density seeds $\{\hat n_q\lvert 0\rangle\}_q$ in the neutral
$N$ sector---span the sector-specific subspaces in which the \emph{full} response matrices live, they
yield without further measurement the momentum-resolved spectral function $A(k,\omega)$ (by Fourier
transform of the charged $G_{pq}$) and the dynamical structure factor $S(q,\omega)$ (from the neutral
density response). The quantum processor measures \emph{only} bitstrings; no Hadamard test, controlled
unitary, or off-diagonal overlap is estimated on the device (the sample--recover--reconstruct primitive
is schematized in Fig.~\ref{fig:circ}).

\subsection{Relation to quantum subspace and Krylov methods}
The construction is a sampling realization of the quantum subspace idea. In exact arithmetic the span
of the time-evolved seeds $\{e^{-i\hat H t_k}\lvert\phi\rangle\}$ coincides with the order-$K$
polynomial Krylov space~\cite{shen2023,kirby2023}, and diagonalizing $\hat H$ within a subspace
containing this space is the route taken by quantum subspace expansion~\cite{mcclean2017qse}, quantum
filter diagonalization~\cite{parrish2019}, and real-time quantum Krylov
algorithms~\cite{cortesgray2022}, whose stability and error theory are now
well characterized~\cite{epperly2022}. What distinguishes our method is that these
states are used not to \emph{estimate matrix elements}---the step that requires Hadamard tests or
quantum equation-of-motion measurements in Green's-function algorithms on
hardware~\cite{bauer2016,endo2020,rungger2020,greenediniz2024,bespalova2025}---but as a \emph{sampling
distribution} that selects a determinant subspace, exactly as in sample-based
diagonalization~\cite{robledomoreno2025,kanno2023,skqd2025}, now lifted to the $(N{\pm}1)$ manifold
and to a frequency-resolved observable. Closest on lattice models, Nogaki \emph{et al.}~\cite{nogaki2025}
apply symmetry-adapted SQD to lattice-model ground-state \emph{energies}; we instead lift the sampled
subspace to the $(N{\pm}1)$ and neutral sectors to reconstruct the frequency-resolved single-particle and
collective \emph{response}. The device \emph{readout} is therefore that of SQD---computational-basis
sampling---while the circuits are the real-time-evolution circuits of sample-based Krylov
diagonalization~\cite{skqd2025,yoshioka2025}, and the dynamical information is reconstructed
classically through Eq.~\eqref{eq:green}.

\subsection{Convergence and sampling cost}
Two questions govern reliability: does the subspace reproduce the exact spectrum, and how many shots
are required to populate it? Both admit an answer. Writing each branch of $A_p$ through its spectral moments---for the addition branch,
$\mu_r^{+}=\int \omega^r A_p^{+}(\omega)\,d\omega=\langle 0\lvert \hat c_p (\hat H-E_0)^r \hat
c^\dagger_p\rvert 0\rangle$, and analogously for removal ($\hat c^\dagger_p,\hat c_p$)---a subspace
containing the order-$K$ Krylov space of that branch reproduces its exact
moments through order $2K{+}1$---the Gauss-quadrature property of moment
methods~\cite{golub2010}---so the underlying unbroadened Stieltjes measure is moment-exact through order
$2K{+}1$ in each branch (the broadened $A_p$ inherits this as geometric $L_1$ convergence at fixed
$\eta>0$, though its own moments of order ${\ge}2$ diverge), and its
residual $L_1$ error is bounded by the spectral weight outside the captured subspace. Under the empirically observed concentration of the
dominant Lehmann residues on a number of configurations that grows polynomially in the target accuracy,
a polynomial shot budget suffices to realize this subspace with high probability; we make
this precise in Theorem~\ref{thm:b2} (App.~\ref{app:b2}), in the spirit of the convergence guarantees
established for sample-based Krylov diagonalization~\cite{skqd2025,kirby2023,yoshioka2025,piccinelli2025}. In
practice the reconstruction is validated \emph{a posteriori} by the exact branch sum rules
$\int A_p^{+}\,d\omega = 1-\langle \hat n_p\rangle$ and $\int A_p^{-}\,d\omega = \langle \hat n_p\rangle$
(unit total weight) and by self-consistency of the sampled subspace, so that at scales where an
exact reference is unavailable the result remains internally consistent and weight-normalized. These
zeroth-moment checks are not themselves an error bound; the genuine certificate is the error
bound of Theorem~\ref{thm:b2}(iii), $\lVert A-A_S\rVert_1\le 2\lVert\phi\rVert^2(1-w_S)+C\,\rho^{-K_S}$,
which controls the reconstruction once the captured weight $w_S\to1$ and the captured Krylov order
$K_S$ is adequate---an essential property given that the
classical-hardness of a specific instance is empirical rather than a theorem
(Sec.~\ref{sec:honest}).

\section{One-body magic and its decoupling from the sampling cost $|\mathcal S|$}\label{sec:resource}

The value of a sampling method is set by whether the distribution it samples is classically hard, and
for fermions that hardness has a precise algebraic origin. Circuits of free-fermion (matchgate) gates
are classically simulable in polynomial time~\cite{valiant2002}, equivalently the classical
simulability of non-interacting-fermion dynamics~\cite{terhaldivincenzo2002}, with a
Grassmann/covariance-matrix formulation via fermionic linear optics~\cite{bravyi2005flo,jozsamiyake2008}.
The resource that leaves this manifold is fermionic \emph{non-Gaussianity}: every pure non-Gaussian
fermionic state, injected into a matchgate circuit, promotes it to universal quantum
computation~\cite{hebenstreit2019}---the fermionic counterpart of the non-stabilizerness (``magic'')
of universal qubit computation~\cite{bravyikitaev2005,leone2022sre}. A ladder of measures makes this
quantitative: Gaussian rank and extent~\cite{cudbystrelchuk2023}, the non-Gaussian gate count that
bounds phase-sensitive free-fermion simulation~\cite{reardonsmith2024,diaskoenig2024}, and
covariance-matrix monotones computable from two-point functions~\cite{tarabunga2026,collura2026}.

We work with the fermionic AntiFlatness (FAF)~\cite{sierant2026},
\begin{equation}
\mathcal F_k \;=\; 2n_{\rm o}-\tfrac12\,\mathrm{tr}\bigl[(\Gamma^{\top}\Gamma)^k\bigr] \;=\; 2n_{\rm o}-\sum_{j=1}^{2n_{\rm o}}\lambda_j^{2k},
\qquad \Gamma_{mn}=-\tfrac{i}{2}\langle[\gamma_m,\gamma_n]\rangle,
\label{eq:faf}
\end{equation}
where $\Gamma$ is the $4n_{\rm o}\times4n_{\rm o}$ Majorana covariance matrix and $\{\lambda_j\}\in[0,1]$
the singular values of its orthogonal (Youla) normal form---the moduli of its eigenvalue pairs
$\pm i\lambda_j$; here $2n_{\rm o}$ is the number of spin-orbital \emph{modes} (twice the spatial-orbital
count $n_{\rm o}$), distinct from the electron number of the $(N\pm1)$ sectors, and $\mathcal F_k$ ranges in $[0,2n_{\rm o}]$.
$\mathcal F_k=0$ if and only if the state is Gaussian, and $\mathcal F_k$ is invariant
under Gaussian (free-fermion) unitaries by construction~\cite{sierant2026}. The index $k$ is a
hierarchy of correlator orders, efficiently computable and experimentally accessible from two-point
functions~\cite{sierant2026,bittel2025,tarabunga2026}.

\paragraph{One-body collapse.} On the number-conserving states of electronic structure the anomalous
correlators $\langle c_ic_j\rangle$ vanish, so $\Gamma$ is fixed by the one-particle reduced density matrix
$\gamma$ and its singular values are $\lambda_i=|2g_i-1|$, with $g_i\in[0,1]$ the natural
spin-orbital occupations. The $k=1$ member then reduces to a single 1-RDM invariant,
\begin{equation}
\boxed{\;\mathcal F_1 \;=\; \sum_i\bigl[1-(2g_i-1)^2\bigr] \;=\; 4\sum_i g_i(1-g_i)
\;=\; 4\,\mathrm{tr}[\gamma(1-\gamma)]\;,}
\label{eq:collapse}
\end{equation}
where the sum runs over spin orbitals and $g_i\in[0,1]$ are the natural spin-orbital occupations. This
chain is exact on any number-conserving state. For \emph{spin-restricted} natural orbitals
($g_{a\uparrow}=g_{a\downarrow}=n_a/2$) it equals $2N_{\mathrm u}$, with
$N_{\mathrm u}=\sum_a n_a(2-n_a)$ (spatial occupations $n_a\in[0,2]$) the \emph{quadratic}
effectively-unpaired-electron index of Takatsuka--Fueno--Yamaguchi and of
Staroverov--Davidson~\cite{takatsuka1978,staroverov2000} (distinct from the \emph{linear}
$\sum_a\min(n_a,2-n_a)$ index of Head-Gordon~\cite{headgordon2003}); a cleanly singly-occupied spin
orbital contributes $0$ to $\mathcal F_1$ but $2$ to $2N_{\mathrm u}$, so on the five genuinely
spin-polarized (open-shell) members of Table~\ref{tab:molecules} ($\mathrm{O_2},\mathrm{NO},\mathrm{CN},
\mathrm{OH},\mathrm{BeH}$) the two differ and we report the exact
$\mathcal F_1=4\,\mathrm{tr}[\gamma(1-\gamma)]$ directly. The invariant $\mathrm{tr}[\gamma(1-\gamma)]$ is
the Gigena--Rossignoli one-body-entanglement linear entropy~\cite{gigena2020}. We verify
Eq.~\eqref{eq:collapse}---and, on the closed-shell members, its equality to $2N_{\mathrm u}$---to machine
precision across the dissociation of $\mathrm{N_2}$ and the full suite of Table~\ref{tab:molecules}. Algebraically Eq.~\eqref{eq:collapse} is an immediate
number-conserving specialization of the antiflatness of Ref.~\cite{sierant2026}, and its ensemble
form---$\mathcal F_1$ proportional to the one-body purity deficit---was independently noted for
interacting-integrable eigenstates in Ref.~\cite{swietek2026}; our point is not the algebra but the
exact, state-by-state \emph{dictionary} it makes explicit. Three literatures---fermionic magic,
effectively-unpaired electrons, and one-body entanglement---thus name one quantity, and, being a
function of the occupation spectrum alone, $\mathcal F_1$ inherits the orbital-rotation invariance of
the eigenvalues of $\gamma$. (Outside
number conservation the collapse fails: a paired Gaussian state hides fractional occupations in
anomalous correlators and keeps $\mathcal F_1=0$.)

\paragraph{Decoupling from the cost.} That invariance is exactly what severs $\mathcal F_1$ from the
sampling cost. The cost of a sample-based method is the determinant support $|\mathcal S|_\varepsilon$:
the number of fixed-basis Slater determinants that carry $1-\varepsilon^2$ of the weight. Unlike
$\mathcal F_1$, $|\mathcal S|$ is \emph{basis dependent}---a single Slater determinant has
$|\mathcal S|=1$ in its own natural orbitals and exponentially many terms in a generic basis. An
orbital rotation is a free (Gaussian) operation that leaves the occupation spectrum, hence
$\mathcal F_1$, exactly fixed while changing $|\mathcal S|$---from $O(1)$ in the natural-orbital basis to
exponentially many determinants in a generic basis. A quantity invariant under operations that change
$|\mathcal S|$ at fixed value cannot bound or predict it: one-body magic and the sampling cost are
decoupled by symmetry.

\begin{figure}[t]\centering
% self-contained native pgfplots fragment (fig:decoupling)
\definecolor{dcH1}{HTML}{D1495B}\definecolor{dcH2}{HTML}{009E73}\definecolor{dcH3}{HTML}{E0A21E}
\definecolor{dcFree}{HTML}{3B7A57}\definecolor{dcCov}{HTML}{E8543A}\definecolor{dcMrf}{HTML}{10A87B}
\definecolor{dcIon}{HTML}{2E6FE0}\definecolor{dcTrend}{HTML}{9A9A9A}
\begin{tikzpicture}
\begin{groupplot}[group style={group size=3 by 1, horizontal sep=0.9cm},
  width=6.0cm, height=5.4cm, paperaxis,
  title style={at={(0.5,1.0)},anchor=south,font=\bfseries,yshift=2pt},
  label style={font=\normalsize}, tick label style={font=\normalsize},]
% ---- (a) molecules (coloured by bonding regime) ----
\nextgroupplot[xlabel={$\mathcal{F}_1=4\,\mathrm{tr}[\gamma(1{-}\gamma)]$},ylabel={$\log_2|\mathcal{S}|$},
  title={(a)~molecules},xmin=-0.6,ymin=-0.4,enlarge x limits=0.05]
\addplot[dcTrend,line width=1pt,dashed,domain=0.000:11.377,forget plot] {0.4736*x+0.8502};
\addplot[only marks,mark=*,mark size=2pt,draw=white,line width=0.4pt,fill=dcCov] coordinates {(0.038,0.000) (2.511,1.000) (0.001,0.000) (3.460,2.000) (0.068,0.000) (3.985,1.000) (0.688,3.000) (8.192,5.524) (0.847,2.807) (11.377,6.322) (0.039,0.000) (3.754,2.585) (0.007,0.000) (6.936,3.700) (0.080,0.000) (10.289,6.087) (0.252,1.000) (5.113,3.459) (0.521,3.170) (7.735,6.508)};
\addplot[only marks,mark=*,mark size=2pt,draw=white,line width=0.4pt,fill=dcMrf] coordinates {(4.743,3.700) (8.001,2.585) (0.013,0.000) (3.703,2.322) (1.673,3.459) (6.000,2.000) (1.307,4.755) (5.938,3.170) (0.790,2.322) (7.449,4.644)};
\addplot[only marks,mark=*,mark size=2pt,draw=white,line width=0.4pt,fill=dcIon] coordinates {(0.002,0.000) (0.075,1.000) (0.000,0.000) (0.001,0.000) (0.518,2.322) (0.573,1.585) (0.035,0.000) (0.386,1.000)};
\node[anchor=south east] at (rel axis cs:0.98,0.02) {$r=0.80$};
% ---- (b) Hubbard chains ----
\nextgroupplot[xlabel={$\mathcal{F}_1$\; ($U\!\uparrow$)},ylabel={$\log_2|\mathcal{S}|$},
  title={(b)~Hubbard chains},legend style={at={(0.97,0.97)},anchor=north east,draw=black!20},
  legend cell align=left]
\addplot[dcH1,mark=*,mark size=2pt,line width=1.1pt] coordinates {(0.626,8.366) (2.232,8.285) (5.913,8.022) (9.605,7.200) (11.300,6.087)}; \addlegendentry{$(6,6)$}
\addplot[dcH2,mark=square*,mark size=2pt,line width=1.1pt] coordinates {(0.802,11.729) (2.918,11.607) (7.881,11.087) (12.816,9.781) (15.070,8.353)}; \addlegendentry{$(8,8)$}
\addplot[dcH3,mark=triangle*,mark size=2.4pt,line width=1.1pt] coordinates {(0.486,11.194) (1.635,11.150) (4.368,10.995) (7.960,10.644) (10.439,10.091)}; \addlegendentry{$(8,6)$}
% ---- (c) free fermions (y-label dropped; shares the |S| meaning, own scale) ----
\nextgroupplot[xlabel={chain length $L$},ylabel={},
  title={(c)~free fermions},xtick={4,6,8,10},enlarge x limits=0.16,ymin=4,ymax=17.5]
\addplot[dcFree,mark=square*,mark size=2.4pt,line width=1.2pt] coordinates {(4,5.129) (6,8.392) (8,11.771) (10,15.189)};
\node[anchor=south east,inner sep=1.5pt] at (axis cs:4,5.13){$35$};
\node[anchor=south east,inner sep=1.5pt] at (axis cs:6,8.39){$336$};
\node[anchor=south east,inner sep=1.5pt] at (axis cs:8,11.77){$3496$};
\node[anchor=north east,inner sep=1.5pt] at (axis cs:10,15.19){$37361$};
\end{groupplot}
\end{tikzpicture}
\caption{\label{fig:decoupling}\textbf{One-body magic $\mathcal F_1$ is decoupled from the sample-based
determinant support $|\mathcal S|$.} \textbf{(a)}~Nineteen FCI-verified molecules at equilibrium and
dissociation: in the static-correlation regime $\mathcal F_1$ and $|\mathcal S|$ rise together
(Pearson $r=0.80$ on $\log_2|\mathcal S|$), yet $\mathcal F_1$ approaches its maximum while
$|\mathcal S|$ stays a modest fraction of the active-space sector. \textbf{(b)}~Hubbard chains at fixed
filling (three representative sweeps; the full $30$-point set is Fig.~\ref{fig:master}): increasing on-site repulsion $U$ raises $\mathcal F_1$ but \emph{lowers} the site-basis
support $|\mathcal S|$ as the ground state localizes, inverting the sign of the correlation.
\textbf{(c)}~Free fermions ($U=0$): a single Slater determinant with $\mathcal F_1=0$, whose site-basis
support nonetheless grows as $35,336,3496,37361$ for chain length $L=4,6,8,10$. The
$\mathcal F_1$--$|\mathcal S|$ relationship is not sign stable---the signature of comparing an
orbital invariant to a basis variant.}
\end{figure}

The decoupling is two-sided, and its sign is not even stable (Fig.~\ref{fig:decoupling}). Across the
nineteen FCI-verified molecules of Table~\ref{tab:molecules}, $\mathcal F_1$ and $|\mathcal S|$ rise
\emph{together} in the static-correlation regime (Pearson $r=0.80$ on $\log_2|\mathcal S|$): bond
dissociation drives fractional occupations and multireference weight in step. But a free-fermion
ground state---a single Slater determinant, $\mathcal F_1=0$---has $|\mathcal S|$ that grows
exponentially in the site basis ($|\mathcal S|=35,\,336,\,3496,\,37361$ for chain length
$L=4,6,8,10$), and along a Hubbard chain at fixed filling, increasing on-site repulsion \emph{raises}
$\mathcal F_1$ while \emph{lowering} $|\mathcal S|$ as the ground state localizes onto few
configurations. The relationship between an orbital invariant and a basis variant cannot be otherwise.

\paragraph{An obstruction, not a weakness.} The failure is structural, not special to $\mathcal F_1$.
For the \emph{raw} determinant support (the unthresholded Slater rank), \emph{no} continuous functional
of a fixed set of reduced density matrices can lower-bound it (App.~\ref{app:theorem}): rank is a
coefficient-insensitive count, so $\ket{\Psi_\epsilon}=\ket{D_0}+\epsilon\sum_{k=1}^{K}\ket{D_k}$ has all
reduced density matrices converging to those of a single determinant---every continuous magic or
correlation functional $\to0$---while the rank stays $\Theta(K)$. This continuity route does \emph{not}
by itself carry over to the operative, coefficient-\emph{sensitive} $|\mathcal S|_\varepsilon$, which
collapses to $1$ as $\epsilon\to0$ in step with the functionals; the threshold-robust obstruction for
$|\mathcal S|_\varepsilon$ is supplied instead by the symmetry argument above---no
orbital-rotation-invariant functional can bound it---and by two \emph{equal-weight} witnesses, whose
amplitudes do not shrink. The two-determinant cat $(\ket{1\cdots10\cdots0}+\ket{0\cdots01\cdots1})/\sqrt2$
has $\gamma=\tfrac12\mathbb 1$, hence $\mathcal F_1=2n_{\rm o}$ (maximal) yet Slater rank $2$; and a paired (BCS)
Gaussian state has Gaussian rank $1$ yet fractional occupations. The only computable lower bounds on $|\mathcal S|$ are themselves
basis dependent---the orbital-Schmidt rank across a bipartition, i.e.\ the tensor-network bond
dimension~\cite{schollwock2011}, which we prove is a rigorous lower bound, $\chi\le|\mathcal S|$
(App.~\ref{app:theorem}, Prop.~\ref{prop:chibound}), and, consistently, the basis-dependent non-Gaussianity measures
(such as the natural-orbital participation entropy) that \emph{do} upper-bound the simulation
cost~\cite{tarabunga2026}---while a Gaussian-invariant magic measure is blind to them. We make the
obstruction constructive (App.~\ref{app:witness}, Thm.~\ref{thm:witness}): an antisymmetrized product of
$K$ strongly orthogonal geminals carries one-body magic $\mathcal F_1=2N_{\mathrm u}=4K$, higher-order
magic $\mathcal F_2=4K$, and pairing-basis determinant support $|\mathcal S|=2^{K}$ all growing without
bound, yet in that basis is a bond-dimension-$2$ tensor network---$O(1)$ bond dimension, $O(K)$
contraction and Born-sampling cost---so magic of \emph{every} order is unbounded at fixed $O(1)$ bond dimension. Because $\mathcal F_k$ is orbital-rotation invariant while the sampling
cost is not, no magic monotone is a sufficient certificate of hardness. Even the higher-order
antiflatness $\mathcal F_{k\ge2}$ fails: the witness has $\mathcal F_2=4K$ unbounded at $O(1)$ bond
dimension. What is hard is \emph{generic, high-rank} non-Gaussianity in the fixed operative basis---the
connected higher-body cumulants, which no orbital-invariant $\mathcal F_k$ can capture---and not the
paired stratum, whose amplitudes, overlaps, and number correlators are classically \emph{estimable} to
additive error under free-fermionic dynamics~\cite{oh2026}.
This bond dimension---which directly sets the classical cost of a matrix-product-state
simulation~\cite{schollwock2011}---is a rigorous lower bound on $|\mathcal S|$ (Prop.~\ref{prop:chibound})
that, unlike the orbital-invariant $\mathcal F_1$, empirically co-varies with it: across both the
nineteen-molecule suite and Hubbard chains the minimal bond dimension $\chi$ tracks $|\mathcal S|$
(Spearman $\rho=0.90$ over the $n=38$ molecular equilibrium-and-dissociation points and $\rho=0.57$ over
the $n=30$ Hubbard points; both $p<0.01$), whereas one-body magic tracks it only in the
static-correlation regime of dissociating molecules ($\rho=0.86$, $n=38$) and not under the Hubbard
interaction sweep ($\rho=-0.11$, $n=30$, $p\approx0.56$, not significant, where increasing $U$ raises
$\mathcal F_1$ but lowers $|\mathcal S|$); the absolute $|\mathcal S|$, sector dimension $D$, and $\chi$
underlying these coefficients are tabulated per species in App.~\ref{app:repro}. Many-body
(bond-dimension) entanglement lower-bounds and, across the studied regimes, empirically tracks the
sampling cost; one-body entanglement---equivalently the one-body magic $\mathcal F_1$---does so only
where it coincides with static correlation. That single-particle entanglement can be maximal while
bipartite (bond) entanglement stays $O(1)$ is the fermionic counterpart of the magic--entanglement
decoupling seen in qubit matrix-product states~\cite{frautarabunga2024}.
Figure~\ref{fig:master} consolidates the picture over all $74$ exact ground states of this
work---nineteen molecules (equilibrium and dissociation), the Hubbard interaction sweep, and the
chain/ladder pair. The determinant support tracks the bond dimension in \emph{every} family (pooled
Spearman $\rho=0.72$, and positive within each), with the geminal witness ($\chi=2$, $|\mathcal S|=2^{K}$)
the deliberately loose lower bound; one-body magic does not, and fails in two visibly distinct ways---the
Hubbard branch runs the \emph{wrong} way ($U\!\uparrow$ raises $\mathcal F_1$ while lowering
$|\mathcal S|$) and the chain/ladder points sit at \emph{identical} $\mathcal F_1$ with $|\mathcal S|$
differing by up to $\sim\!2.2\times$. The cost is a property of the sampler's basis, read by $\chi$; the orbital
invariant $\mathcal F_1$ is blind to it.

\begin{figure}[htbp]\centering% G7 master resource figure -- 74 exact ground states. make_resource_master_fig.py (resource_master.json).
\definecolor{rmMol}{HTML}{0072B2}\definecolor{rmHub}{HTML}{E69F00}\definecolor{rmChn}{HTML}{009E73}\definecolor{rmLad}{HTML}{D55E00}
\definecolor{rmInk}{HTML}{1B1B1B}
\begin{tikzpicture}
\begin{groupplot}[group style={group size=2 by 1, horizontal sep=1.9cm},
  width=8.0cm, height=6.2cm, paperaxis,
  title style={at={(0,1)},anchor=south west,font=\normalsize\bfseries,yshift=2pt},
  label style={font=\normalsize}, tick label style={font=\footnotesize},
  legend style={font=\scriptsize,draw=black!25,fill=white,fill opacity=0.9,text opacity=1},legend cell align=left]
% ---- (a) |S| vs chi : the cost governor ----
\nextgroupplot[title={(a)~cost tracks bond dimension $\chi$}, xmode=log,ymode=log,
  xlabel={minimal bond dimension $\chi$}, ylabel={determinant support $|\mathcal S|$},
  xmin=1.4,xmax=60,ymin=1,ymax=60000, ymajorgrids,xmajorgrids, grid style={draw=black!8},
  legend pos=north west]
\addplot[only marks,mark=*,mark size=1.7pt,rmMol,mark options={fill=rmMol,draw=white,line width=0.3pt}] coordinates {(1.000,1) (2.000,2) (1.000,1) (4.000,2) (1.000,1) (4.000,4) (2.000,1) (2.000,2) (1.000,1) (1.000,1) (2.000,5) (2.000,3) (14.000,8) (30.000,46) (7.000,7) (32.000,80) (11.000,13) (2.000,6) (1.000,1) (8.000,6) (1.000,1) (13.000,13) (4.000,1) (42.000,68) (4.000,2) (8.000,11) (8.000,9) (36.000,91) (1.000,1) (2.000,2) (1.000,1) (4.000,5) (4.000,11) (2.000,4) (17.000,27) (7.000,9) (11.000,5) (7.000,25)};
\addplot[only marks,mark=square*,mark size=1.7pt,rmHub,mark options={fill=rmHub,draw=white,line width=0.3pt}] coordinates {(12.000,330) (10.000,312) (6.000,260) (5.000,147) (4.000,68) (11.000,200) (10.000,196) (9.000,188) (9.000,168) (9.000,143) (11.000,3395) (11.000,3120) (8.000,2176) (8.000,880) (7.000,327) (16.000,2342) (16.000,2272) (15.000,2041) (14.000,1600) (13.000,1091) (17.000,35678) (14.000,30975) (10.000,17732) (6.000,5084) (6.000,1476) (13.000,26503) (13.000,25235) (13.000,21203) (11.000,13832) (9.000,7527)};
\addplot[only marks,mark=triangle*,mark size=1.7pt,rmChn,mark options={fill=rmChn,draw=white,line width=0.3pt}] coordinates {(5.000,147) (8.000,880) (6.000,5084)};
\addplot[only marks,mark=diamond*,mark size=1.7pt,rmLad,mark options={fill=rmLad,draw=white,line width=0.3pt}] coordinates {(13.000,224) (38.000,1479) (38.000,11375)};
\addlegendentry{molecules ($\times38$)}\addlegendentry{Hubbard sweep ($\times30$)}\addlegendentry{chain}\addlegendentry{2-leg ladder}
\addplot[rmInk,densely dashed,line width=0.9pt,mark=none] coordinates {(2,2) (2,4) (2,8) (2,16) (2,32)};
\node[rmInk,font=\scriptsize,anchor=west] at (axis cs:2.1,40) {witness: $\chi{=}2,\,|\mathcal S|{=}2^{K}$};
\node[rmInk,font=\scriptsize,anchor=south east] at (axis cs:55,1.5) {Spearman $\rho=0.72$};
% ---- (b) |S| vs F1 : magic does NOT predict cost ----
\nextgroupplot[title={(b)~one-body magic $\mathcal F_1$ does not}, ymode=log,
  xlabel={one-body magic $\mathcal F_1$}, ylabel={determinant support $|\mathcal S|$},
  xmin=-0.6,xmax=23,ymin=1,ymax=60000, ymajorgrids, grid style={draw=black!8},
  legend pos=north west]
\addplot[only marks,mark=*,mark size=1.7pt,rmMol,mark options={fill=rmMol,draw=white,line width=0.3pt}] coordinates {(0.038,1) (2.511,2) (0.002,1) (0.075,2) (0.001,1) (3.460,4) (0.068,1) (3.985,2) (0.000,1) (0.001,1) (0.518,5) (0.573,3) (0.688,8) (8.192,46) (0.847,7) (11.377,80) (4.743,13) (8.001,6) (0.039,1) (3.754,6) (0.007,1) (6.936,13) (0.080,1) (10.289,68) (0.252,2) (5.113,11) (0.521,9) (7.735,91) (0.035,1) (0.386,2) (0.013,1) (3.703,5) (1.673,11) (6.000,4) (1.307,27) (5.938,9) (0.790,5) (7.449,25)};
\addplot[only marks,mark=square*,mark size=1.7pt,rmHub,mark options={fill=rmHub,draw=white,line width=0.3pt}] coordinates {(0.626,330) (2.232,312) (5.913,260) (9.385,147) (11.300,68) (0.340,200) (1.105,196) (2.810,188) (4.995,168) (6.581,143) (0.802,3395) (2.918,3120) (7.881,2176) (12.816,880) (15.070,327) (0.486,2342) (1.635,2272) (4.368,2041) (7.960,1600) (10.439,1091) (0.972,35678) (3.600,30975) (9.857,17732) (16.031,5084) (18.840,1476) (0.633,26503) (2.180,25235) (5.997,21203) (11.006,13832) (14.302,7527)};
\addplot[only marks,mark=triangle*,mark size=1.7pt,rmChn,mark options={fill=rmChn,draw=white,line width=0.3pt}] coordinates {(9.385,147) (12.816,880) (16.031,5084)};
\addplot[only marks,mark=diamond*,mark size=1.7pt,rmLad,mark options={fill=rmLad,draw=white,line width=0.3pt}] coordinates {(9.695,224) (12.911,1479) (16.089,11375)};
\node[rmInk,font=\scriptsize,anchor=north east] at (axis cs:22.5,45000) {$\rho=0.60$};
\node[rmHub,font=\scriptsize,anchor=south] at (axis cs:12.5,1.7) {Hubbard: $U\!\uparrow\Rightarrow\mathcal F_1\!\uparrow,|\mathcal S|\!\downarrow$};
% (ladder/chain callout removed: the caption states the same-F1/|S|-jump fact and the legend marks the points)
\end{groupplot}
\end{tikzpicture}
\caption{\textbf{The resource thesis over $74$ exact ground states.} Nineteen molecules at equilibrium and
dissociation, a Hubbard interaction sweep, and a matched chain/two-leg-ladder pair (all exact
diagonalization). \textbf{(a)}~The determinant support $|\mathcal S|$ tracks the minimal bond dimension
$\chi$ in every family (pooled Spearman $\rho=0.72$; the geminal witness $\chi{=}2,\,|\mathcal S|{=}2^{K}$
is the deliberately loose lower bound). \textbf{(b)}~One-body magic $\mathcal F_1$ does not (pooled
Spearman $\rho=0.60$ for $\mathcal F_1$ vs $|\mathcal S|$ over the same $74$ states): the Hubbard
branch runs the wrong way ($U\!\uparrow\Rightarrow\mathcal F_1\!\uparrow,|\mathcal S|\!\downarrow$) and the
chain/ladder points share the same $\mathcal F_1$ at $|\mathcal S|$ differing by up to $\sim\!2.2\times$. Cost is lower-bounded and tracked by
the basis-dependent $\chi$, not by the orbital-invariant $\mathcal F_1$.}\label{fig:master}\end{figure}

\paragraph{What one-body magic does, and where the hardness is.} None of this makes $\mathcal F_1$
uninformative. It is a faithful, efficiently measurable detector of \emph{static} (multireference)
correlation---generalizing to fermionic non-Gaussianity, and to nineteen FCI-verified molecules, the
qubit non-stabilizerness peak of molecular bonding~\cite{sarkis2025} and the reported proportionality
between qubit non-stabilizerness and Hartree--Fock weight (correlation) for weak-to-moderate
correlation---itself reported to break down beyond the Coulson--Fischer point~\cite{seibert2026}
(Table~\ref{tab:molecules}). What it is
\emph{not} is a predictor of the classical \emph{cost} of sampling, which---as the $\chi$--$|\mathcal S|$
tracking above shows---is lower-bounded and tracked by the basis-dependent entanglement, and by the fixed basis a sampler
cannot rotate, not by the orbital invariant $\mathcal F_1$. Cost and quantum hardness are thus distinct
axes: the classical cost of the simulation is governed by entanglement, the bond dimension $\chi$,
whereas a genuine quantum advantage---a regime beyond \emph{every} classical method---would require
\emph{generic, high-rank} non-Gaussianity in the operative basis, the connected two- and higher-body
cumulants beyond the covariance~\cite{coffman2025}, and specifically not the paired stratum, whose
free-fermionic estimation primitives remain classically tractable~\cite{oh2026}; the orbital-invariant
one-body magic $\mathcal F_1$ certifies neither. Fermionic magic thus supplies the missing link between the resource theory of non-Gaussianity
and sample-based electronic-structure computation---but the link is a \emph{demarcation}: what is so
easily measured (invariant one-body magic) diagnoses correlation, while the cost lives in the
entanglement and any genuine advantage one stratum above, in the generic higher-order sector.

\section{Results}

Having established analytically the resource that tracks the sampling cost (Sec.~\ref{sec:resource}), we
now demonstrate the method and that resource numerically. We validate the reconstruction against exact
diagonalization, map the resource axis across lattices
and a nineteen-molecule chemical suite, benchmark the time-evolution selector against standard classical
selectors (conceding the strongest, heat-bath CI), and scale the demonstration toward the beyond-classical regime.

\subsection{Validation}
On Hubbard chains the bitstring-sampled subspace reproduces the exact multi-peak single-particle
spectral function (Fig.~\ref{fig:method}). The relative-$L_1$ error falls toward zero
as the time-evolution sampling discovers the spectral configurations, reaching $0.017$ for the
complete-subspace reconstruction (panel a; the $8$-seed finite-shot curve of panel b flattens toward
$\approx0.019$), and the addition-branch sum rule $\int A^{+}\,d\omega=1-\langle \hat n_p\rangle$ is satisfied
exactly---confirming that the classically-assembled Green's function of Eq.~\eqref{eq:green} inherits
the accuracy of the sampled subspace, not of any device-side estimation. The same primitive extends to
the momentum-dependent $A(k,\omega)$---whose dispersing lower and upper Hubbard bands and mid-gap Fermi
level are shown \emph{exactly} in Fig.~\ref{fig:lattice}---and to the dynamical structure factor
$S(q,\omega)$ from a number-conserving density seed (Fig.~\ref{fig:sqw}): a single sampling primitive
covers the photoemission and density-response channels that classically require separate dynamical-DMRG
computations~\cite{jeckelmann2002,nocera2016}. From a spin seed the same primitive furnishes the spin
structure factor $S^{zz}(q,\omega)$ (Fig.~\ref{fig:spin}), and the two collective responses lay bare
spin--charge separation in the two-particle channel: the charge continuum is \emph{gapped} by the Mott
gap $\Delta$, whereas the spin continuum is \emph{gapless} (in the thermodynamic limit), its bandwidth set by the far smaller exchange $J=4t^2/U$.
At subspace saturation the bitstring-sampled reconstruction
reproduces the exact result at every momentum to relative-$L_1$ below $\sim\!0.5\%$
(Fig.~\ref{fig:akwsampled}: the sampled $A(k,\omega)$ overlays the exact target across the Brillouin
zone, mean per-momentum error $2.2\times10^{-3}$, at a subspace of $85\%$ of the $(N{\pm}1)$ sector);
crucially, however---and unlike the ground-state
energy---saturating a \emph{full} dynamical spectral function requires a substantial---though
\emph{shrinking}---fraction of the $(N{+}1)$ sector ($0.92$ at $L=4$ down to $0.08$ at $L=14$,
Fig.~\ref{fig:scaling}), because the spectral weight
is distributed across the excited-state spectrum rather than concentrated in a few dominant
configurations. This subspace-fraction cost, not any device-side accuracy, is the resource that
Sec.~\ref{sec:resource} analyses. The continued-fraction reconstruction itself carries the standard
Lanczos caveats---finite sampling of the subspace and loss of orthogonality at high recursion order can
seed spurious low-weight poles---mitigated here by seed averaging and the finite broadening $\eta$.

\begin{figure}[htbp]\centering% native \input fragment (fig:method) — fonts = document body (11pt) exactly
\definecolor{inkPrim}{HTML}{1B1B1B}\definecolor{inkMute}{HTML}{6E6E6E}\definecolor{clExact}{HTML}{1F3A5F}
\definecolor{coral}{HTML}{F0653F}\definecolor{gridClr}{HTML}{ECEAE5}\definecolor{thm}{HTML}{2E7D5B}
\begin{tikzpicture}[font=\rmfamily]
\begin{groupplot}[
  group style={group size=2 by 1, horizontal sep=1.9cm},
  width=7.7cm, height=5.9cm,
  axis line style={inkPrim, line width=0.6pt}, tick style={inkPrim, line width=0.6pt},
  title style={yshift=1pt},
]
% ---- (a) A(w): exact vs bitstring-sampled ----
\nextgroupplot[
  xlabel={frequency \; $\omega/t$}, ylabel={$A(\omega)$},
  title={(a)\quad bitstring reconstruction of $A(\omega)$},
  xmin=3, xmax=14, ymin=0, ymax=0.50, ymajorgrids, y grid style={gridClr},
  legend style={draw=none, fill=white, fill opacity=0.75, text opacity=1, at={(0.985,0.985)}, anchor=north east}, legend cell align=left,
]
  \addplot[draw=none, fill=clExact, fill opacity=0.16] table[x=w,y=Aex]{aw_method.dat} \closedcycle;
  \addplot[clExact, line width=1.3pt] table[x=w,y=Aex]{aw_method.dat}; \addlegendentry{exact}
  \addplot[coral, line width=1.1pt, dash pattern=on 3pt off 2.2pt] table[x=w,y=Asm]{aw_method.dat}; \addlegendentry{bitstring-sampled}
  \node[anchor=north east, align=right, inkPrim] at (rel axis cs:0.985,0.56)
     {rel-$L_1=0.017$\\[-1pt] $\int\! A=0.500$};
% ---- (b) convergence: two regimes + geometric law (Thm 1) ----
\nextgroupplot[
  xlabel={Krylov order \; $K$ \; (time-evolution steps)}, ylabel={rel-$L_1$ error},
  title={(b)\quad geometric convergence, verified},
  xmin=-0.4, xmax=12.4, ymode=log, ymin=0.013, ymax=2.6, ymajorgrids, y grid style={gridClr},
  ytick={0.02,0.05,0.1,0.2,0.5}, yticklabels={$2\%$,$5\%$,$10\%$,$20\%$,$50\%$},
  legend style={draw=none, fill=white, fill opacity=0.85, text opacity=1, at={(0.985,0.985)}, anchor=north east}, legend cell align=left,
]
  \fill[inkMute, fill opacity=0.06] (axis cs:-0.4,0.013) rectangle (axis cs:4,2.6);
  \node[inkPrim, anchor=center, align=center] at (axis cs:1.75,0.19) {discovery\\ regime};
  \addplot[name path=lo, draw=none, forget plot] coordinates {(0,0.5495) (1,0.5169) (2,0.5125) (3,0.5072) (4,0.3953) (5,0.1346) (6,0.0822) (7,0.0481) (8,0.0297) (9,0.0227) (10,0.0191) (11,0.0172) (12,0.0132)};
  \addplot[name path=hi, draw=none, forget plot] coordinates {(0,0.6261) (1,0.6267) (2,0.5807) (3,0.5948) (4,0.4793) (5,0.1820) (6,0.1081) (7,0.0713) (8,0.0402) (9,0.0307) (10,0.0275) (11,0.0259) (12,0.0239)};
  \addplot[coral, fill opacity=0.16, forget plot] fill between[of=lo and hi];
  \addplot[thm, line width=1pt, dashed] coordinates {(4,0.3281) (5,0.1908) (6,0.1110) (7,0.0645) (8,0.0375) (9,0.0218) (10,0.0127) (11,0.0074) (12,0.0043)}; \addlegendentry{geometric (Thm 1)}
  \addplot[coral, line width=1.3pt, mark=*, mark size=2.2pt,
           mark options={fill=coral, draw=white, line width=0.4pt}] coordinates {(0,0.5878) (1,0.5718) (2,0.5466) (3,0.5510) (4,0.4373) (5,0.1583) (6,0.0951) (7,0.0597) (8,0.0350) (9,0.0267) (10,0.0233) (11,0.0216) (12,0.0185)}; \addlegendentry{sampled (8 seeds)}
  \node[inkPrim, anchor=south east] at (axis cs:12.2,0.033) {$\rho\approx1.7$};
\end{groupplot}
\end{tikzpicture}
\caption{\textbf{The method, validated---and its convergence matches the theorem.} \textbf{(a)}~The
bitstring-sampled single-particle spectral function reproduces the exact $A(\omega)$ (Hubbard $L=6$,
$U/t=8$; broadening $\eta=0.15\,t$) to relative-$L_1$ $0.017$---the exact (filled) and reconstructed
(dashed) curves are indistinguishable---and the reconstructed spectral weights obey the exact addition-branch sum
rule $\int A^{+}\,d\omega=0.500=1-\langle n_p\rangle$ to machine precision (the panel shows the addition branch $\omega>0$; the full $A$ integrates to $1$). \textbf{(b)}~Convergence of the relative-$L_1$ error with
the Krylov order $K$ (number of time-evolution steps sampled), seed-averaged over eight independent
runs (band $=$ standard deviation). Two regimes appear: a short \emph{discovery} phase in which the
time-evolution sampling first locates the spectral configurations, followed by a \emph{geometric
collapse} $\text{rel-}L_1\sim\rho^{-K}$ with $\rho\approx1.7$---precisely the geometric convergence
guaranteed by the moment-exactness theorem (Thm.~\ref{thm:b2}, App.~\ref{app:b2}). The sampled error
tracks the geometric law until it flattens toward the $8$-seed finite-shot value ($\approx0.019$ at $K=12$); the
complete-subspace reconstruction of panel (a) reaches $0.017$.}\label{fig:method}\end{figure}

\begin{figure}[htbp]\centering\includegraphics[width=\textwidth]{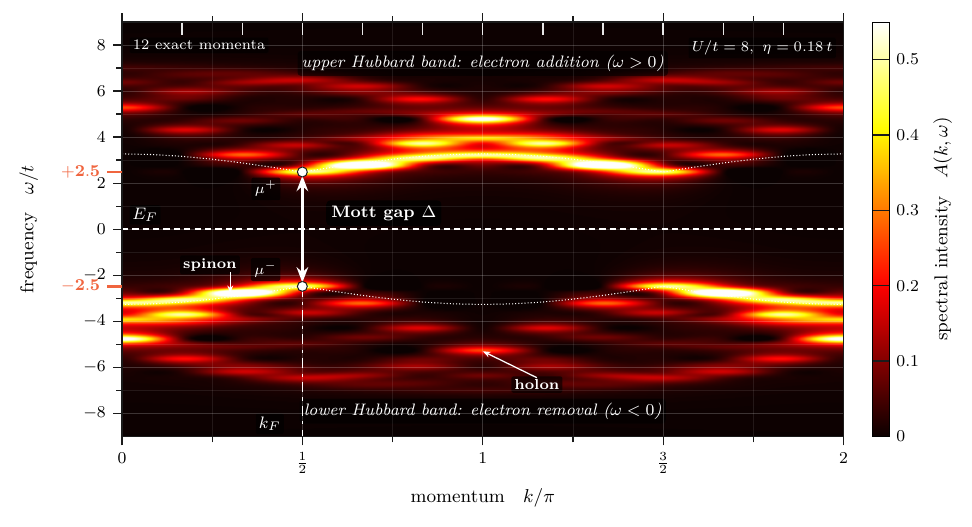}
\caption{\textbf{Momentum-resolved spectral function: the exact dynamical target.} The
single-particle spectral function $A(k,\omega)$ of the half-filled Hubbard chain ($U/t=8$, $L=12$,
Lorentzian broadening $\eta=0.18\,t$), computed exactly from the resolvent $\langle\phi|(z-\hat H)^{-1}|\phi\rangle$ via the
Haydock continued fraction~\cite{haydock1972} over the full $(N{\pm}1)$ sectors (equivalent to the Lehmann sum in exact arithmetic)---the frequency- and momentum-resolved
observable that sample-based diagonalization targets but its energy-only output cannot itself deliver.
The two dispersing Hubbard bands---electron removal (lower, $\omega<0$; photoemission) and electron
addition (upper, $\omega>0$; inverse photoemission)---are separated by the correlation-driven
\emph{Mott gap} $\Delta=\mu^{+}-\mu^{-}=4.97\,t$ between the highest occupied state ($\mu^{-}$, lower
edge) and the lowest unoccupied state ($\mu^{+}$, upper edge), obtained from the exact $(N{\pm}1)$
ground-state energies (ticks at $\pm2.484\,t$); the broadened spectral peaks of the plotted $A(k_F,\omega)$
sit slightly wider, at $\pm2.49\,t$ (a $4.99\,t$ splitting), the $\sim\!0.02\,t$ excess being the
$\eta=0.18\,t$ Lorentzian broadening; the thermodynamic Lieb--Wu value~\cite{liebwu1968} is $4.68\,t$, the
$L=12$ excess being finite-size. The gap is \emph{direct} at the
Fermi momentum $k_F=\pi/2$. Particle--hole symmetry places the Fermi level at mid-gap ($\omega=0$),
verified to machine precision, and each momentum obeys the analytic spectral sum rule
$\int A(k,\omega)\,d\omega=1$ (the plotted broadened grid recovers it to $\sim\!2\%$ within the $\pm9t$ window).
The lower Hubbard band is not a single quasiparticle band but a \emph{spin--charge-separated} continuum:
a narrow spinon branch (bandwidth $\pi J/2=0.79\,t$~\cite{descloizeaux1962}, $J=4t^2/U$; dashed, validated against the exact
peak dispersion to ${<}0.1\,t$) rides a broad holon (charge, ${\sim}4t$) continuum---the hallmark
fractionalization of the 1D Mott insulator~\cite{liebwu1968}.
The momentum axis is a periodic cubic-spline interpolation of the $12$ physical momenta (ticks, top).
Faithfully \emph{sampling} this full momentum-resolved observable---as opposed to the ground-state
energy---requires a large fraction of the $(N{+}1)$ sector (Fig.~\ref{fig:scaling}), the cost
quantified in Sec.~\ref{sec:resource}.}\label{fig:lattice}\end{figure}

\begin{figure}[htbp]\centering\input{figs/fig_akw_sampled_native.tex}
\caption{\textbf{The momentum-resolved spectral function reconstructed from bitstring-sampled subspaces,
overlaid on the exact target.} \textbf{(a)}~Bitstring-sampled $A(k,\omega)$ (dashed) versus the exact
Haydock result (solid, shaded) at three representative momenta ($k=0,\pi/2,\pi$; Hubbard chain, $L=8$,
$U/t=8$, $\eta=0.18\,t$; curves offset vertically by $k$), reconstructed by time-evolution
configuration sampling in the $(N{\pm}1)$ sectors and the classical Lehmann assembly of
Eq.~\eqref{eq:green}---not the exact target of Fig.~\ref{fig:lattice}, but the \emph{sampled} output.
\textbf{(b)}~Relative-$L_1$ error of the sampled reconstruction at every physical momentum, at a subspace
of $85\%$ of the $(N{\pm}1)$ sector: below $0.5\%$ throughout (mean $2.2\times10^{-3}$). The reconstruction
is faithful across the Brillouin zone, at the sector-fraction cost quantified in
Sec.~\ref{sec:resource}.}\label{fig:akwsampled}\end{figure}

\begin{figure}[htbp]\centering\includegraphics[width=0.92\textwidth]{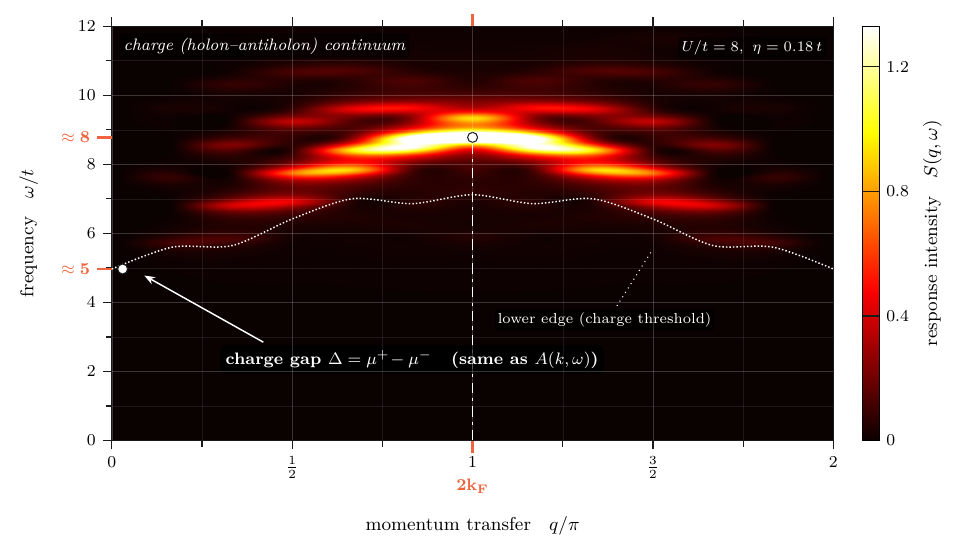}
\caption{\textbf{Dynamical structure factor: the exact density response.} $S(q,\omega)$ of the
half-filled Hubbard chain ($U/t=8$, $L=12$, Lorentzian broadening $\eta=0.18\,t$)---the density--density
response, a number-conserving (neutral) excitation, computed exactly via the Haydock continued fraction
in the $N$-particle sector. Colour is the \emph{response intensity} $S(q,\omega)$ (units $1/t$): bright
means the system responds strongly at that $(q,\omega)$, dark means it does not. The weight forms the
two-particle charge (holon--antiholon) continuum, whose lower edge (dotted, the charge threshold)
sets the gap---as $q\to0$ it approaches $\Delta$---and traces the lower-boundary dispersion of the continuum; its diffuse upper boundary is left unmarked. Two features are decisive. \emph{(i)}
The continuum is \emph{gapped}: as $q\to0$ its lower edge approaches the charge gap
$\Delta=\mu^{+}-\mu^{-}=4.97\,t$ (exact $(N{\pm}1)$ ground-state energies)---the \emph{same} Mott gap that
separates the Hubbard bands in the single-particle channel (Fig.~\ref{fig:lattice}): two independent
dynamical observables, one correlation gap. \emph{(ii)} The continuum peaks at $q=2k_F=\pi$ at
$\omega\approx U\approx8\,t$, the bare Hubbard scale of a doubly-occupied site. The reflection
$S(q)=S(2\pi{-}q)$ holds to $\sim\!10^{-4}$ (Lanczos precision), and the analytic f-sum rule
$\int\omega\,S(q,\omega)\,d\omega\propto(1-\cos q)$ is recovered by the plotted (finite-window) grid to
$\sim\!0.3\%$. This is the collective observable the same bitstring-sampled primitive targets; as for
$A(k,\omega)$, faithfully sampling the \emph{full} response carries the sector-fraction cost of
Sec.~\ref{sec:resource}, where classically it needs a separate dynamical-DMRG computation.}\label{fig:sqw}\end{figure}

\begin{figure}[htbp]\centering\includegraphics[width=0.92\textwidth]{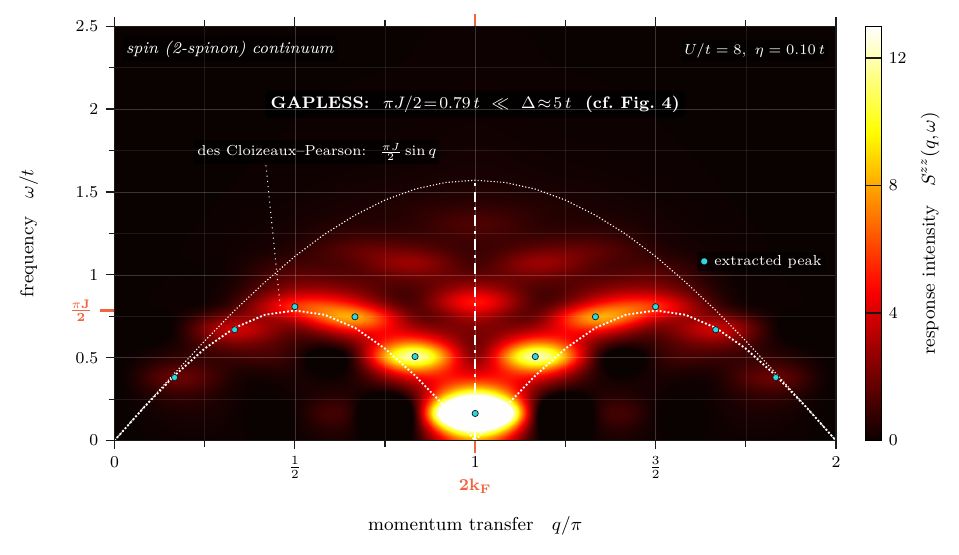}
\caption{\textbf{Spin structure factor: the gapless companion of the charge response.} The dynamical
spin structure factor $S^{zz}(q,\omega)$ of the same half-filled Hubbard chain ($U/t=8$, $L=12$; here
$\eta=0.10\,t$, finer because the spin scale is small), computed exactly by the Haydock continued fraction
from a spin seed $S^z_q\lvert\psi_0\rangle$ in the $N$-particle sector. \emph{Contrast Fig.~\ref{fig:sqw}}:
the charge response is \emph{gapped} by the Mott gap $\Delta\approx5\,t$, but the spin response is
\emph{gapless}---the two-spinon continuum touches $\omega=0$ at $q=0$ and $q=2k_F=\pi$. This is
spin--charge separation in the collective (two-particle) channel: charge and spin live in energetically
separate worlds, the spin set by the superexchange $J=4t^2/U=0.5\,t$, some $16\times$ smaller than $U$.
The spectral weight piles up on the des Cloizeaux--Pearson~\cite{descloizeaux1962} lower boundary $\tfrac{\pi J}{2}\lvert\sin q\rvert$
(dotted), bounded above by $\pi J\lvert\sin\tfrac q2\rvert$~\cite{mueller1981}; the numerically-extracted
peak dispersion (cyan markers) tracks this lower boundary to $\sim\!3\%$; the antiferromagnetic soft mode at $q=\pi$
carries the dominant weight. The reflection $S^{zz}(q)=S^{zz}(2\pi{-}q)$ holds to machine precision and
$S^{zz}(q{\to}0)\to0$ (total-$S^z$ conservation). The same bitstring-sampled primitive that yields
$A(k,\omega)$ and $S(q,\omega)$ targets this collective response as well.}\label{fig:spin}\end{figure}

\subsection{Fermionic magic marks the multireference regime}

Having validated the dynamical primitive, we now turn from the observable to its resource. The same
bitstring-sampled subspaces that reconstruct these spectra also let us evaluate, across the
nineteen-molecule suite, the one-body fermionic magic $\mathcal F_1=4\,\mathrm{tr}[\gamma(1-\gamma)]$
($=2N_{\mathrm u}$ for the closed-shell members) identified in Sec.~\ref{sec:resource}---and it
tracks precisely the \emph{multireference} character of the chemistry, not the sampling cost.

\begin{figure}[htbp]\centering\includegraphics[width=\textwidth]{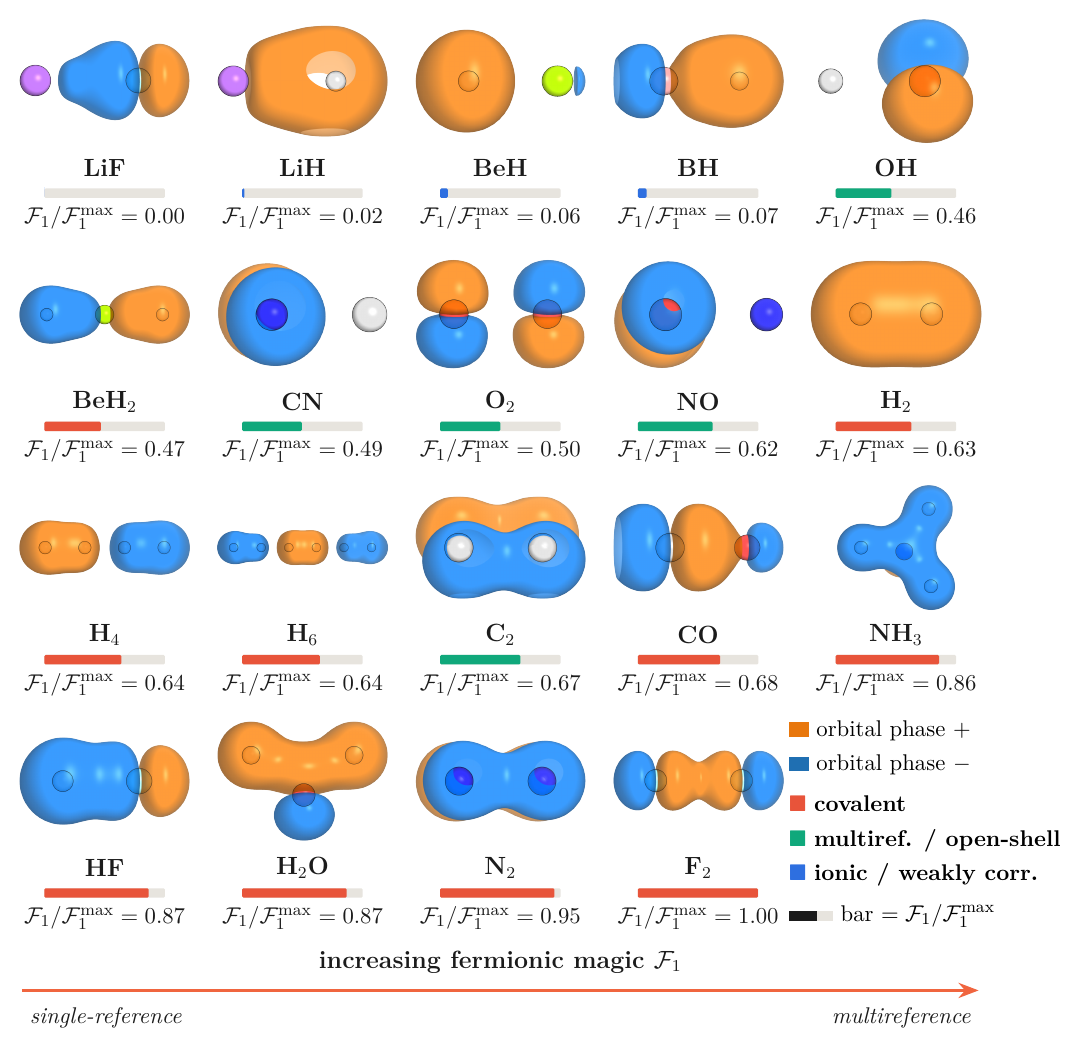}
\caption{\textbf{Nineteen molecules ordered by fermionic magic---the multireference axis.} The frontier
(HOMO) molecular orbital of each molecule at its dissociation geometry (PyMOL ray-traced two-phase
isosurface; orange/blue $=$ orbital phase $\pm$; CPK atoms), arranged left-to-right, top-to-bottom in
order of increasing FCI-verified fermionic magic (single-reference $\to$ strongly multireference). The
bar beneath each panel has length $\mathcal F_1/\mathcal F_1^{\max}$ and colour set by bonding regime; molecule names
are in neutral dark ink. The ordering makes the diagnostic visual: ionic and weakly-correlated bonds
(LiF, LiH, BeH, BH) sit at the near-Gaussian, single-reference end; covalent bond-breaking (HF,
$\mathrm{H_2O}$, $\mathrm{N_2}$, $\mathrm{F_2}$) at the strongly-multireference end; the inherently
multireference $\mathrm{C_2}$ and $\mathrm{O_2}$ in between. This one-body magic faithfully diagnoses
static-correlation character; it is decoupled from the sampling cost $|\mathcal S|$
(Sec.~\ref{sec:resource}, Fig.~\ref{fig:decoupling}). Orbitals are the mean-field frontier MO; the
magic ordering is the correlated, FCI-verified quantity.}\label{fig:gallery}\end{figure}

Across \textbf{nineteen molecules} spanning every bonding regime---covalent single, double, and triple
($\mathrm{H_2}$, $\mathrm{F_2}$, $\mathrm{C_2}$, $\mathrm{N_2}$, CO), polar and ionic (HF, LiF, LiH),
multi-centre ($\mathrm{H_2O}$, $\mathrm{NH_3}$, $\mathrm{BeH_2}$), metal hydrides (BH, BeH), open-shell
radicals (OH, CN, NO), the $\mathrm{O_2}$ triplet, and hydrogen chains ($\mathrm{H_4}$,
$\mathrm{H_6}$)---we compute the fermionic AntiFlatness in a valence bond-breaking active space,
\textbf{verifying every ground-state energy against exact active-space FCI to below $10^{-5}$~Ha}
(Figs.~\ref{fig:gallery},~\ref{fig:molsuite}, Table~\ref{tab:molecules}). On the same suite the sampling method reproduces
the exact (orbital-projected) spectral function $A(\omega)$ from bitstring-sampled subspaces to
relative-$L_1$ error below $10^{-3}$ at every molecule---at the dissociation geometry, using a sampled
subspace a fraction $\ge0.10$ of the $(N{+}1)$ sector, full only for the smallest ionic active spaces
where the sector itself is tiny (Table~\ref{tab:molecules})---so the
reconstruction is both exact and efficient across chemistry. The magic picture is sharp, and it is
\emph{not} a universal bond-length effect. Magic \emph{switches on} for covalent bond-breaking---rising from
near-Gaussian at equilibrium to a large fraction of its maximum on stretch ($\mathrm{F_2}$,
$\mathrm{N_2}$, $\mathrm{H_2O}$, HF, $\mathrm{NH_3}$, CO). It is \emph{already large at equilibrium} for
the inherently multireference $\mathrm{C_2}$ and $\mathrm{O_2}$, whose near-degenerate ground states
carry static correlation before any bond is stretched. And it \emph{stays near-Gaussian} for ionic and
weakly-correlated bonds (LiF, LiH, BH, BeH), whose dissociation remains essentially
single-configurational---the ionic--covalent avoided crossing of LiF lies well beyond the
stretch shown. Fermionic magic therefore tracks \emph{multireference, static-correlation} character,
not bond length \emph{per se}. This refines the qubit non-stabilizerness peak identified for molecular
bonding~\cite{sarkis2025} into a fermionic, Gaussian-referenced, FCI-verified statement: the fermionic
AntiFlatness~\cite{sierant2026,tarabunga2026}---efficiently computable from the Majorana covariance
matrix, and a genuine magic monotone for matchgate circuits~\cite{hebenstreit2019}---is a faithful
one-body diagnostic of static-correlation character, reading \emph{low} for the near-Gaussian LiF and
LiH and \emph{high} for stretched $\mathrm{N_2}$ or $\mathrm{F_2}$. As Sec.~\ref{sec:resource} makes
precise, this is a diagnostic of multireference character, not a predictor of the classical sampling
cost: the two are decoupled, and the cost of these strongly-correlated targets is set by the
higher-order structure the one-body magic does not resolve.

The $\mathrm{N_2}$ triple-bond dissociation ties the picture together (Fig.~\ref{fig:hero}): as the
bond stretches, the sharp quasiparticle feature of the addition spectrum $A(\omega)$ broadens into a
correlated multiplet and its integrated addition weight $\int A^{+}d\omega=1-\langle\hat n_p\rangle$
falls from $0.98$ to $0.52$ as the frontier orbital fills toward half-occupation on homolytic
dissociation. Along the same coordinate the fermionic magic rises as the \emph{exact} identity
$\mathcal F_1=2N_{\mathrm u}$ [Eq.~\eqref{eq:collapse}]---one-body magic \emph{is} the orbital-rotation-invariant
multireference count, the effective number of unpaired electrons $N_{\mathrm u}$ from the natural-orbital
occupations, which approaches its atomic limit of $6$---the two triplet $\mathrm{N}(^4S)$ atoms---as the
bond fully breaks ($N_{\mathrm u}\!=\!5.9$ at the largest stretch shown; $\mathcal F_1^{\rm diss}=11.38$, i.e.\
$N_{\mathrm u}=5.69$, at the finite dissociation geometry of Table~\ref{tab:molecules}). The
two curves in Fig.~\ref{fig:hero}(b)---the frontier-orbital addition weight and the static-correlation
magic---are distinct one-body diagnostics that cross as the bond breaks: the addition spectrum
redistributes from a single quasiparticle into a multiplet precisely as multireference character switches on. Fermionic magic is thus a computable, basis-independent proxy
for the strong static correlation that develops as the triple bond breaks.

\begin{figure}[htbp]\centering\includegraphics[width=0.9\textwidth]{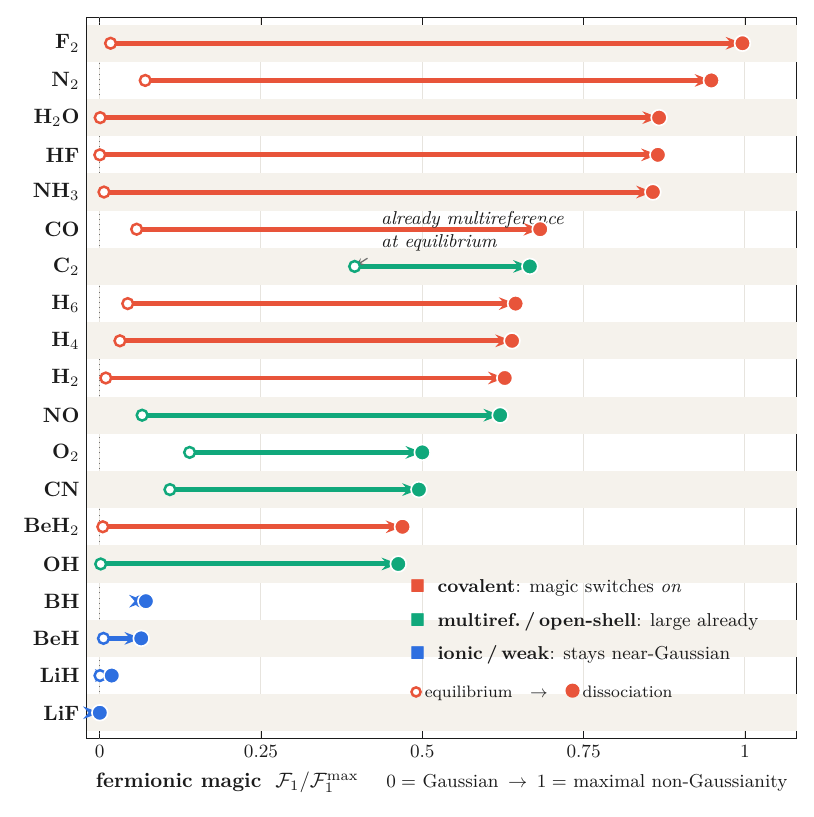}
\caption{\textbf{Fermionic magic marks the multireference regime across nineteen molecules.} Normalized
fermionic AntiFlatness $\mathcal F_1/\mathcal F_1^{\max}$ ($0$: Gaussian/mean-field; $1$: maximal non-Gaussianity) at
equilibrium (open circles) and at dissociation (filled), for all nineteen molecules, sorted by
dissociation magic and coloured by regime. Covalent bond-breaking switches magic on; the inherently
multireference $\mathrm{C_2}$ and $\mathrm{O_2}$ are already large at equilibrium; ionic and
weakly-correlated bonds (LiF, LiH, BH, BeH) stay near-Gaussian. Every ground-state energy is verified
against exact active-space FCI to $<10^{-5}$~Ha (Table~\ref{tab:molecules}); computed in valence
bond-breaking active spaces, cc-pVDZ.}\label{fig:molsuite}\end{figure}

\begin{figure}[htbp]\centering\includegraphics[width=\textwidth]{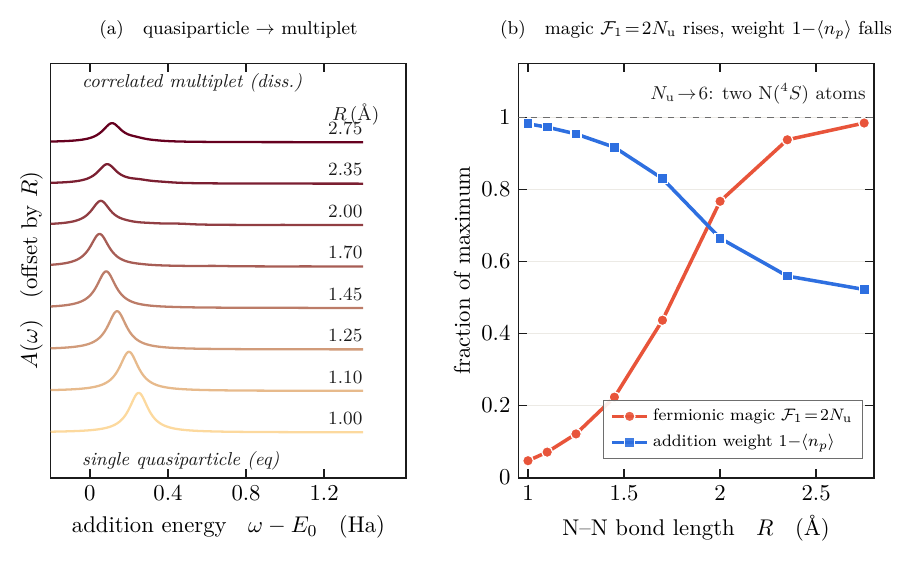}
\caption{\textbf{Magic and multireference character in $\mathrm{N_2}$ dissociation.} (a)~The exact
addition spectral function $A(\omega)$ (orbital-projected, CAS(6,6), cc-pVDZ) along the triple-bond
stretch, offset by $R$: the sharp coherent quasiparticle peak at equilibrium ($R=1.0$~\AA) broadens
into a correlated multiplet on dissociation as its spectral weight collapses. (b)~Two physically
independent quantities along the same coordinate. The fermionic magic obeys the \emph{exact} identity
$\mathcal F_1=2N_{\mathrm u}$ [Eq.~\eqref{eq:collapse}, verified to $<10^{-13}$], so a single curve carries both: it rises
with the orbital-rotation-invariant multireference count---the effective number of unpaired electrons
$N_{\mathrm u}=\sum_i n_i(2-n_i)$ from the natural-orbital occupations $n_i$, approaching its atomic limit of $6$
(the two triplet $\mathrm{N}(^4S)$ atoms) as the bond fully breaks. The integrated addition weight $\int A^{+}d\omega=1-\langle\hat n_p\rangle$
(from panel a) falls from $0.98$ to $0.52$ over the same stretch; the two cross as the bond
breaks. Every geometry is FCI-verified to $<10^{-5}$~Ha.}\label{fig:hero}\end{figure}

\subsection{The time-evolution selector versus classical Krylov and CIPSI selectors}
The relevant question is not whether the sampled reconstruction is accurate---it is---but whether the
time-evolution selector (the selection strategy a quantum device implements, evaluated here in classical
simulation) discovers the important configurations more efficiently than deterministic classical
selectors at fixed subspace size. Averaged over twelve independent sampling seeds, the time-evolution
selector holds no robust advantage over classical polynomial Krylov and a
configuration-interaction perturbation-theory (CIPSI~\cite{cipsi1973,garniron2019}) Green's-function selector at small subspaces but moves ahead of
both in the strong-coupling corner once the subspace exceeds
$\sim\!120$ determinants: at $U/t=12$, $|\mathcal S|=200$ it reaches relative-$L_1$
$0.042\pm0.004$ against $0.729$ for deterministic Krylov and $0.431$ for CIPSI---a
tenfold reduction (Fig.~\ref{fig:bench})---because deterministic Krylov plateaus, unable to resolve the
delocalized configurations the strong-$U$ response demands, whereas the time-evolution selector reaches low
error with fewer discovered configurations. The same picture holds against budget-limited real-time
adaptive-sampling configuration interaction~\cite{tdasci2024}. At small subspace size there is
\emph{no} robust advantage: the win is over these two deterministic selectors and is asymptotic in
subspace size and coupling, not a small-scale artifact. We make no claim against the strongest classical
selector: heat-bath configuration interaction remains competitive with sample-based selection at
accessible sizes~\cite{reinholdt2025,holmeshci2016,sharmashci2017}, consistent with our own companion
review~\cite{review}. What drives the separation is the determinant count itself: the number of
configurations over which the strong-$U$ response spreads---and, with it, the extended-matchgate
circuit extent~\cite{extraferm}---grows with the evolution time the processor runs, while the quantum
sampling stays polynomial in the target subspace size $|\mathcal S|$. The fermionic magic grows alongside but, as Sec.~\ref{sec:resource}
establishes, does not itself set this cost.

\begin{figure}[htbp]\centering\includegraphics[width=0.97\textwidth]{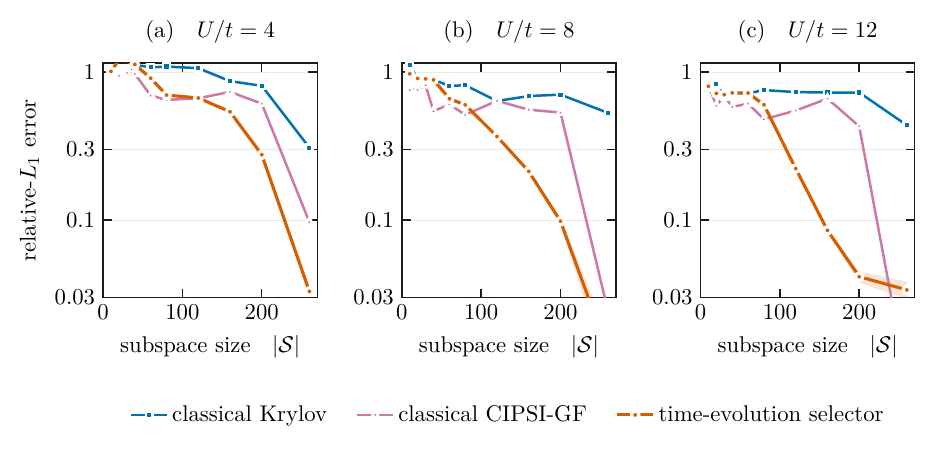}
\caption{\textbf{Time-evolution selector vs.\ classical Krylov and CIPSI selectors}, seed-averaged
(bands = seed spread; all curves evaluated in classical simulation). At strong coupling and
$|\mathcal S|\gtrsim120$ the time-evolution selector reaches low error with fewer discovered
configurations, while classical Krylov plateaus and the CIPSI perturbative Green's-function selector
trails; at small $|\mathcal S|$ there is no robust advantage.}\label{fig:bench}\end{figure}

\subsection{Scaling of the sampling cost with system size}
Across $L=4$--$14$ ($8$ to $28$ qubits---the three largest, $20$--$28$ qubits, reached by matrix-free
Lanczos/Haydock beyond the dense-diagonalization frontier and validated against it at $L\le8$) two
trends coexist (Fig.~\ref{fig:scaling}): the fermionic magic of the correlated ground state grows with
system size ($\mathcal F_1=6.6\to9.4\to12.8\to16.0\to19.3\to22.5$)---the static correlation strengthens---while
the $(N{+}1)$-sector subspace \emph{fraction} needed to reconstruct $A(\omega)$ to relative-$L_1<0.05$
\emph{falls} steeply ($0.92\to0.82\to0.56\to0.36\to0.17\to0.08$, i.e.\ from $92\%$ to $8\%$), the favourable
face of the polynomial-shot bound of Theorem~\ref{thm:b2} (App.~\ref{app:b2}) and consistent with the
convergence behaviour established for sample-based Krylov
diagonalization~\cite{skqd2025,yoshioka2025}. The absolute subspace $|\mathcal S|$ nonetheless grows---
indeed, for these lattice models it grows \emph{exponentially} with system size, an intrinsic
consequence of the computational-basis delocalization of the ground state~\cite{gaberle2026}. Quantitatively,
a least-squares fit over $L=4$--$14$ gives $|\mathcal S|\sim e^{1.05\,L}$ ($R^2=0.998$)---exponential, but
\emph{slower} than the full $(N{+}1)$ sector $D\sim e^{1.30\,L}$, so the required fraction contracts as
$\sim e^{-0.25\,L}$ (halving every $\sim2.8$ sites, $R^2=0.94$). Figure~\ref{fig:scaling}
is therefore not a claim of small \emph{absolute} cost but the complementary, quantitative statement that the
\emph{fraction} of the sector required for a fixed spectral accuracy shrinks exponentially. And what tracks that
absolute size is not the magic but the entanglement, as a direct chain-versus-ladder test makes concrete
(Fig.~\ref{fig:ladder}). At \emph{matched} one-body magic---a two-leg Hubbard ladder~\cite{nogaki2025} and
a chain of the same site count have near-identical $\mathcal F_1$ ($12.82$ vs $12.91$ at $16$ qubits, half
filling, $U/t=8$), since $\mathcal F_1$ is a geometry-blind one-body invariant---the ladder's
perpendicular cut carries an area-law boundary of two legs rather than one, so its minimal bond dimension
is far larger ($\chi=8\to38$ at $16$ qubits; the ratio grows with size, $\times2.6\to\times4.8\to\times6.3$
from $12$ to $20$ qubits) and its response-targeted $(N{\pm}1)$ fraction correspondingly higher
($0.43\to0.63$). The cost is lower-bounded and tracked by the bond dimension $\chi$, not by $\mathcal F_1$---the
decoupling of Sec.~\ref{sec:resource}, now exhibited as a controlled contrast in which the classical cost
tracks $\chi$ and not $\mathcal F_1$. We are explicit about scope: a two-leg ladder with
$\chi\!\approx\!38$ is quasi-one-dimensional and remains near-exact for (dynamical) density-matrix
renormalization group~\cite{paeckel2019,jeckelmann2002}---it is \emph{not} itself beyond the tensor-network
frontier, which lies in genuinely two-dimensional (wide-cylinder) geometries and the all-to-all orbital
graphs of quantum chemistry. What the ladder isolates is the \emph{resource} statement---that entanglement,
not one-body magic, lower-bounds and tracks the determinant support. Because $\chi$ is precisely the bond dimension a
(dynamical) density-matrix renormalization-group calculation must carry~\cite{paeckel2019,jeckelmann2002},
panel~(a) is a direct classical-cost comparison: the $\times4.8$ chain-to-ladder growth at fixed
$\mathcal F_1$ is the growth of the tensor-network spectral cost itself, and by Prop.~\ref{prop:chibound}
it is a rigorous lower bound on the sampled support $|\mathcal S|$---a resource contrast, with no
beyond-classical claim asserted.

\begin{figure}[htbp]\centering% self-contained native pgfplots fragment (fig:scaling) -- DATA-DRIVEN from scaling_data.json
% (real Lanczos/Haydock scaling data, L=4,6,8,10,12,14; U=8).
% Panel (c) added 2026-08-17: absolute support |S|=frac*Np1_sector and sector dim D=Np1_sector vs 2L
% (Np1_sector=[24,300,3920,52920,731808,10306296]; both grow exponentially, D faster than |S|).
\definecolor{scMag}{HTML}{D55E00}\definecolor{scFrac}{HTML}{0072B2}\definecolor{scInk}{HTML}{5B5B5B}
\definecolor{inkPrim}{HTML}{1B1B1B}\definecolor{scDim}{HTML}{8A8A8A}
\begin{tikzpicture}
\begin{groupplot}[group style={group size=3 by 1, horizontal sep=0.95cm},
  width=5.0cm, height=5.5cm, paperaxis,
  xmin=6.5, xmax=29.5, xtick={8,16,24}, xtick align=outside,
  title style={at={(0,1)},anchor=south west,font=\small\bfseries,yshift=1.5pt},
  label style={font=\small}, tick label style={font=\footnotesize},
  every node near coord/.append style={font=\scriptsize,inkPrim,anchor=south,yshift=1pt}]
% ---------- (a) fermionic magic grows ----------
\nextgroupplot[title={(a)~magic grows},
  xlabel={qubits $\;2L$}, ylabel={fermionic magic $\mathcal{F}_1$}, ymin=0, ymax=30.5,
  ytick={0,10,20,30}, ymajorgrids, grid style={draw=black!10}]
\addplot[draw=none,fill=scMag,fill opacity=0.10] coordinates {(8,6.6074) (12,9.3851) (16,12.8138) (20,16.0183) (24,19.2721) (28,22.5050)} \closedcycle;
\addplot[scMag,line width=1.3pt,mark=*,mark size=2.5pt,mark options={fill=scMag,draw=white,line width=0.7pt},
  nodes near coords,point meta=explicit symbolic]
  coordinates {(8,6.6074)[6.6] (12,9.3851)[] (16,12.8138)[] (20,16.0183)[] (24,19.2721)[] (28,22.5050)[22.5]};
% ---------- (b) subspace fraction falls ----------
\nextgroupplot[title={(b)~fraction falls},
  xlabel={qubits $\;2L$}, ylabel={$|\mathcal{S}|/\dim$}, ymin=0, ymax=1.08,
  ytick={0,0.5,1}, ymajorgrids, grid style={draw=black!10}]
\fill[scFrac,fill opacity=0.05] (axis cs:6.5,0) rectangle (axis cs:29.5,0.5);
\node[inkPrim,font=\footnotesize\itshape,anchor=west] at (axis cs:7.2,0.19) {beyond-classical};
\addplot[draw=none,fill=scFrac,fill opacity=0.10] coordinates {(8,0.9167) (12,0.8200) (16,0.5561) (20,0.3625) (24,0.1700) (28,0.0800)} \closedcycle;
\addplot[scFrac,line width=1.3pt,mark=square*,mark size=2.4pt,mark options={fill=scFrac,draw=white,line width=0.7pt},
  nodes near coords,point meta=explicit symbolic]
  coordinates {(8,0.9167)[0.92] (12,0.8200)[] (16,0.5561)[] (20,0.3625)[] (24,0.1700)[] (28,0.0800)[0.08]};
% ---------- (c) absolute size grows exponentially ----------
\nextgroupplot[title={(c)~absolute size grows},
  xlabel={qubits $\;2L$}, ylabel={count}, ymode=log, ymin=8, ymax=8e7,
  ytick={1e1,1e3,1e5,1e7}, ymajorgrids, grid style={draw=black!10},
  legend style={font=\scriptsize,at={(0.03,0.98)},anchor=north west,draw=none,fill=white,fill opacity=0.55,text opacity=1,row sep=0.4pt,inner sep=1.5pt},legend cell align=left]
\addplot[scDim,line width=1.3pt,mark=triangle*,mark size=2.6pt,mark options={fill=scDim,draw=white,line width=0.6pt}]
  coordinates {(8,24) (12,300) (16,3920) (20,52920) (24,731808) (28,10306296)};
\addlegendentry{sector $\dim$ $\sim e^{1.30L}$}
\addplot[scFrac,line width=1.3pt,mark=square*,mark size=2.3pt,mark options={fill=scFrac,draw=white,line width=0.6pt}]
  coordinates {(8,22) (12,246) (16,2180) (20,19184) (24,124407) (28,824504)};
\addlegendentry{support $|\mathcal S|$ $\sim e^{1.05L}$}
\end{groupplot}
\end{tikzpicture}
\caption{\textbf{Scaling of the sampling cost with system size} ($L=4$--$14$; 8--28 qubits, half-filled
Hubbard $U/t=8$; the three largest sizes ($20$--$28$ qubits) computed by matrix-free Lanczos/Haydock,
validated against dense diagonalization at $L\le8$). (a)~The fermionic magic $\mathcal F_1$ of the correlated
ground state grows with system size ($6.6\to22.5$). (b)~The $(N{+}1)$-sector subspace fraction
$|\mathcal S|/\dim$ needed to reconstruct $A(\omega)$ to relative-$L_1<0.05$ \emph{falls} steeply
($0.92\to0.08$)---the sampled reconstruction becomes relatively cheaper as the state grows
harder to represent classically. \textbf{(c)}~The absolute sizes nonetheless both grow
\emph{exponentially} (log scale): the determinant support $|\mathcal S|\sim e^{1.05\,L}$ and the full
$(N{+}1)$ sector dimension $\dim\sim e^{1.30\,L}$; the faster-growing sector is precisely why the
\emph{fraction} in (b) contracts. Fraction down, absolute cost up---the honest two-sided
picture.}\label{fig:scaling}\end{figure}

\begin{figure}[htbp]\centering% G1 (landmark): chain vs 2-leg ladder at MATCHED one-body magic F1 -- exact ED (ladder_vs_chain.json).
% At matched F1 the ladder's minimal bond dimension chi and required (N+-1) fraction diverge from the
% chain's: the cost is governed by entanglement (chi), not F1 -- a controlled resource contrast
% (a width-38 ladder is still classically near-exact for DMRG; no beyond-classical claim).
\definecolor{ldChain}{HTML}{0072B2}\definecolor{ldLad}{HTML}{D55E00}\definecolor{ldInk}{HTML}{1B1B1B}
\begin{tikzpicture}
\begin{groupplot}[group style={group size=2 by 1, horizontal sep=1.9cm},
  width=8.0cm, height=6.0cm, paperaxis,
  xmin=10.5, xmax=21.5, xtick={12,16,20}, xtick align=outside,
  title style={at={(0,1)},anchor=south west,font=\normalsize\bfseries,yshift=2pt},
  label style={font=\normalsize}, tick label style={font=\normalsize},
  legend style={font=\footnotesize,draw=black!25,fill=white,fill opacity=0.9,text opacity=1},legend cell align=left]
% ---------- (a) minimal bond dimension chi ----------
\nextgroupplot[title={(a)~minimal bond dimension $\chi$},
  xlabel={qubits $\;2L$}, ylabel={$\chi$ (min.\ MPS bond dim.)}, ymin=0, ymax=48,
  ytick={0,10,20,30,40}, ymajorgrids, grid style={draw=black!10},
  legend pos=north west]
\addplot[ldLad,line width=1.5pt,mark=*,mark size=3pt,mark options={fill=ldLad,draw=white,line width=0.8pt}]
  coordinates {(12,13)(16,38)(20,38)}; \addlegendentry{2-leg ladder}
\addplot[ldChain,line width=1.5pt,mark=square*,mark size=2.8pt,mark options={fill=ldChain,draw=white,line width=0.8pt}]
  coordinates {(12,5)(16,8)(20,6)}; \addlegendentry{chain}
\node[ldInk,font=\scriptsize,anchor=north east] at (axis cs:21.3,46.5) {$\mathcal F_1$ matched};
\node[ldInk,font=\scriptsize,anchor=south] at (axis cs:16,41.0) {$\times4.8$};
% ---------- (b) required (N+-1) fraction ----------
\nextgroupplot[title={(b)~sampled $(N{\pm}1)$ fraction},
  xlabel={qubits $\;2L$}, ylabel={$|\mathcal S|/D$ for rel-$L_1<0.05$}, ymin=0, ymax=1.0,
  xmin=10.5, xmax=17.5, xtick={12,16}, ytick={0,0.25,0.5,0.75,1}, ymajorgrids, grid style={draw=black!10},
  legend pos=south west]
\addplot[ldLad,line width=1.5pt,mark=*,mark size=3pt,mark options={fill=ldLad,draw=white,line width=0.8pt}]
  coordinates {(12,0.807)(16,0.626)}; \addlegendentry{2-leg ladder}
\addplot[ldChain,line width=1.5pt,mark=square*,mark size=2.8pt,mark options={fill=ldChain,draw=white,line width=0.8pt}]
  coordinates {(12,0.717)(16,0.426)}; \addlegendentry{chain}
\node[ldInk,font=\scriptsize,anchor=north east] at (axis cs:17.3,0.97) {same $\mathcal F_1$};
\end{groupplot}
\end{tikzpicture}
\caption{\textbf{The cost is tracked by entanglement, not one-body magic---a controlled chain-versus-ladder
contrast.} A two-leg Hubbard ladder versus a chain of the \emph{same} site count (half filling,
$U/t=8$; exact diagonalization). At matched fermionic magic $\mathcal F_1$ (chain vs.\ ladder: $9.39/9.69$,
$12.82/12.91$, $16.03/16.09$ at $12/16/20$ qubits---$\mathcal F_1$ is a geometry-blind one-body invariant)
the ladder's perpendicular cut has an area-law boundary of two rather than one. \textbf{(a)}~Its minimal
matrix-product-state bond dimension $\chi$ is far larger; the chain-to-ladder ratio grows over
$12\to20$ qubits ($\times2.6,\times4.8,\times6.3$), the ladder $\chi$ saturating at $38$ and the chain
$\chi$ remaining small and growing only slowly ($5\to8\to6$ over these sizes, consistent with the
logarithmic entanglement of the gapless spin sector)---the reported $\chi$ are
threshold-truncated Schmidt ranks. \textbf{(b)}~The response-targeted $(N{\pm}1)$ fraction needed for spectral
relative-$L_1<0.05$ is correspondingly higher for the ladder (these union $(N{\pm}1)$ response fractions
differ from the single-sector $(N{+}1)$ $A(\omega)$ fractions of Fig.~\ref{fig:scaling}, a different
observable). Same one-body magic, higher cost: the
determinant support is lower-bounded and tracked by the basis-dependent bond dimension $\chi$ (Sec.~\ref{sec:resource}), not
by $\mathcal F_1$. A ladder of this width ($\chi\!\approx\!38$) remains classically near-exact for dynamical
DMRG; it is a resource contrast, not a beyond-classical claim.}\label{fig:ladder}\end{figure}

\begin{table}[t]\centering\small
\caption{\textbf{The nineteen-molecule fermionic-magic suite}, computed and \textbf{verified
against exact active-space FCI to $<10^{-5}$~Ha at every geometry} (cc-pVDZ). The fermionic
AntiFlatness $\mathcal F_1$ ($k{=}1$ Majorana antiflatness; $\mathcal F_1{=}0$ iff Gaussian/mean-field, maximum
$\mathcal F_1^{\max}{=}2n_{\rm o}$) is listed near equilibrium and at dissociation, in a valence
bond-breaking active space CAS$(n_e,n_{\rm o})$. Fermionic magic switches on for covalent
bond-breaking, is already large at equilibrium for inherently multireference species
($\mathrm{C_2}$, $\mathrm{O_2}$), and stays near-Gaussian for ionic and weakly-correlated
bonds---so $\mathcal F_1$ marks the multireference, strongly-correlated regime rather than bond length
\emph{per se}. The last two columns validate the \emph{method}: at the dissociation geometry the
bitstring-sampled reconstruction of the (orbital-projected) spectral function $A(\omega)$ matches
the exact Lehmann result to relative-$L_1$ error, using a sampled subspace that is a fraction
$|\mathcal{S}|/D$ of the full $(N{+}1)$ sector. The listed $\mathcal F_1=4\,\mathrm{tr}[\gamma(1-\gamma)]$
are directly computed spin-orbital values (for the five open-shell species they are \emph{not} $2N_{\rm u}$;
see Sec.~\ref{sec:resource}). Per-molecule equilibrium and dissociation geometries, active-space orbital
indices, FCI energies and residuals, and the absolute $|\mathcal S|$, sector dimension $D$ and minimal bond
dimension $\chi$ are listed in App.~\ref{app:repro}. $^{\dagger}$: near-degenerate ground state.}
\label{tab:molecules}
\begin{tabular}{llcccccc}\hline\hline
Molecule & Bonding & CAS$(n_e,n_{\rm o})$ & $\mathcal F_1^{\rm eq}$ & $\mathcal F_1^{\rm diss}$ & $\mathcal F_1^{\rm diss}/\mathcal F_1^{\max}$ & rel-$L_1$ & $|\mathcal{S}|/D$ \\ \hline
\multicolumn{8}{l}{\emph{Covalent bond-breaking}}\\
F$_2$ & covalent single & (1+1,2) & 0.068 & 3.985 & 1.00 & $<\!10^{-13}$ & 0.50 \\
N$_2$ & covalent triple & (3+3,6) & 0.847 & 11.377 & 0.95 & $2\times10^{-5}$ & 0.21 \\
H$_2$O & multicentre & (2+2,4) & 0.007 & 6.936 & 0.87 & $<\!10^{-13}$ & 0.50 \\
HF & polar single & (1+1,2) & 0.001 & 3.460 & 0.87 & $<\!10^{-13}$ & 1.00 \\
NH$_3$ & multicentre & (3+3,6) & 0.080 & 10.289 & 0.86 & $1\times10^{-5}$ & 0.47 \\
CO & polar triple & (3+3,6) & 0.688 & 8.192 & 0.68 & $1\times10^{-4}$ & 0.41 \\
H$_6$ & H-chain & (3+3,6) & 0.521 & 7.735 & 0.64 & $2\times10^{-5}$ & 0.50 \\
H$_4$ & H-chain & (2+2,4) & 0.252 & 5.113 & 0.64 & $<\!10^{-13}$ & 0.50 \\
H$_2$ & covalent single & (1+1,2) & 0.038 & 2.511 & 0.63 & $<\!10^{-13}$ & 0.50 \\
BeH$_2$ & multicentre & (2+2,4) & 0.039 & 3.754 & 0.47 & $<\!10^{-13}$ & 0.25 \\
\multicolumn{8}{l}{\emph{Multireference / open-shell}}\\
C$_2$$^{\dagger}$ & covalent double & (4+4,6) & 4.743 & 8.001 & 0.67 & $<\!10^{-13}$ & 0.10 \\
NO & polar (radical) & (4+3,6) & 0.790 & 7.449 & 0.62 & $2\times10^{-6}$ & 0.25 \\
O$_2$ & covalent double (triplet) & (5+3,6) & 1.673 & 6.000 & 0.50 & $1\times10^{-5}$ & 0.25 \\
CN & polar (radical) & (4+3,6) & 1.307 & 5.938 & 0.49 & $1\times10^{-5}$ & 0.24 \\
OH$^{\dagger}$ & polar (radical) & (3+2,4) & 0.013 & 3.703 & 0.46 & $<\!10^{-13}$ & 0.33 \\
\multicolumn{8}{l}{\emph{Ionic / weakly correlated}}\\
BH & metal hydride & (2+2,4) & 0.518 & 0.573 & 0.07 & $<\!10^{-13}$ & 0.25 \\
BeH & metal hydride (radical) & (2+1,3) & 0.035 & 0.386 & 0.06 & $<\!10^{-13}$ & 0.67 \\
LiH & metal hydride & (1+1,2) & 0.002 & 0.075 & 0.02 & $<\!10^{-13}$ & 1.00 \\
LiF & ionic & (1+1,2) & 0.000 & 0.001 & 0.00 & $<\!10^{-13}$ & 1.00 \\
\hline\hline\end{tabular}
\end{table}

\section{Hardware demonstration}

The method runs on today's processors without realizing the number-non-conserving operator $\hat
c^\dagger_p$ on the device. One prepares a \emph{number-conserving} $(N{+}1)$-electron reference---a
Slater or local-unitary-cluster-Jastrow state with $n_\alpha=N/2+1$---and Trotter-evolves it
shallowly, mapping on-site interactions to native fractional $R_{ZZ}$ gates and hopping to Givens
rotations (Fig.~\ref{fig:circ}). Computational-basis bitstrings are sampled, recovered to the correct
particle-number sector by self-consistent configuration recovery~\cite{robledomoreno2025}, and the
entire $\hat c^\dagger_p$/Lehmann construction of Eq.~\eqref{eq:green} is assembled classically to
yield the full matrix $A_{pq}(\omega)$. Only a few Trotter steps are needed, because the classical
diagonalization is variational in the sampled subspace.

\begin{figure}[htbp]\centering\includegraphics[width=\textwidth]{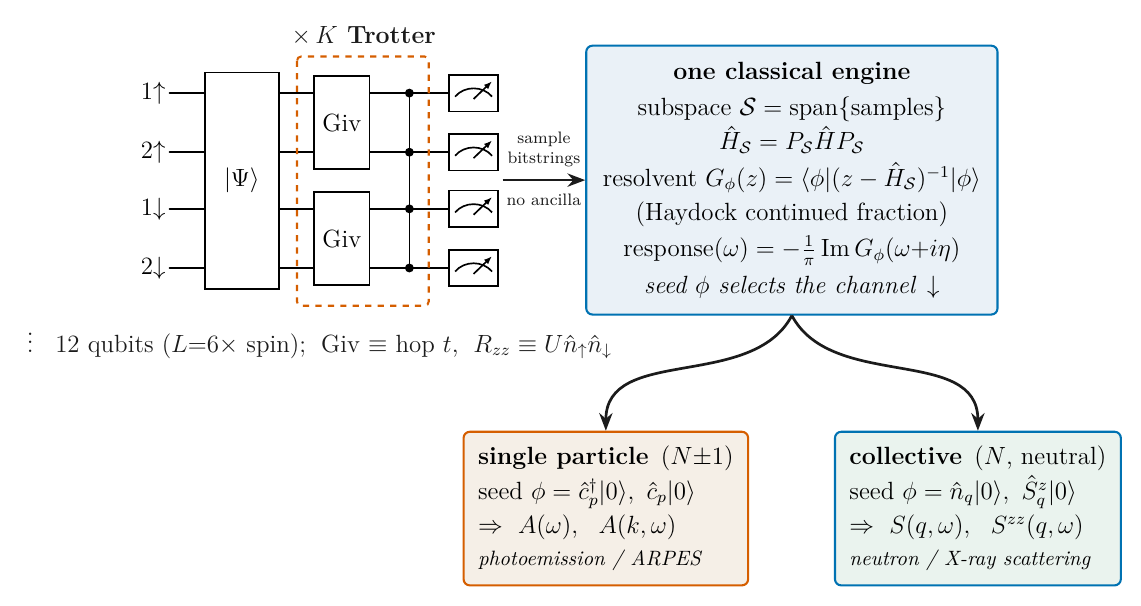}
\caption{\textbf{One sampling primitive, four dynamical responses.} A shallow, ancilla-free
circuit---state preparation and a Trotter layer repeated $K$ times (Givens rotations for the hopping
$t$, native $R_{zz}$ for the on-site $U\hat n_\uparrow\hat n_\downarrow$)---is measured in the
computational basis; no Hadamard test or controlled unitary is run on the device. The sampled
bitstrings define a sector-specific subspace $\mathcal S$ and feed a single classical engine---a Haydock
continued-fraction resolvent~\cite{haydock1972} $G_\phi(z)=\langle\phi|(z-\hat H_{\mathcal S})^{-1}|\phi\rangle$, whose
single-particle case ($\phi=\hat c^\dagger_p|0\rangle$) is the Green's function of
Eq.~\eqref{eq:green}. The seed $\phi$ then selects the channel: particle addition/removal seeds
$\hat c^{\dagger}_p|0\rangle,\hat c_p|0\rangle$ in the $(N{\pm}1)$ sectors give the single-particle
spectral functions $A(\omega)$ and $A(k,\omega)$ (photoemission/ARPES), while number-conserving density
and spin seeds $\hat n_q|0\rangle,\hat S^z_q|0\rangle$ in the $N$ sector give the dynamical structure
factors $S(q,\omega)$ and $S^{zz}(q,\omega)$ (neutron/X-ray scattering)---all from \emph{one} sampling
primitive, the seed selecting the sector and hence the subspace. The four channels are realized in Figs.~\ref{fig:method} ($A(\omega)$),
\ref{fig:lattice} ($A(k,\omega)$), \ref{fig:sqw} ($S(q,\omega)$), and \ref{fig:spin}
($S^{zz}(q,\omega)$).}\label{fig:circ}\end{figure}

Executed on the IBM Heron processor \texttt{ibm\_fez} for the $L=6$ Hubbard model at $U/t=4$ (broadening
$\eta=0.15\,t$; $12$ qubits; the $t=0$ reference plus $K=7$ real-time-evolution circuits, whose $50{,}000$
pooled computational-basis shots define the union subspace; TREX readout twirling, Pauli twirling, and
dynamical decoupling, and \emph{no} zero-noise extrapolation), the bitstring-sampled spectral function
reproduces the exact $A(\omega)$ to within the line-broadening resolution
(Fig.~\ref{fig:heron})---a demonstration on a real quantum computer that the open-gap method works.
At this scale the $(N{+}1)$ sector is small enough that hardware sampling with self-consistent
recovery captures essentially the complete subspace; the robustness underlying this clean result is
quantified under a per-qubit bit-flip channel: self-consistent configuration recovery holds the
spectral reconstruction near relative-$L_1\approx0.05$ out to $\varepsilon\approx16\%$, far beyond the
naive-post-selection range where the error grows steadily with the flip rate (Fig.~\ref{fig:noise}). The demonstration realizes time-evolved QSCI~\cite{teqsci2024} and
sample-based Krylov diagonalization~\cite{skqd2025,kirby2023,yoshioka2025} lifted into the
$(N{\pm}1)$ sectors; the nearest published sibling also builds a subspace from
sampled configurations---from short-time quantum-evolved states with classically projected long-time
dynamics---but targets the \emph{neutral}-sector molecular response, not the charged-$(N{\pm}1)$
single-particle Green's function our Lehmann construction assembles~\cite{santini2025}; the hardware Green's-function and dynamical-structure-factor
results of the trapped-ion line~\cite{greenediniz2024,quantinuum_dsf2026} use non-subspace routes.
A ready-to-run notebook for the $\mathrm{N_2}$ molecular spectral function---a
double-factorized Trotter of the molecular Hamiltonian feeding the same sample--recover--Lehmann
pipeline, \emph{not executed on hardware here}---is provided as an ancillary file, bridging the concept from the Hubbard model to
chemistry and situating the method within the quantum-centric-supercomputing program that couples IBM
Heron processors to the Fugaku supercomputer~\cite{robledomoreno2025,shirakawa2025} under
error-mitigated observable estimation~\cite{qesem2025,algorithmiq2026}.

To take the same sample--recover pipeline to a molecular energy, we ran the ground state of stretched
$\mathrm{N_2}$ ($R=2.0$~\AA, valence active space) on a second Heron processor,
\texttt{ibm\_marrakesh} ($10{,}000$ shots), through a double-factorized Trotter circuit. Self-consistent
configuration recovery reaches, averaged over $8$ recovery seeds on the same device counts,
$E_{\mathrm{hw}}=-108.80745\pm0.00014$~Ha---$0.59\pm0.14$~mHa of the exact FCI value
($E_{\mathrm{FCI}}=-108.80804$~Ha; best seed $0.40$~mHa)---well below chemical accuracy on a real device
for a strongly multireference bond (Fig.~\ref{fig:heron}, panel~b). Notably, the same shallow circuit evaluated by \emph{noiseless} statevector
simulation lands at $-108.77850$~Ha ($29.5$~mHa error): here the device noise, filtered through recovery,
populates valid configurations the ideal shallow circuit misses, an instance of the noise-assisted
recovery mechanism we analyze at length in a companion study~\cite{review}. The same effect has been
reported independently on a variable-length cuprate chain, where real trapped-ion noise \emph{improves}
the sample-based energy convergence over the noiseless statevector simulation~\cite{wray2025}, and---on
$\mathrm{N_2}$ with the LUCJ ansatz, the present system---the same noise-enhanced configuration recovery
is reported independently in~\cite{vaquero2026}, so the mechanism is not an artifact of one processor or
one molecule. The quoted $\pm0.14$~mHa is the spread over the $8$ recovery
seeds---the stochastic uncertainty of the self-consistent recovery on the fixed device counts, not a
device-repeat statistic; the systematic version, across nineteen molecules and a controlled noise model, is the subject of
Fig.~\ref{fig:noiserec} below.

\paragraph{Why $\mathrm{N_2}$ alone is not the test: noise robustness across chemistry.} A single
molecule cannot separate a genuine mechanism from a lucky geometry, so we stress-test the
sample--recover energy pipeline across the entire nineteen-molecule suite of
Table~\ref{tab:molecules} under a controlled, hardware-motivated noise model. For each dissociated
molecule we Born-sample the exact ground-state configuration distribution, apply an independent
per-qubit bit-flip channel at rate $\varepsilon$, and rebuild the subspace either by naive
in-sector post-selection or by self-consistent configuration recovery (S-CoRe) to the correct $(N_\alpha,N_\beta)$
sector; the recovered configurations are diagonalized and the energy is compared to FCI, seed-averaged
(Fig.~\ref{fig:noiserec}). Two facts emerge, and both bound the limits. First, at a
realistic Heron-scale rate $\varepsilon=2\%$ the subspace energy is intrinsically forgiving---because
diagonalization in the sampled subspace is variational, corrupted bitstrings that survive as valid
configurations only enlarge the subspace, a valid-shot economy rather than a fidelity gain---so $18/19$ molecules reach chemical accuracy with recovery
(the lone exception is the strongly multireference $\mathrm{H_6}$ chain at $3.6$~mHa), and S-CoRe
converts one of the two naive failures ($\mathrm{NH_3}$, $2.65\!\to\!1.51$~mHa) into a success. Second, and
more informative, the \emph{advantage} of recovery grows monotonically with the noise: where naive
post-selection discards the increasing fraction of out-of-sector strings and its error climbs steeply
($\mathrm{N_2}$: $0.4\!\to\!6.4\!\to\!31.9$~mHa at $\varepsilon=0,20,30\%$), S-CoRe reassigns them to
physical configurations and roughly halves the error throughout ($\mathrm{N_2}$: $0.4\!\to\!3.3\!\to
\!13.2$~mHa; $\mathrm{CO}$ actually \emph{improves}, from $1.7$~mHa at $\varepsilon=0$ to a $0.6$~mHa
minimum near $\varepsilon=20\%$ ($0.7$~mHa at $\varepsilon=30\%$), as recovered strings enrich
the subspace). The effect is molecule-dependent: species whose ground state concentrates on few
determinants ($\mathrm{C_2}$) are essentially noise-immune out to $\varepsilon=30\%$ ($\mathrm{H_2O}$ is
likewise noise-immune at the realistic $\varepsilon=2\%$),
while the hardest ($\mathrm{H_6}$) remains the residual challenge---consistent with the resource
analysis of Sec.~\ref{sec:resource}, where the cost is the determinant support $|\mathcal S|$---tracked by the entanglement, not predicted by one-body magic.
This is the molecular counterpart of the Hubbard noise sweep of Fig.~\ref{fig:noise}, and it answers
the natural objection to a single-molecule hardware run: the recovery mechanism is a property of the
chemistry, not of $\mathrm{N_2}$.

\begin{figure}[htbp]\centering% fragment: \input into the figure environment -> \normalsize equals the body font exactly.
% AUTO-GENERATED by src/make_hardware_hero.py from data/hw_lucj_n2_result.json -- no hand-typed numbers.
\definecolor{inkPrim}{HTML}{1B1B1B}\definecolor{inkMute}{HTML}{6E6E6E}
\definecolor{warmFill}{HTML}{E8770C}\definecolor{warmLine}{HTML}{7A1E10}
\definecolor{hwWarm}{HTML}{C4360C}\definecolor{simSlate}{HTML}{9AA1AC}\definecolor{band}{HTML}{ECE7DF}
\begin{tikzpicture}[font=\rmfamily]
\pgfplotsset{hpanel/.style={width=8.5cm,height=6.0cm,axis line style={inkPrim,line width=0.6pt},
  tick style={inkPrim,line width=0.6pt},tick label style={font=\normalsize},label style={font=\normalsize},
  title style={font=\normalsize,yshift=1pt},ymajorgrids,major grid style={inkMute,opacity=0.10,line width=0.3pt}}}
% ===================== (a) ibm_fez : spectral function A(omega) =====================
% CLEAN: curve + title only; all technical detail (shots, mitigation, coverage) lives once, in the caption.
\begin{axis}[name=axa,hpanel,
  title={(a)\ \ \texttt{ibm\_fez} $\cdot$ 12 qubits: \ $A(\omega)$},
  xlabel={frequency \ $\omega-E_0$}, ylabel={$A(\omega)$},
  xmin=2.31,xmax=21.02,ymin=0,ymax=0.603]
  \addplot[draw=none,fill=warmFill,fill opacity=0.28] table[x=w,y=A]{figs/heron_hot.dat} \closedcycle;
  \addplot[warmLine,line width=1.3pt] table[x=w,y=A]{figs/heron_hot.dat};
\end{axis}
% ===================== (b) ibm_marrakesh : N2 ground-state energy =====================
% CLEAN: two curves + which-is-which + the two headline numbers; specs/jobs live once, in the caption.
\begin{axis}[name=axb,hpanel,at={(axa.south east)},anchor=south west,xshift=1.4cm,
  title={(b)\ \ \texttt{ibm\_marrakesh} $\cdot$ 24 qubits: \ $\mathrm{N_2}$ energy},
  xlabel={configuration-recovery step}, ylabel={error to FCI \ (mHa)},
  ymode=log,xmin=-0.35,xmax=4.55,ymin=0.28,ymax=90,
  xtick={0,1,2,3,4},ytick={0.5,1,3,10,30},yticklabels={$0.5$,$1$,$3$,$10$,$30$}]
  \fill[band,opacity=0.72] (axis cs:-0.35,0.28) rectangle (axis cs:4.55,1.6);
  \node[anchor=south west,font=\normalsize,inkPrim] at (axis cs:-0.28,0.42) {chem.\ accuracy};
  % REAL hardware recovery: MEAN over 8 recovery seeds on the SAME cached ibm_marrakesh counts
  % (job da125f2ein7c73bcsqs0, no new QPU); shaded envelope = +-1 s.d. across the 8 seeds at every step.
  \addplot[draw=none,name path=hwhi,forget plot] coordinates {(0,35.110) (1,6.958) (2,2.397) (3,1.199) (4,0.731)};
  \addplot[draw=none,name path=hwlo,forget plot] coordinates {(0,27.049) (1,5.267) (2,1.827) (3,0.877) (4,0.446)};
  \addplot[warmFill,fill opacity=0.30,forget plot] fill between[of=hwhi and hwlo];
  \addplot[simSlate,line width=1.4pt,dash pattern=on 5pt off 3pt,mark=square,mark size=2.6pt,
           mark options={draw=simSlate,fill=white,line width=1pt}] coordinates {(0,31.174) (1,29.600) (2,29.546) (3,29.546)};
  % mean over 8 seeds + a capped +-1 s.d. error bar at EVERY step, on top of the shaded envelope
  \addplot[hwWarm,line width=2.2pt,mark=*,mark size=3.0pt,mark options={fill=hwWarm,draw=white,line width=0.7pt},
     error bars/.cd,y dir=both,y explicit,error bar style={hwWarm,line width=1.1pt},
     error mark=-,error mark options={rotate=90,hwWarm,line width=1.1pt,mark size=4pt}]
     coordinates {(0,31.079)+-(0,4.030) (1,6.113)+-(0,0.845) (2,2.112)+-(0,0.285) (3,1.038)+-(0,0.161) (4,0.588)+-(0,0.142)};
  % which curve is which: a leader from each label lands ON its own curve; plus the two headline numbers
  \node[anchor=south,font=\normalsize,inkPrim] (nl) at (axis cs:1.85,44)
    {noiseless (same circuit), $29.5$~mHa};
  \draw[-{Stealth[length=1.7mm]},inkPrim,line width=0.5pt] (nl.south) -- (axis cs:1.55,30.0);
  \node[anchor=south,align=center,font=\normalsize,inkPrim] (nh) at (axis cs:3.05,5.2)
    {noisy hardware\\$0.59\pm0.14$~mHa};
  \draw[-{Stealth[length=1.7mm]},inkPrim,line width=0.5pt] (nh.south) -- (axis cs:3.0,1.15);
\end{axis}
\end{tikzpicture}
\caption{\textbf{The complete quantum-hardware record of this work: two runs on IBM~Heron processors.}
\textbf{(a)}~Single-particle spectral function $A(\omega)$ of the $L=6$ Hubbard model at $U/t=4$,
$\eta=0.15\,t$ ($12$ qubits; the classical validation of Fig.~\ref{fig:method} is a different instance,
$U/t=8$), reconstructed from $50{,}000$ computational-basis bitstrings sampled on \texttt{ibm\_fez} (TREX
readout twirling, Pauli twirling, and dynamical decoupling). The device histogram
covered the complete $(N{+}1)$ sector ($|\mathcal S|\!=\!300/300$), so the reconstruction is exact by
coverage---a proof of principle, not a fidelity or beyond-classical claim.
\textbf{(b)}~Ground-state energy error of stretched $\mathrm{N_2}$ ($R=2.0$~\AA, CAS(10e,12o)/cc-pVDZ,
$24$ qubits) versus self-consistent configuration-recovery step, on \texttt{ibm\_marrakesh} ($10{,}000$ shots,
double-factorized Trotter circuit). The noisy device (solid) reaches
$0.59\pm0.14$~mHa of the exact FCI value---well below chemical accuracy---whereas the noiseless statevector
simulation of the \emph{same} shallow circuit (dashed) is stranded at $29.5$~mHa: device noise, filtered
through configuration recovery, populates valid configurations the ideal shallow circuit misses. The
solid curve is the mean over $8$ recovery seeds on the cached device counts; the shaded envelope and the
capped whiskers at every step both show the $\pm1$~s.d.\ spread across those $8$ seeds (widest at
step~$0$, $31.1\pm4.0$~mHa, where it starts together with the noiseless curve, and tapering as recovery
converges to $0.59\pm0.14$~mHa; best seed $0.40$~mHa). The
systematic version across nineteen molecules under a controlled noise model is Fig.~\ref{fig:noiserec}. These two runs are the entirety of the quantum
hardware used here; every other result in this work is an exact classical simulation.}\label{fig:heron}\end{figure}

\begin{figure}[htbp]\centering\includegraphics[width=0.7\textwidth]{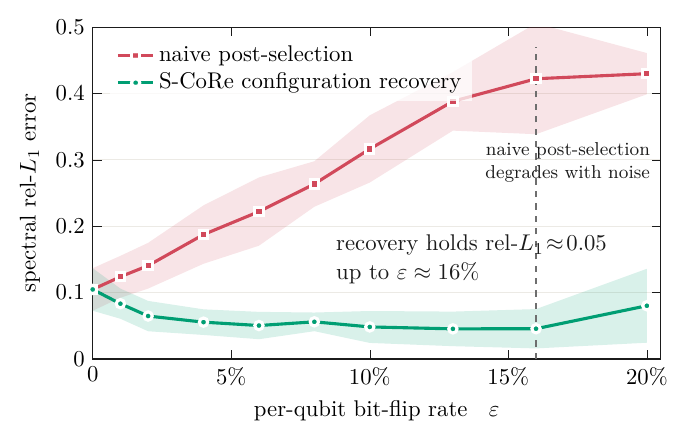}
\caption{\textbf{Noise robustness (synthetic per-qubit bit-flip channel, $L=6$).} Spectral
relative-$L_1$ error vs the per-qubit bit-flip rate $\varepsilon$, applied to bitstrings Born-sampled from
the exact distribution (this is a controlled classical simulation, not a device run). Naive
post-selection degrades steadily, whereas S-CoRe self-consistent configuration recovery holds the error
near $0.05$ out to $\varepsilon\approx16\%$ (seed-averaged over $6$ seeds at $6{,}000$ shots each; bands
$=\pm1$ s.d.\ across seeds).}\label{fig:noise}\end{figure}

\begin{figure}[htbp]\centering% self-contained native pgfplots fragment (fig:noiserec)
\definecolor{nrH6}{HTML}{6A4C93}
\definecolor{nrNH3}{HTML}{E8880C}
\definecolor{nrN2}{HTML}{D7263D}
\definecolor{nrCO}{HTML}{12876F}
\definecolor{nrC2}{HTML}{2E6F95}
\definecolor{nrOK}{HTML}{0E7C7B}\definecolor{nrBad}{HTML}{5E4B8B}\definecolor{nrNv}{HTML}{D1495B}\definecolor{nrChem}{HTML}{2E7D32}
\begin{tikzpicture}
\begin{groupplot}[group style={group size=2 by 1, horizontal sep=1.7cm}, height=9.5cm, paperaxis,
  title style={at={(0,1)},anchor=south west,font=\bfseries,yshift=1pt}]
\nextgroupplot[width=0.54\linewidth, ymode=log, xmin=-1, xmax=31, ymin=3e-3, ymax=80,
  xlabel={per-qubit bit-flip rate $\varepsilon$ (\%)}, ylabel={energy error vs FCI (mHa)}, title={(a)},
  legend style={at={(0.03,0.97)},anchor=north west,font=\normalsize,draw=black!20},legend columns=2,legend cell align=left]
\draw[nrChem,dashed,line width=0.6pt] (axis cs:-1,1.6) -- (axis cs:31,1.6);
\node[black,font=\normalsize,anchor=north east,fill=white,fill opacity=0.7,text opacity=1,inner sep=1pt] at (axis cs:30.6,1.52) {chemical accuracy};
\addplot[name path=H6lo,draw=none,forget plot] coordinates {(0,3.2914) (2,3.0066) (5,3.6550) (10,3.7659) (15,5.3712) (20,8.9720) (30,15.6267)};
\addplot[name path=H6hi,draw=none,forget plot] coordinates {(0,4.5041) (2,4.1385) (5,4.4412) (10,6.5199) (15,9.5780) (20,16.7961) (30,29.8587)};
\addplot[nrH6,opacity=0.15,forget plot] fill between[of=H6lo and H6hi];
\addplot[name path=NH3lo,draw=none,forget plot] coordinates {(0,2.1436) (2,1.4495) (5,1.0457) (10,0.7919) (15,1.3563) (20,1.8046) (30,5.7326)};
\addplot[name path=NH3hi,draw=none,forget plot] coordinates {(0,2.9078) (2,1.9853) (5,1.6281) (10,1.6847) (15,2.0580) (20,2.8561) (30,20.1868)};
\addplot[nrNH3,opacity=0.15,forget plot] fill between[of=NH3lo and NH3hi];
\addplot[name path=N2lo,draw=none,forget plot] coordinates {(0,0.1729) (2,0.1550) (5,0.3169) (10,0.2965) (15,0.6158) (20,1.4642) (30,8.4636)};
\addplot[name path=N2hi,draw=none,forget plot] coordinates {(0,0.6006) (2,0.6093) (5,0.6846) (10,1.3158) (15,2.3324) (20,5.5026) (30,17.8734)};
\addplot[nrN2,opacity=0.15,forget plot] fill between[of=N2lo and N2hi];
\addplot[name path=COlo,draw=none,forget plot] coordinates {(0,1.5510) (2,1.5669) (5,1.0491) (10,0.7344) (15,0.5781) (20,0.4698) (30,0.3702)};
\addplot[name path=COhi,draw=none,forget plot] coordinates {(0,1.9155) (2,2.0418) (5,1.4264) (10,1.3026) (15,1.0358) (20,0.7942) (30,1.0249)};
\addplot[nrCO,opacity=0.15,forget plot] fill between[of=COlo and COhi];
\addplot[name path=C2lo,draw=none,forget plot] coordinates {(0,0.1703) (2,0.1351) (5,0.0827) (10,0.0677) (15,0.0479) (20,0.0385) (30,0.0118)};
\addplot[name path=C2hi,draw=none,forget plot] coordinates {(0,0.1967) (2,0.1807) (5,0.1240) (10,0.0817) (15,0.0730) (20,0.0627) (30,0.0433)};
\addplot[nrC2,opacity=0.15,forget plot] fill between[of=C2lo and C2hi];
\addplot[nrH6,line width=1.4pt,mark=*,mark size=1.3pt,mark options={fill=nrH6,draw=white,line width=0.4pt}] coordinates {(0,3.8978) (2,3.5726) (5,4.0481) (10,5.1429) (15,7.4746) (20,12.8840) (30,22.7427)}; \addlegendentry{H$_6$}
\addplot[nrH6,line width=0.9pt,densely dashed,forget plot] coordinates {(0,3.8978) (2,4.6251) (5,5.6347) (10,7.8697) (15,12.9901) (20,23.8560) (30,50.3837)};
\addplot[nrNH3,line width=1.4pt,mark=*,mark size=1.3pt,mark options={fill=nrNH3,draw=white,line width=0.4pt}] coordinates {(0,2.5257) (2,1.7174) (5,1.3369) (10,1.2383) (15,1.7072) (20,2.3304) (30,12.7391)}; \addlegendentry{NH$_3$}
\addplot[nrNH3,line width=0.9pt,densely dashed,forget plot] coordinates {(0,2.5257) (2,2.9622) (5,2.7617) (10,2.9227) (15,3.7294) (20,5.5505) (30,31.5152)};
\addplot[nrN2,line width=1.4pt,mark=*,mark size=1.3pt,mark options={fill=nrN2,draw=white,line width=0.4pt}] coordinates {(0,0.3843) (2,0.3445) (5,0.5007) (10,0.6589) (15,1.3685) (20,3.2538) (30,13.1685)}; \addlegendentry{N$_2$}
\addplot[nrN2,line width=0.9pt,densely dashed,forget plot] coordinates {(0,0.3843) (2,0.4335) (5,0.6929) (10,1.1118) (15,2.8006) (20,6.3706) (30,31.8749)};
\addplot[nrCO,line width=1.4pt,mark=*,mark size=1.3pt,mark options={fill=nrCO,draw=white,line width=0.4pt}] coordinates {(0,1.7332) (2,1.8043) (5,1.2377) (10,1.0185) (15,0.8069) (20,0.6320) (30,0.6975)}; \addlegendentry{CO}
\addplot[nrCO,line width=0.9pt,densely dashed,forget plot] coordinates {(0,1.7332) (2,1.9851) (5,1.9371) (10,1.4901) (15,1.8518) (20,2.5550) (30,2.8339)};
\addplot[nrC2,line width=1.4pt,mark=*,mark size=1.3pt,mark options={fill=nrC2,draw=white,line width=0.4pt}] coordinates {(0,0.1835) (2,0.1579) (5,0.1034) (10,0.0747) (15,0.0604) (20,0.0506) (30,0.0261)}; \addlegendentry{C$_2$}
\addplot[nrC2,line width=0.9pt,densely dashed,forget plot] coordinates {(0,0.1835) (2,0.1635) (5,0.1143) (10,0.0811) (15,0.0746) (20,0.0647) (30,0.0657)};
\nextgroupplot[width=0.46\linewidth, xmode=log, xmin=2.2e-3, xmax=14, enlarge y limits=0.03,
  ytick={0,1,2,3,4,5,6,7,8,9,10,11,12,13,14,15,16,17,18}, yticklabels={BeH$_2$,H$_2$,LiH,HF,F$_2$,H$_2$O,H$_4$,OH,LiF,BH,N$_2$,BeH,C$_2$,NO,O$_2$,CN,CO,NH$_3$,H$_6$}, y tick label style={font=\normalsize},
  xtick={0.01,0.1,1,10}, xticklabels={$0.01$,$0.1$,$1$,$10$}, xminorticks=false,
  xlabel={energy error at $\varepsilon=2\%$ (mHa)}, title={(b)},
  legend style={at={(0.03,0.03)},anchor=south west,font=\normalsize,draw=black!20},legend cell align=left]
\fill[nrChem,fill opacity=0.06] (axis cs:2.2e-3,-0.6) rectangle (axis cs:1.6,18.6);
\draw[nrChem,dashed,line width=0.7pt] (axis cs:1.6,-0.6) -- (axis cs:1.6,18.6);
\node[nrChem,font=\tiny,anchor=south,rotate=90] at (axis cs:1.6,9.5) {};
\draw[black!32,line width=0.8pt] (axis cs:0.0022,0) -- (axis cs:0.0022,0); \draw[black!32,line width=0.8pt] (axis cs:0.0022,1) -- (axis cs:0.0022,1); \draw[black!32,line width=0.8pt] (axis cs:0.0022,2) -- (axis cs:0.0022,2); \draw[black!32,line width=0.8pt] (axis cs:0.0022,3) -- (axis cs:0.0022,3); \draw[black!32,line width=0.8pt] (axis cs:0.0022,4) -- (axis cs:0.0022,4); \draw[black!32,line width=0.8pt] (axis cs:0.0327,5) -- (axis cs:0.0022,5); \draw[black!32,line width=0.8pt] (axis cs:0.3178,6) -- (axis cs:0.0022,6); \draw[black!32,line width=0.8pt] (axis cs:0.0194,7) -- (axis cs:0.0194,7); \draw[black!32,line width=0.8pt] (axis cs:0.0254,8) -- (axis cs:0.0254,8); \draw[black!32,line width=0.8pt] (axis cs:0.0845,9) -- (axis cs:0.0605,9); \draw[black!32,line width=0.8pt] (axis cs:0.0688,10) -- (axis cs:0.0636,10); \draw[black!32,line width=0.8pt] (axis cs:0.1367,11) -- (axis cs:0.1367,11); \draw[black!32,line width=0.8pt] (axis cs:0.1711,12) -- (axis cs:0.1615,12); \draw[black!32,line width=0.8pt] (axis cs:0.3275,13) -- (axis cs:0.1942,13); \draw[black!32,line width=0.8pt] (axis cs:0.4696,14) -- (axis cs:0.3363,14); \draw[black!32,line width=0.8pt] (axis cs:0.7815,15) -- (axis cs:0.7261,15); \draw[black!32,line width=0.8pt] (axis cs:1.3923,16) -- (axis cs:1.2450,16); \draw[black!32,line width=0.8pt] (axis cs:2.6491,17) -- (axis cs:1.5063,17); \draw[black!32,line width=0.8pt] (axis cs:4.3814,18) -- (axis cs:3.5606,18);
\addplot[only marks,mark=o,mark size=1.9pt,nrNv,line width=0.8pt] coordinates {(0.0022,0) (0.0022,1) (0.0022,2) (0.0022,3) (0.0022,4) (0.0327,5) (0.3178,6) (0.0194,7) (0.0254,8) (0.0845,9) (0.0688,10) (0.1367,11) (0.1711,12) (0.3275,13) (0.4696,14) (0.7815,15) (1.3923,16) (2.6491,17) (4.3814,18)}; \addlegendentry{naive}
\addplot[only marks,mark=*,mark size=2.1pt,nrOK] coordinates {(0.0022,0) (0.0022,1) (0.0022,2) (0.0022,3) (0.0022,4) (0.0022,5) (0.0022,6) (0.0194,7) (0.0254,8) (0.0605,9) (0.0636,10) (0.1367,11) (0.1615,12) (0.1942,13) (0.3363,14) (0.7261,15) (1.2450,16) (1.5063,17) (3.5606,18)}; \addlegendentry{S-CoRe}
\node[black,font=\normalsize,anchor=north east] at (axis cs:1.4720000000000002,18.3) {$1.6$ mHa};
\end{groupplot}
\end{tikzpicture}
\caption{\textbf{Noise-assisted SQD ground-state energy across the molecular suite.}
\textbf{(a)}~Energy error vs the per-qubit bit-flip rate $\varepsilon$ for five representative molecules
($\mathrm{H_6}$, $\mathrm{NH_3}$, $\mathrm{N_2}$, CO, $\mathrm{C_2}$; seed-averaged over $8$ seeds at $4{,}000$ shots each, bands $=\pm1$ s.d.\ across seeds; log scale). Solid $=$ S-CoRe recovery, dashed $=$ naive
post-selection. The recovery advantage grows with noise, and for $\mathrm{CO}$ and $\mathrm{C_2}$ the
recovered subspace drives the error \emph{below} the noiseless value; $\mathrm{C_2}$ is
effectively noise-immune across the full $\varepsilon$ sweep. \textbf{(b)}~The full nineteen-molecule suite at a realistic
$\varepsilon=2\%$ (log scale, sorted): for each molecule the naive post-selection error (open circle) is
linked to the S-CoRe value (filled), so the leftward pull is the recovery gain; the shaded band is
chemical accuracy ($<1.6$~mHa). Eighteen of nineteen land inside it; $\mathrm{H_6}$, the most
multireference case, is the residual challenge. Panels (a) and (b) are independent seed ensembles
(panel (a): $8$ seeds; panel (b): $5$ seeds, $4000$ shots), so a given molecule's $\varepsilon=2\%$ value
can differ by up to a factor of a few between them (for $\mathrm{N_2}$, $0.34$ vs $0.06$~mHa; both far below chemical accuracy).}\label{fig:noiserec}\end{figure}

\paragraph{Data and code availability.} Both hardware runs are archived with full metadata. For each run
we provide, as ancillary files, the raw measurement counts, the transpiled circuits (OpenQASM), the
backend, calibration date, physical-qubit layout and coupling map, transpiled two-qubit-gate count and
depth, job identifiers, the reference-state (LUCJ/Slater) parameters, the Trotter step and the
real-time-evolution times $t_k=k\,\Delta t$ ($k=0,\dots,7$: the $t=0$ reference plus the $K=7$ evolutions)
that define the union subspace, and the full mitigation settings (TREX, Pauli twirling, dynamical decoupling). The scripts that regenerate every
figure and Table~\ref{tab:molecules}---including the per-molecule geometries, active spaces, FCI energies,
absolute $|\mathcal S|$, sector dimension $D$, and minimal bond dimension $\chi$ of
App.~\ref{app:repro}---are provided alongside; every non-hardware result is an exact classical simulation
reproducible from them. The complete reproduction repository---source, data, figure scripts, and a master
reproduction notebook---is openly available at \url{https://github.com/nicolasbonilla/dynamical-spectral-functions-sqd}.

\section{Discussion}\label{sec:honest}

\subsection{An honest account of the classical hardness}
We state plainly what is certified and what is open. Sampling from a subclass of unitary-cluster-Jastrow
circuits is classically hard---it embeds instantaneous-quantum-polynomial
computation~\cite{bremner2016,hafid2025}---and adding a single non-Clifford resource already makes
distribution learning hard~\cite{hinsche2023}, just as the difficulty of learning fermionic
states grows with their non-Gaussian gate count~\cite{melehera2025}; the cost of matchgate simulation grows with the
non-Gaussian (controlled-phase) content of the circuit~\cite{extraferm}. These are
not, however, a monotone certificate of hardness for a specific trained instance: the fermionic
AntiFlatness is a \emph{state} monotone while the matchgate simulation cost is a \emph{circuit}
quantity, and their relation is empirical rather than a theorem. A tensor-network comparison
sharpens the scope. On one-dimensional chains a matrix-product state of small bond dimension
represents these states far more compactly than any determinant subspace---unsurprising, since one
dimension is where the density-matrix renormalization group is near-optimal~\cite{schollwock2011}. DMRG has
matured into a production \emph{ab initio} method for quantum chemistry~\cite{chanheadgordon2002,block2},
and mature correction-vector, dynamical-DMRG, time-dependent-DMRG~\cite{paeckel2019}, and kernel-polynomial
spectral-function methods already
exist~\cite{kuehner1999,jeckelmann2002,holzner2011,nocera2016,whitefeiguin2004,kpm2006};
beyond tensor networks, classical Pauli-propagation methods now evaluate real-time correlators
directly~\cite{kemper2026}.
\emph{We therefore do not claim advantage over tensor networks in one dimension.} The
configuration-selection advantage we demonstrate---over deterministic Krylov and CIPSI selectors, not
over heat-bath CI---is relevant in the settings where matrix-product representations strain: genuinely
two-dimensional geometries and quantum chemistry, whose orbital graphs lack the one-dimensional locality
that matrix-product states exploit. Figure~\ref{fig:ladder} isolates the resource mechanism in a
controlled contrast: at \emph{matched} one-body magic a two-leg Hubbard ladder already carries a larger
minimal bond dimension---and a $|\mathcal S|$ up to $\sim\!2.2\times$ that of the chain, needing a
correspondingly larger $(N{\pm}1)$ fraction---the chain-to-ladder ratio growing with size. A ladder of this
width ($\chi\!\approx\!38$) remains classically near-exact for dynamical DMRG, so this is a resource
contrast---the cost tracks $\chi$, not $\mathcal F_1$, exactly as the resource theory predicts---not a
beyond-classical claim. On the positive side,
Theorem~\ref{thm:b2} (App.~\ref{app:b2}) guarantees that a subspace containing the order-$K$ Krylov
space reproduces the exact spectral moments through order $2K{+}1$ and is captured with a polynomial
shot budget, so the sampled reconstruction is convergent and its error is bounded by the captured
spectral weight.

\paragraph{The strongest classical counterexamples, and why they sharpen rather than sink the thesis.}
Two recent results set the bar for any ground-state advantage claim, and we concede both without
reservation. Reinholdt \emph{et al.}~\cite{reinholdt2025} show that quantum-selected configuration
interaction re-samples already-seen determinants in a coupon-collector process and returns expansions
\emph{less} compact than classical heat-bath or adaptive selected CI; Zhang and
Otten~\cite{zhangotten2026} reach the ground state of the $[\mathrm{4Fe\text{-}4S}]$ cluster from
\emph{random} Slater determinants using of order $10^{6}$ times fewer determinants and CPU-hours than the
quantum demonstration. For the ground-state \emph{energy} of the systems shown to date, strong classical
selectors match or beat sample-based diagonalization, usually far more cheaply---we do not dispute it.
Our decoupling explains \emph{why} this must be so rather than contradicting it. The quantity these
critiques probe is the size and compactness of the determinant expansion $|\mathcal S|$, a
basis-dependent count that a free orbital rotation moves across its whole range and that the invariant
one-body magic $\mathcal F_1$ is provably blind to (Sec.~\ref{sec:resource}, Thm.~\ref{thm:witness}). The
critiques therefore land where our analysis predicts a classical method should win---on a
basis-dependent, higher-order cost that no efficiently measurable state monotone we know can rank---and
they say nothing about the invariant multireference diagnostic, which was never a cost claim. The
conclusion is not that the critiques are wrong but that the advantage question has been posed about the
wrong observable: the ground-state energy, where $|\mathcal S|$ is classically tamable and random- or
heat-bath-selected determinants suffice, rather than the frequency-resolved response, whose faithful
reconstruction demands a large fraction of the $(N\pm1)$ sector (Sec.~\ref{sec:resource}) and where the
compactness edge that powers heat-bath and adaptive selected CI~\cite{holmeshci2016,sharmashci2017,tubmanasci2016} does not transfer. We make no beyond-classical claim
even there; we relocate the open question to where it remains open.

\subsection{Machine learning that preserves, rather than removes, the quantum sampling role}
Machine learning can enhance the pipeline only where it \emph{consumes device samples or schedules
device work}---the regime where learning from quantum experiments carries a provable
advantage~\cite{huang2022}---never where it could generate the dominant determinants unaided, which
would dequantize the method exactly as the classical selectors of the introduction
do~\cite{belagali2026,hinqs2026,review}. We put the strongest candidate to a stringent test. A
generative tail-completer---an orbital-native generative flow network~\cite{bengio2021}
trained to sample high-weight $(N{+}1)$-sector configurations---is asked to complete an under-sampled
device head, in both device-seeded and from-scratch conditions (Fig.~\ref{fig:gflow}). A cheap
classical CIPSI~\cite{cipsi1973} tail captures most of the gain; the learned generator's point estimate
edges below the classical control but not by a statistically resolved margin (the $\sim\!4$~mHa
fused-vs-control difference is $\sim\!1.2$~s.d.\ over five seeds; Fig.~\ref{fig:gflow}), and the residual
error is closed by drawing more device samples, not by the generator. This is a
\emph{null result}: on this single instance a learned generator does not beat the cheap classical baseline
that the quantum sampling already enables---it does not, on the evidence here, manufacture dominant
determinants the sampling misses. The safe roles for machine learning are therefore strictly
post-processing and scheduling---mitigation and certification of the sampled histogram feeding
recovery~\cite{qesem2025,algorithmiq2026}, a role in which generative flow networks are already being
deployed for measurement optimization---both to group commuting operators into compatible measurement
sets~\cite{vargashernandez2024} and to optimize the discrete measurement
basis~\cite{vargashernandez2025}---and an active-learning schedule that
places the evolution times where the spectral function is most uncertain, in the spirit of Bayesian
optimization for inverse problems in quantum reaction dynamics~\cite{vargashernandez2019}---none of which replaces the quantum
sampling. The dequantization guard reported here corroborates, from the spectral-function side, the
classical-simulability frontier mapped for ground-state SQD in our companion review~\cite{review}.

This delimits, rather than forecloses, the ``quantum-for-AI'' horizon, and bounds where it could lie. The
2026 generative frontier has two mirror faces: Born machines that are classically \emph{trainable} yet
quantum-hard to \emph{sample}~\cite{genqadv2025,spectralborn2026}, and sample-based diagonalization,
which is quantum-\emph{sampled} yet classically \emph{diagonalized}. The defensible role for a learned
model is the one that respects this duality---consuming the device histogram rather than replacing
it~\cite{huang2022,bennewitz2022}---and the theme of the present programme, amortization, names the
opening the current instance-by-instance generative selectors~\cite{nqssc2026,zeni2026,pigensqd} leave
open. Per-molecule generative configuration recovery can already improve a \emph{single} target's sampled
sector when retrained for that molecule~\cite{pigensqd}; what our negative result and this programme
target is different---a recovery model trained \emph{once} across a family to complete the
physical-sector tail of a new, under-sampled instance without per-instance retraining. We take a first,
falsifiable step on this question (Fig.~\ref{fig:amort}). Trained once across a six-instance Hubbard family
and evaluated leave-one-out, an amortized configuration-importance model recovers a held-out instance's
spectral support \emph{at the level of} the strong per-instance classical selector (an Epstein--Nesbet
tail it never sees): the leave-one-out relative-$L_1$ difference is $-0.01$ to $-0.09$ across subspace
fractions, within $\sim\!1$~s.d.\ over the six folds. Amortization without per-instance retraining is
therefore \emph{feasible}---a train-once model generalizes across the family---but shows \emph{no
statistically resolved advantage} over the classical baseline yet, the status of the open
question: the single-instance
negative result sets the bar, the amortized result meets but does not clear it, and---consistent with
both---any advantage, if it exists, would be confined to the high-noise, high-correlation regime, never on the easy instances
where the classical prior already wins.

\begin{figure}[htbp]\centering\includegraphics[width=0.80\textwidth]{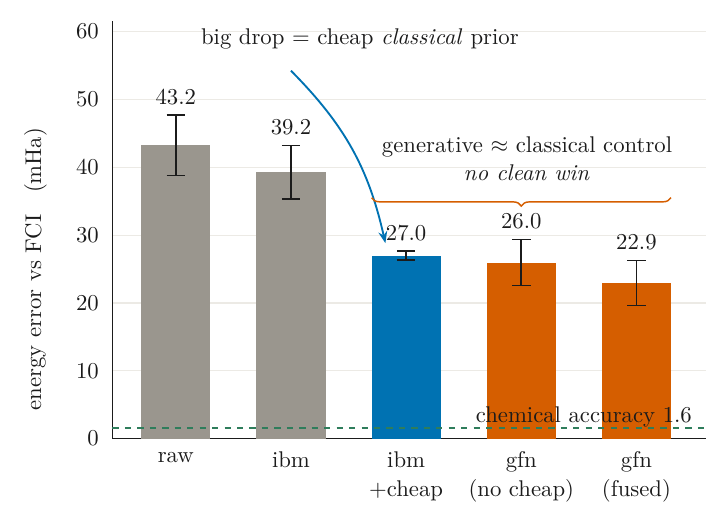}
\caption{\textbf{Dequantization guard: the learned generator does not beat the cheap classical baseline.} At an under-sampled device head ($\mathrm{N_2}$,
$1000$ shots, $5$ seeds), a cheap classical CIPSI~\cite{cipsi1973} tail already closes most of the gap
($39.2\!\to\!27.0$~mHa). A generative GFlowNet tail-completer edges the point estimate lower still
($26.0$~mHa from scratch, $22.9$~mHa fused onto the classical prior), but the $\sim\!4$~mHa
fused-vs-control ($27.0$) improvement is only $\sim\!1.2$~s.d.\ over five seeds---not a statistically
resolved win. The residual error is closed by drawing more device samples, not by the generator: on this
single instance the quantum sampling, not a learned generator, supplies the dominant determinants that
additional device shots recover.}\label{fig:gflow}\end{figure}

\begin{figure}[htbp]\centering% G2 amortized recovery (fig:amort) -- amortized_recovery.json. Train-once-across-family, leave-one-out over 6 Hubbard instances.
\definecolor{amGen}{HTML}{D55E00}\definecolor{amCls}{HTML}{0072B2}\definecolor{amInk}{HTML}{1B1B1B}
\begin{tikzpicture}
\begin{axis}[width=8.6cm,height=6.2cm,paperaxis,
  xlabel={subspace fraction $|\mathcal S|/D$}, ylabel={spectral relative-$L_1$ (leave-one-out)},
  xmin=0.02,xmax=0.33, ymin=0.35,ymax=1.05, xtick={0.05,0.1,0.2,0.3}, ymajorgrids, grid style={draw=black!10},
  legend style={font=\footnotesize,at={(0.5,1.02)},anchor=south,legend columns=2,/tikz/every even column/.append style={column sep=12pt},draw=none,fill=none},legend cell align=left,
  error bars/y dir=both,error bars/y explicit]
\addplot[amCls,line width=1.5pt,mark=square*,mark size=2.8pt,mark options={fill=amCls,draw=white}] coordinates {(0.05,0.877) +- (0,0.101) (0.1,0.856) +- (0,0.114) (0.2,0.776) +- (0,0.100) (0.3,0.657) +- (0,0.093)}; \addlegendentry{classical (per-instance)}
\addplot[amGen,line width=1.5pt,mark=*,mark size=2.8pt,mark options={fill=amGen,draw=white}] coordinates {(0.05,0.869) +- (0,0.161) (0.1,0.833) +- (0,0.162) (0.2,0.688) +- (0,0.216) (0.3,0.618) +- (0,0.199)}; \addlegendentry{amortized (train-once)}
% (annotation removed to avoid overlapping the data/error bars; the caption states the bands overlap)
\end{axis}
\end{tikzpicture}
\caption{\textbf{Amortized configuration recovery: a first data point on the programme's open question.}
A configuration-importance model trained \emph{once} across a six-instance Hubbard family
($L=6,8$; $U/t=4,8,12$) and evaluated \emph{leave-one-out} recovers a held-out instance's
$(N{+}1)$-sector spectral support, versus the strong per-instance classical selector (Epstein--Nesbet,
untrained), as a function of the sampled subspace fraction. The amortized model---which never sees the
held-out instance---meets the classical baseline (leave-one-out $\Delta=$ amortized $-$ classical
$=-0.01$ to $-0.09$ across fractions, within $\sim\!1$~s.d.\ over the six folds): amortization without
per-instance retraining is \emph{feasible} but shows no statistically resolved advantage yet. Error bars
are $\pm1$~s.d.\ across the six leave-one-out folds.}\label{fig:amort}\end{figure}

\subsection{Significance for academia and industry}

\paragraph{Academia.} The result reframes the quantum-advantage question in electronic structure. The
debate has centred on the ground-state \emph{energy}, where the evidence for a generic exponential
advantage is weak~\cite{leechan2023} and the flagship sample-based demonstrations are being
dequantized~\cite{belagali2026,reinholdt2025}. We move the question to a regime where an advantage is
defensible---dynamical response---and, more consequentially, replace an asserted advantage with a
\emph{precise resource statement}. One-body fermionic magic, though efficiently computable, is
decoupled from the sampling cost and cannot certify where a quantum processor is required; what it
faithfully reports is the static, multireference character of a target, while the operative hardness
lives in the higher-order, connected-cumulant sector. In drawing this line we tie the resource theory
of fermionic non-Gaussianity~\cite{hebenstreit2019,sierant2026,tarabunga2026} to sample-based
electronic-structure computation~\cite{robledomoreno2025,kanno2023} for the first time, and open a
concrete programme: the frequency-resolved observables now accessible from sampled subspaces are the
very objects of dynamical mean-field theory~\cite{dmft1996} and correlated
spectroscopy~\cite{damascelli2003}, and the resource analysis is a template for how quantum-advantage
claims can be \emph{scoped and delimited} rather than asserted---a discipline the field increasingly
demands~\cite{leechan2023,babbush2021,cerezo2025bp}.

\paragraph{Industry.} The significance for the quantum-computing industry is threefold and concrete.
\emph{First, the right observables.} What the method produces---the single-particle spectral function
measured by photoemission, its momentum-resolved form, and the dynamical structure factor probed by
neutron and X-ray scattering~\cite{squires2012,ament2011}---are precisely the quantities that
materials discovery, catalysis, and correlated-electron device design require, and that the
energy-only output of today's sample-based hardware does not
deliver~\cite{robledomoreno2025,extsqd2025}. \emph{Second, a correlation diagnostic.} Quantum runtime on the quantum-centric
supercomputers now in operation---an IBM Heron processor coupled to the Fugaku
supercomputer~\cite{robledomoreno2025,shirakawa2025}, the QSCI software
stack~\cite{kanno2023,adaptqsci}, and the trapped-ion Green's-function and
dynamical-structure-factor pipeline~\cite{greenediniz2024,quantinuum_dsf2026}---is scarce and
expensive, so knowing \emph{which} targets are strongly correlated is operationally valuable. The
efficiently measurable fermionic magic flags exactly the multireference regime where mean-field and
weakly-correlated classical methods break down. It is not, however, a cost oracle: being
decoupled from the determinant support (Sec.~\ref{sec:resource}), it cannot by itself rank a strongly-correlated target against a
classical selected configuration interaction~\cite{sharmashci2017,tubmanasci2016}, the density-matrix
renormalization group~\cite{whitedmrg1992}, or auxiliary-field quantum Monte
Carlo~\cite{mottazhang2018}---that decision turns on the higher-order, basis-dependent cost the
one-body magic does not see~\cite{reinholdt2025}. \emph{Third, low adoption cost.} The method is
hardware-native: it needs only computational-basis sampling, with no Hadamard tests or controlled
unitaries~\cite{bauer2016,endo2020}, so it runs on current devices within the existing SQD stack and
its GPU-accelerated classical back end~\cite{gpusqdomp,gpusqdthrust}, and it \emph{lightens} rather than adds to the
observable-estimation and error-mitigation burden that dominates near-term
cost~\cite{qesem2025,algorithmiq2026,qedma2026}. Fermionic magic is thus offered as an efficiently
measurable multireference diagnostic---a \emph{one-sided} triage that flags, before committing runtime,
the strongly-correlated targets on which mean-field and weakly-correlated classical methods fail. It is
expressly \emph{not} a placement oracle: whether a workload belongs on the quantum processor turns on the
basis-dependent cost---the bond dimension $\chi$---that one-body magic provably cannot supply
(Sec.~\ref{sec:resource}), and we claim no scheduling result absent a scheduling experiment.

\section{Conclusion}

The pressing question for sample-based quantum computation is no longer whether it can match classical
methods on the ground-state energy---the classical frontier has answered that in the
negative~\cite{leechan2023,belagali2026,reinholdt2025}---but what its output should be, and what
resource actually sets its cost. This work gives a two-part answer. We introduced a method that
computes dynamical spectral functions $A(\omega)$, $A(k,\omega)$, and $S(q,\omega)$ from purely
bitstring-sampled quantum subspaces, lifting the SQD/QSCI paradigm from static energies to
frequency-resolved dynamics with no Hadamard tests, validated against exact diagonalization across
Hubbard chains and nineteen molecules and demonstrated on the IBM~Heron processor---the $L=6$ Hubbard
spectral function $A(\omega)$ reconstructed on \texttt{ibm\_fez}, and, for the ground-state energy of
stretched $\mathrm{N_2}$, a noise-assisted configuration-recovery run on \texttt{ibm\_marrakesh}
reaching $0.59\pm0.14$~mHa of the exact value (over $8$ recovery seeds).

We then asked which resource controls the determinant support $|\mathcal S|$ that the sampler must
populate, and corrected a natural expectation. Restricted to the number-conserving states of
chemistry, the fermionic AntiFlatness collapses to a single one-particle-density-matrix invariant,
$\mathcal F_1=4\,\mathrm{tr}[\gamma(1-\gamma)]$ (equal to $2N_{\mathrm u}$ for spin-restricted natural
orbitals), identifying fermionic magic with the effectively-unpaired-electron index and one-body entanglement. Being a Gaussian (orbital-rotation)
invariant while $|\mathcal S|$ is basis dependent, it is provably decoupled from the sampling
cost---two-sidedly and without a stable sign---and no orbital-rotation-invariant functional of the
reduced density matrices can repair this. Across the nineteen-molecule suite one-body magic is nonetheless a faithful,
efficiently measurable detector of static, multireference character---switching on for covalent
bond-breaking, already large for inherently multireference $\mathrm{C_2}$ and $\mathrm{O_2}$,
near-Gaussian for the ionic bond of LiF---but it is a diagnostic of correlation, not a predictor of
classical cost---which is instead lower-bounded and, on the systems realized here, empirically tracked by the basis-dependent entanglement, the bond dimension $\chi$.
Any genuine quantum advantage lives one stratum above, in the generic, high-rank non-Gaussianity of the
connected higher-body cumulants in the fixed sampling basis---not any orbital-invariant magic order---beyond
the stratum whose free-fermionic estimation primitives remain classically tractable~\cite{oh2026}.

We have been deliberate about scope. The reach of the method is asymptotic and lives in two-dimensional
and chemical settings rather than one dimension, where tensor networks remain near-optimal~\cite{schollwock2011}; it rests on
a moment-exactness and polynomial-shot theorem (Thm.~\ref{thm:b2}, App.~\ref{app:b2}); and, on the artificial-intelligence side of the hybrid
loop, self-consistent configuration recovery improves the sampled subspace under realistic device
noise through valid-shot economy rather than quantum advantage, whereas a \emph{learned} generative
tail-completer does \emph{not} beat that classical baseline (Fig.~\ref{fig:gflow})---delimiting, not
foreclosing, the quantum--AI frontier, and consistent with the classical-simulability boundary. Read together, the two contributions deliver a dynamical primitive the
energy-only hardware of today does not provide, and a precise account of which fermionic resource sets
its cost and which merely diagnoses correlation---handing the quantum-centric supercomputing programs
now operating at industrial scale~\cite{robledomoreno2025,shirakawa2025} a computable multireference
diagnostic---and a proof that this easily-measured quantity does not by itself map the classical cost
frontier, which is instead lower-bounded and empirically tracked by the bond dimension $\chi$. The natural next steps are to push the sampled-subspace reconstruction into
the two-dimensional and multi-orbital regimes where the higher-order hardness lives, and to make the
\emph{basis-dependent} cost---the bond dimension $\chi$ in the sampler's own frame, and the generic
high-rank non-Gaussianity above it---not the orbital-invariant one-body magic that is so easily
measured---the object one estimates when deciding whether a quantum sampler is worth its shots.

\appendix
% appendix sections are explicitly labelled "Appendix A/B/C"
\titleformat{\section}{\normalfont\large\bfseries}{Appendix~\thesection.}{0.6em}{}
% ===== Appendix: convergence + sample-complexity bound for the sampled spectral function =====
\section{Moment-exactness and a sample-complexity bound}
\label{app:b2}

We make the convergence of the bitstring-sampled spectral function precise. Fix the seed
$\ket{\phi}=c^{\dagger}_{p}\ket{\Psi_0}$ (the add-electron branch; the removal branch is identical),
with weight $\lVert\phi\rVert^{2}=\braket{\phi|\phi}$, and let
$A(\omega)=\sum_{n}\lvert\braket{n|\phi}\rvert^{2}\,\mathcal{L}_{\eta}\!\big(\omega-(E_n-E_0)\big)$
be the exact spectral function, where $\{\ket{n},E_n\}$ diagonalize $H$ in the $(N\!+\!1)$-particle
sector and $\mathcal{L}_{\eta}(x)=\tfrac{\eta/\pi}{x^{2}+\eta^{2}}$. For a sampled configuration set
$S$ let $P_S$ project onto $\operatorname{span}(S)$, let $H_S=P_SHP_S$ with eigenpairs
$\{\ket{\tilde m},\tilde E_m\}$, and let
$A_S(\omega)=\sum_{m}\lvert\braket{\tilde m|\phi}\rvert^{2}\,\mathcal{L}_{\eta}\!\big(\omega-(\tilde E_m-E_0)\big)$
be the reconstruction. Write $\mathcal{K}_K=\operatorname{span}\{\phi,H\phi,\dots,H^{K}\phi\}$ for the
order-$K$ Krylov space of the seed and $w_S=\lVert P_S\phi\rVert^{2}/\lVert\phi\rVert^{2}$ for the
captured spectral weight.

\begin{theorem}[Moment exactness, sampling capture, and error bound]
\label{thm:b2}
\emph{(i) Moment exactness.} Let
$d\mu(\omega)=\sum_n\lvert\braket{n|\phi}\rvert^{2}\,\delta(\omega-(E_n-E_0))$ and
$d\mu_S(\omega)=\sum_m\lvert\braket{\tilde m|\phi}\rvert^{2}\,\delta(\omega-(\tilde E_m-E_0))$ be the
\emph{unbroadened} ($\eta\to0$) Stieltjes measures whose Lorentzian convolutions are $A$ and $A_S$. If
$\operatorname{span}(S)\supseteq\mathcal{K}_K$, then $A_S$ reproduces the exact spectral moments through order $2K+1$,
\begin{equation}
\int \omega^{j}\,d\mu_S(\omega) \;=\; \int \omega^{j}\,d\mu(\omega) \;=\; \braket{\phi|(H-E_0)^{j}|\phi},
\qquad 0\le j\le 2K+1.
\end{equation}
The moments are taken on the point measure $d\mu$, not on the Lorentzian-broadened $A$ (whose moments of
order $\ge2$ diverge). Consequently, at fixed broadening $\eta>0$, the smoothed reconstruction converges
geometrically, $\lVert A-A_{\mathcal{K}_K}\rVert_{1}\le C(\eta)\,\rho(\eta)^{-K}$, where both the
prefactor $C(\eta)\to\infty$ and the rate $\rho(\eta)\to1^{+}$ as $\eta\to0^{+}$---the
resolution/convergence trade-off of Gauss--Lanczos quadrature on a Lorentzian-smoothed measure.

\emph{(ii) Sampling capture.} Draw $T$ bitstrings from the time-averaged Born mixture
$p(x)=\tfrac{1}{K{+}1}\sum_{k=0}^{K}\lvert\braket{x|e^{-iHt_k}\phi}\rvert^{2}/\lVert\phi\rVert^{2}$.
Every configuration with $p(x)\ge p_{\min}$ is included in $S$ with probability at least
$1-(1-p_{\min})^{T}$; hence capturing all configurations carrying mixture-weight $\ge\varepsilon$
with confidence $1-\delta$ requires only
\begin{equation}
T \;=\; O\!\big(\varepsilon^{-1}\log(|S|/\delta)\big)
\end{equation}
shots --- polynomial in the target subspace size and independent of the Hilbert-space dimension.

\emph{(iii) Error bound.} For any $S$,
\begin{equation}
\lVert A-A_S\rVert_{1}\;\le\;2\,\lVert\phi\rVert^{2}\,(1-w_S)\;+\;\underbrace{C\,\rho^{-K_S}}_{\text{in-subspace quadrature}},
\end{equation}
where $K_S$ is the largest Krylov order contained in $\operatorname{span}(S)$.
\end{theorem}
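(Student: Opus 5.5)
The plan is to treat (i) as the classical Gauss--Lanczos quadrature statement, (ii) as a coupon-collector/union bound, and (iii) as a triangle inequality combining a weight-leakage term with (i).

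\emph{Step 1 (moment exactness).} Assume $\operatorname{span}(S)\supseteq\mathcal K_K$; in particular $\phi\in\operatorname{span}(S)$, so $\mu_S$ and $\mu$ are both built from the same seed. By induction on $j$ one shows $H_S^{j}\phi=H^{j}\phi$ for $0\le j\le K$: if $H_S^{j-1}\phi=H^{j-1}\phi$, then $H_S^{j}\phi=P_SH\,H^{j-1}\phi=P_SH^{j}\phi=H^{j}\phi$, because $H^{j}\phi\in\mathcal K_K\subseteq\operatorname{span}(S)$. Then for $j=a+b$ with $a,b\le K$,
\[
\braket{\phi|H_S^{j}|\phi}=\braket{H_S^{a}\phi|H_S^{b}\phi}=\braket{H^{a}\phi|H^{b}\phi}=\braket{\phi|H^{j}|\phi},
\]
which covers $j\le 2K$. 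The extra order $2K+1$ follows from $\braket{H^K\phi|P_SHP_S|H^K\phi}=\braket{H^K\phi|H|H^K\phi}$. Expanding $(H-E_0)^j$ binomially transfers these identities to the shifted moments of $d\mu$ and $d\mu_S$.

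\emph{Step 2 (geometric convergence at fixed $\eta$).} Both measures are supported on $[E_{\min}-E_0,E_{\max}-E_0]$, by Cauchy interlacing of $H_S$ inside $H$. They also agree on every polynomial of degree $\le 2K+1$. Hence for each $\omega$,
\[
A(\omega)-A_S(\omega)=\int \bigl[\mathcal L_\eta(\omega-x)-p_\omega(x)\bigr]\,d(\mu-\mu_S)(x)
\]
for any polynomial $p_\omega$ of that degree. The function $x\mapsto\mathcal L_\eta(\omega-x)$ is analytic in a Bernstein ellipse around the spectral interval whose size is set by the distance $\eta$ to the poles $\omega\pm i\eta$. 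Chebyshev approximation therefore gives an error $\le C'(\eta)\rho(\eta)^{-2K}$, uniformly in $\omega$ near the spectrum, with $\rho(\eta)\to1^+$ as $\eta\to0$. For large $|\omega|$ the matched low moments make the difference of Lorentzian mixtures decay like $|\omega|^{-3}$, so the $L_1$ tails are summable. I expect this uniform-in-$\omega$ $L_1$ control to be the main technical obstacle. I would split $\omega$ into a bounded window, handled by the Bernstein-ellipse estimate times the window length, and its complement, handled by a Taylor expansion of $\mathcal L_\eta$ in $x/\omega$ using moment matching. The result is the claimed $C(\eta)\rho(\eta)^{-K}$ after absorbing constants.

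\emph{Step 3 (sampling capture).} For a fixed $x$ with $p(x)\ge\varepsilon$, the probability that $x$ is missed in $T$ independent draws is $(1-\varepsilon)^T\le e^{-\varepsilon T}$. The number of such $x$ is at most $\min(|S|,1/\varepsilon)$, since $p$ is normalized. A union bound then gives failure probability $\le |S|e^{-\varepsilon T}\le\delta$ once $T\ge\varepsilon^{-1}\log(|S|/\delta)$. No quantity here refers to the Hilbert-space dimension.

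\emph{Step 4 (error bound).} Write $\phi=P_S\phi+Q_S\phi$ and note that $A_S$ depends on $\phi$ only through $P_S\phi$. Insert the intermediate spectral function $A'$ of the full $H$ with seed $P_S\phi$:
\[
\lVert A-A_S\rVert_1\le\lVert A-A'\rVert_1+\lVert A'-A_S\rVert_1 .
\]
The second term is Step 2 applied to the seed $P_S\phi$, with $K_S$ the largest Krylov order of that seed contained in $\operatorname{span}(S)$; this gives $C\rho^{-K_S}$. For the first term, $\lVert\mathcal L_\eta\rVert_1=1$ reduces it to the total variation of $\mu_\phi-\mu_{P_S\phi}=\mu_{Q_S\phi}+(\text{cross term})$. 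Its diagonal part contributes $\lVert Q_S\phi\rVert^2=\lVert\phi\rVert^2(1-w_S)$. The cross term is bounded by $2\lVert P_S\phi\rVert\lVert Q_S\phi\rVert$ via Cauchy--Schwarz, which is only $O(\sqrt{1-w_S})$. The point to be careful about is therefore reaching the stated linear form $2\lVert\phi\rVert^2(1-w_S)$. I would first try to show that the cross term is orthogonal to the broadened kernel up to quadrature error, so that it can be absorbed into the $C\rho^{-K_S}$ term. Failing that, I would record the bound with the $\sqrt{1-w_S}$ cross contribution, which is the honest general form and which still vanishes as $w_S\to1$.
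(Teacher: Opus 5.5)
Steps 1--3 are sound and follow the paper's route: moment matching via the Gauss-quadrature mechanism, geometric convergence on the Lorentzian test function, and a coupon-collector union bound. Your Step 1 is cleaner than the paper's appeal to Lanczos quadrature, because it handles $\operatorname{span}(S)\supsetneq\mathcal K_K$ directly. In Step 2 the window/tail split is unnecessary. For every real $\omega$ the poles $\omega\pm i\eta$ lie at distance $\eta$ from the real axis, so a single Bernstein ellipse thin enough to stay in $|\mathrm{Im}\,z|<\eta/2$ works uniformly in $\omega$. On that ellipse $\max_z|\mathcal L_\eta(\omega-z)|$ decays like $\omega^{-2}$, so integrating the pointwise Chebyshev bound over $\omega$ already gives $C(\eta)\rho(\eta)^{-2K}$.

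The gap is Step 4: as written it does not prove (iii). Your fallback, $\lVert\phi\rVert^2(1-w_S)+2\lVert\phi\rVert^2\sqrt{w_S(1-w_S)}+C\rho^{-K_S}$, is a strictly weaker statement, and the linear form is only something you would try. The missing observation is that the cross term has vanishing low moments. Let $\Pi_B$ be the spectral projector of $H-E_0$ onto $B\subset\mathbb R$. Write $\mu_\phi=\mu_{P_S\phi}+\mu_{Q_S\phi}+\nu$ with $\nu(B)=2\,\mathrm{Re}\,\langle P_S\phi|\Pi_B|Q_S\phi\rangle$; this form also handles degenerate levels. Then
\[
\int\omega^{j}\,d\nu(\omega)=2\,\mathrm{Re}\,\bigl\langle (H-E_0)^{j}P_S\phi\,\big|\,Q_S\phi\bigr\rangle=0,\qquad 0\le j\le K_S,
\]
because $(H-E_0)^{j}P_S\phi\in\mathcal K_{K_S}(P_S\phi)\subseteq\operatorname{span}(S)\perp Q_S\phi$.

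Hence $(\mathcal L_\eta*\nu)(\omega)=\int[\mathcal L_\eta(\omega-x)-p_\omega(x)]\,d\nu(x)$ for any polynomial $p_\omega$ of degree at most $K_S$. Apply your Step 2 estimate, now with $K_S+1$ vanishing moments instead of $2K+2$ matched ones, and use $\lVert\nu\rVert_{\mathrm{TV}}\le2\lVert P_S\phi\rVert\,\lVert Q_S\phi\rVert\le\lVert\phi\rVert^2$. This gives $\lVert\mathcal L_\eta*\nu\rVert_1\le C(\eta)\rho(\eta)^{-K_S}$.

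So your first instinct is correct: the cross term is absorbed into the quadrature term. You obtain $\lVert A-A_S\rVert_1\le\lVert\phi\rVert^2(1-w_S)+C\rho^{-K_S}$, which implies (iii) with factor $1$ in place of $2$. Drop the square-root fallback.

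Two further remarks. First, reading $K_S$ as the Krylov order of $P_S\phi$ rather than of $\phi$ is what makes (iii) non-vacuous: if even $\mathcal K_0(\phi)\subseteq\operatorname{span}(S)$, then $w_S=1$. That reading is also exactly what powers the moment argument above. Second, the paper's own sketch passes over this cross term; its claim that the retained mass reproduces $A$ up to the order-$K_S$ quadrature error is justified precisely by this vanishing-moment argument.
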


\begin{proof}[Proof sketch]
(i) The nonorthogonal Lanczos/Krylov construction makes $A_S$ a Gaussian-quadrature approximation
of the Stieltjes measure $d\mu(\omega)=\sum_n\lvert\braket{n|\phi}\rvert^2\,\delta(\omega-(E_n-E_0))$;
an order-$(K{+}1)$ quadrature matches the first $2K+1$ moments
$\mu_j=\braket{\phi|(H-E_0)^{j}|\phi}$ exactly, and the quadrature error on a smooth test function decays
geometrically with the number of matched moments for a measure of bounded support (the classical
Gauss--Lanczos quadrature convergence result~\cite{golub2010}). (ii) is a union bound over configurations of the
coupon-collector event ``$x$ never sampled in $T$ draws,'' whose probability is $(1-p(x))^T\le
e^{-p(x)T}$; setting $e^{-\varepsilon T}\le\delta/|S|$ gives the stated $T$. (iii) Split
$\phi=P_S\phi+(\mathbb{1}-P_S)\phi$: the discarded component contributes total spectral weight
$\lVert(\mathbb{1}-P_S)\phi\rVert^{2}=\lVert\phi\rVert^{2}(1-w_S)$, and since $A$ and $A_S$ are each
nonnegative with $\int A=\lVert\phi\rVert^2$, $\int A_S=\lVert P_S\phi\rVert^2$, the $L_1$ discrepancy
of the missing mass is bounded by twice that weight; the retained mass reproduces $A$ up to the
order-$K_S$ quadrature error of part~(i).
\end{proof}

The three parts compose into the operational statement of the paper: the quantum computer supplies,
in $\mathrm{poly}(|S|)$ shots, a subspace whose captured weight $w_S\to1$. This composition invokes the
empirically observed residue concentration of Sec.~\ref{sec:method} (that the high-Born-weight
configurations coincide with the Krylov-spanning, high-seed-overlap set): absent it, parts~(i) and~(iii)
certify convergence \emph{given} a captured Krylov-adequate subspace, but not that $\mathrm{poly}(|S|)$
sampling \emph{finds} it. Whether that subspace is then
\emph{small}---whether $|S|$ is affordable---is set by the higher-order structure of the moments
$\mu_{j}=\braket{\phi|(H-E_0)^{j}|\phi}$ (Sec.~\ref{sec:resource}), a basis-dependent quantity that the
one-body magic $\mathcal F_1$ provably cannot see.

\section{One-body magic cannot bound the determinant support}
\label{app:theorem}
\begin{proposition}[Coefficient-insensitivity obstruction]
\label{prop:obstruction}
Let $\mu$ be any functional of a fixed finite set of reduced density matrices, continuous in the
state. Then $\mu$ cannot lower-bound the \emph{raw} determinant support (the unthresholded Slater rank);
like the CP tensor rank, this count is not lower-semicontinuous~\cite{desilvalim2008}: for every $c>0$ there exist states with $\mu$ arbitrarily small and rank arbitrarily large.
\end{proposition}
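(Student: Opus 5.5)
The plan is to prove the proposition with the explicit perturbative family already previewed in Sec.~\ref{sec:resource}, using only continuity. I will read ``$\mu$ small'' as ``$\mu$ close to its value on a single determinant''. For the magic and correlation functionals of interest (e.g.\ $\mathcal F_1$, any $\mathcal F_k$, one-body entropies) that value is $0$, because every Slater determinant is Gaussian. For a general $\mu$ I will normalize by $\mu(D_0)$ and assume $\mu(D_0)=0$.

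\emph{Construction.} Fix a target rank $R=K+1$. Choose enough modes that the fixed particle-number sector contains $K+1$ distinct computational-basis determinants $\ket{D_0},\dots,\ket{D_K}$; this only requires the mode count to grow with $K$. For $\epsilon>0$ set
\[
\ket{\Psi_\epsilon}=\mathcal N_\epsilon\Bigl(\ket{D_0}+\epsilon\sum_{k=1}^{K}\ket{D_k}\Bigr),\qquad \mathcal N_\epsilon=(1+K\epsilon^2)^{-1/2}.
\]
In the fixed basis, $\ket{\Psi_\epsilon}$ has exactly $K+1$ nonzero coefficients for every $\epsilon\neq0$. So its raw support equals $K+1$, independent of $\epsilon$.

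\emph{Continuity step.} Every $r$-RDM is a quadratic form in the amplitudes, $\gamma^{(r)}_{\Psi}=\mathrm{tr}_{\bar r}\ket\Psi\bra\Psi$. Since $\lVert\Psi_\epsilon-D_0\rVert=O(\sqrt K\,\epsilon)$, the bound $\lVert\gamma^{(r)}_{\Psi_\epsilon}-\gamma^{(r)}_{D_0}\rVert_1\le 2\lVert\Psi_\epsilon-D_0\rVert$ holds for each $r$ in the fixed finite set. Continuity of $\mu$ then gives $\mu(\Psi_\epsilon)\to\mu(D_0)=0$ as $\epsilon\to0$. Hence for every $c>0$ and every $R$ there is $\epsilon$ with $\mu(\Psi_\epsilon)<c$ while the rank is $R$.

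This rules out any bound of the form $\mathrm{rank}\le f(\mu)$ with $f$ finite near $\mu(D_0)$, and in particular any lower bound of $\mu$ by an increasing function of the rank. The mechanism is that the set $\{\mathrm{rank}\le R\}$ is not closed and rank is not lower-semicontinuous, exactly as for CP rank~\cite{desilvalim2008}. No uniformity in $K$ is needed, since $\epsilon$ is chosen after $K$.

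\emph{Expected obstacle.} The delicate point is which ``rank'' is meant. For the fixed-basis support used throughout the paper, the argument above is complete. If instead ``Slater rank'' means the minimum over orbital bases, an orbital rotation might compress $\ket{\Psi_\epsilon}$. The fix I would try first is to choose the $D_k$ as distinct excitations across a fixed mode bipartition, so that $\ket{\Psi_\epsilon}$ has orbital-Schmidt rank $K+1$ for all $\epsilon\neq0$. I would then invoke the Schmidt-rank-versus-support bound of Prop.~\ref{prop:chibound} for rotations respecting the cut. A basis-free lower bound, such as a Gaussian/Slater rank of $K+1$, would need a separate independence argument, and I expect that to be the hardest step.
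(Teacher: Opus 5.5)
Your proof is correct and is the paper's own argument. It uses the same family $\ket{D_0}+\epsilon\sum_{k=1}^{K}\ket{D_k}$, and continuity of the reduced density matrices forces $\mu(\Psi_\epsilon)\to\mu(D_0)$; your normalization $\mu(D_0)=0$ is what the paper tacitly uses via $\mathcal F_1(D_0)=0$. The support stays pinned at $K+1$ for every $\epsilon\neq0$, and since the paper's count is the fixed-basis one (Prop.~\ref{prop:chibound}), the basis-minimized variant in your ``expected obstacle'' never arises.

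One correction to your gloss on the mechanism. The fixed-basis support is in fact lower-semicontinuous, and $\{\mathrm{rank}\le R\}$ is closed, being a finite union of coordinate subspaces. What your family exhibits is the support dropping in the limit ($K+1\to1$), which is a failure of \emph{upper} semicontinuity at $\ket{D_0}$. So the CP-rank analogy you repeat from the statement points the wrong way, although nothing in your proof depends on it.
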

\begin{proof}
Rank is a coefficient-insensitive count. Take
$\ket{\Psi_\epsilon}=\ket{D_0}+\epsilon\sum_{k=1}^{K}\ket{D_k}$ with the $D_k$ distinct. As
$\epsilon\to0$ every reduced density matrix converges to that of $\ket{D_0}$, so any continuous
$\mu\to\mu(\ket{D_0})$ (for $\mathcal F_1$, to $0$), while the rank remains $\Theta(K)$.
\end{proof}
\noindent The obstruction is about the unthresholded rank; the operative, coefficient-\emph{sensitive}
support $|\mathcal S|_\varepsilon$ of Sec.~\ref{sec:resource} collapses to $1$ along the same
$\epsilon\to0$ path, so continuity alone does not obstruct it. What obstructs $|\mathcal S|_\varepsilon$
itself is the orbital-rotation symmetry---no invariant functional survives it---together with the two
equal-weight witnesses below, whose amplitudes are $\Theta(1)$ and therefore threshold-robust.
Two explicit witnesses rule out the natural repairs. The two-determinant cat
$(\ket{1\cdots10\cdots0}+\ket{0\cdots01\cdots1})/\sqrt2$ has $\gamma=\tfrac12\mathbb 1$, hence
$\mathcal F_1=2n_{\rm o}$ (maximal) yet Slater rank $2$, so no $\mathcal F_1$-monotone bounds $|\mathcal S|$
from below; a paired (BCS) Gaussian state has Gaussian rank $1$ yet fractional occupations, so no
occupation functional bounds it from above. The only computable lower bounds on $|\mathcal S|$ are the
basis-dependent orbital-Schmidt (matrix-flattening) ranks~\cite{schollwock2011}---the tensor-network
bond dimension---to which any Gaussian-invariant magic measure is blind. That lower bound is exact:

\begin{proposition}[The bond dimension lower-bounds the support]
\label{prop:chibound}
For any state and any bipartition of the spin-orbitals into $(\mathrm L,\mathrm R)$, the Schmidt rank
$\chi$ across the cut (the minimal matrix-product-state bond dimension there) satisfies
\begin{equation}
\chi \;\le\; |\mathcal S|,
\end{equation}
where $|\mathcal S|$ is the number of fixed-basis Slater determinants of nonzero amplitude. Hence the
minimal bond dimension is a rigorous, basis-dependent lower bound on the determinant support. The bound is
one-sided: the geminal witness of Thm.~\ref{thm:witness} has $\chi=2$ yet $|\mathcal S|_{\mathrm{pair}}=2^{K}$,
so $|\mathcal S|$ can exceed $\chi$ exponentially in a generic basis (and descends toward $\chi$ in the
natural-orbital basis), while the orbital-invariant one-body magic $\mathcal F_1$ bounds neither
(Prop.~\ref{prop:obstruction}).
\end{proposition}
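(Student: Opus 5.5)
The plan is to pass to the Jordan--Wigner (occupation-number) picture, in which every fixed-basis Slater determinant is a product computational-basis string $\ket{x}=\ket{x_{\mathrm L}}\otimes\ket{x_{\mathrm R}}$ over the ordered spin-orbitals. First I will fix an ordering in which the orbitals of $\mathrm L$ precede those of $\mathrm R$; any bipartition of modes can be brought to this form by relabelling. Under this ordering the Jordan--Wigner sign of each determinant is a fixed $\pm1$ phase attached to $\ket{x}$, so it does not change the support. It also does not change the Schmidt rank, because the fermionic and qubit tensor structures agree for parity-definite states; I will note in passing that both sides have definite parity, so the fermionic Schmidt decomposition coincides with the qubit one.

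Next I will write the state on its support as $\ket{\Psi}=\sum_{x\in\mathcal S}a_x\ket{x_{\mathrm L}}\ket{x_{\mathrm R}}$ with $a_x\neq0$. The coefficient matrix $M_{x_{\mathrm L},x_{\mathrm R}}$ across the cut then has exactly $|\mathcal S|$ nonzero entries: distinct $x$ give distinct pairs $(x_{\mathrm L},x_{\mathrm R})$, so no two entries collide. The Schmidt rank equals $\operatorname{rank}M$, and a matrix's rank is at most its number of nonzero columns, which is at most its number of nonzero entries. This gives $\chi=\operatorname{rank}M\le|\mathcal S|$. Equivalently, $\ket{\Psi}$ is a sum of $|\mathcal S|$ product vectors, and the Schmidt rank is subadditive. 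Identifying $\chi$ with the minimal MPS bond dimension at that cut is the standard Vidal/Schollw\"ock fact, which I will cite from \cite{schollwock2011}. Since the bound holds for every cut, it holds in particular for the maximal one, so the minimal bond dimension satisfies the same inequality.

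For the one-sidedness clause I will not reprove anything; I will point to Thm.~\ref{thm:witness} ($\chi=2$, $|\mathcal S|=2^K$) and Prop.~\ref{prop:obstruction}. Basis dependence needs only one remark: under an orbital rotation both $\chi$ and $|\mathcal S|$ change, while the inequality continues to hold basis by basis.

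The only delicate step is the fermionic sign and ordering issue, namely ensuring that the cut is a genuine tensor-product cut. I expect to handle it by choosing the ordering as above and using parity superselection; beyond that, the argument is linear algebra.
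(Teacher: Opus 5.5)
Your proposal is correct and follows essentially the same route as the paper. Both write the state over its support, note that each fixed-basis determinant factorizes across the cut so the coefficient matrix has exactly $|\mathcal S|$ nonzero entries, bound its rank (the Schmidt rank $\chi$) by that count, and defer one-sidedness to Thm.~\ref{thm:witness}; your Jordan--Wigner ordering and parity remarks only make explicit what the paper takes for granted in writing $\ket{D}=\ket{D_{\mathrm L}}\otimes\ket{D_{\mathrm R}}$ (strictly, it is the Schmidt vectors, not the two sides, that can be chosen parity-definite, but the nonzero-entry count needs none of this).
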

\begin{proof}
Write $\ket{\Psi}=\sum_{D\in\mathcal S}c_D\ket{D}$ over its support. Each determinant factorizes across the
cut, $\ket{D}=\ket{D_{\mathrm L}}\otimes\ket{D_{\mathrm R}}$, so the matricization
$A_{D_{\mathrm L},D_{\mathrm R}}=c_D$ has at most $|\mathcal S|$ nonzero entries; its rank---the Schmidt
rank $\chi$---is therefore at most $|\mathcal S|$.
\end{proof}
\noindent Since a matrix-product-state simulation costs $\mathrm{poly}(\chi)$, Prop.~\ref{prop:chibound}
says that \emph{in the sampler's own basis a tensor network is never less compact than the sampled
subspace}: this is why we make no beyond-classical claim in one dimension, where $\chi$ is small, and why
any advantage must live in the higher-entanglement regime where $\chi$ itself---and with it the classical
tensor-network cost---grows (the chain-to-ladder jump of Fig.~\ref{fig:ladder}). Empirically the bound is
also tight: across the $74$ exact ground states of Fig.~\ref{fig:master} the support tracks $\chi$
(pooled Spearman $\rho=0.72$; $\rho=0.90$ within the $n=38$ molecular suite), so $\chi$ is not only a
rigorous floor but the operative empirical correlate of the cost (though, per the geminal witness, a
sometimes exponentially loose one), whereas $\mathcal F_1$ is decoupled from it.

\subsection{A provably-simulable witness: magic without cost}
\label{app:witness}
The obstruction of Prop.~\ref{prop:obstruction} is a continuity argument. We now sharpen it to a
constructive, physically realizable family on which \emph{every} order of fermionic magic is large yet
classical simulation is provably efficient---and, being a member of a recently characterized simulable
class, the witness is anchored to an external hardness result rather than to our own construction. Let
$\ket{\Psi_K}$ be an antisymmetrized product of $K$ strongly orthogonal singlet geminals
(perfect-pairing / generalized-valence-bond),
\begin{equation}
\ket{\Psi_K}\;=\;\bigotimes_{i=1}^{K}\Bigl(\cos\theta\,\hat b^{\dagger}_{i\uparrow}\hat b^{\dagger}_{i\downarrow}
+\sin\theta\,\hat a^{\dagger}_{i\uparrow}\hat a^{\dagger}_{i\downarrow}\Bigr)\ket{0},
\label{eq:apsg}
\end{equation}
on $2K$ spatial orbitals ($4K$ spin-orbital modes, $N=2K$ electrons), each geminal occupying a disjoint
bonding/antibonding pair $(b_i,a_i)$.

\begin{theorem}[Decoupling witness]\label{thm:witness}
At maximal pairing $\theta=\pi/4$ the family \eqref{eq:apsg} satisfies, exactly,
\begin{equation}
\mathcal F_1=2N_{\mathrm u}=4K,\qquad \mathcal F_2=4K,\qquad
|\mathcal S|_{\mathrm{pair}}=2^{K},\qquad \chi=2,
\end{equation}
where $\mathcal F_{1,2}$ are the order-$1$ and order-$2$ antiflatness of Eq.~\eqref{eq:faf},
$|\mathcal S|_{\mathrm{pair}}$ is the determinant support in the pairing basis, and $\chi$ is the minimal
matrix-product-state bond dimension. Consequently one-body magic $\mathcal F_1$, higher-order magic
$\mathcal F_2$, and determinant support $|\mathcal S|_{\mathrm{pair}}$ all diverge with $K$ while, in the
localized (pairing) basis, the state remains a bond-dimension-$2$ tensor network---classically
representable, contractible, and Born-samplable in $O(K)$: magic of every order is unbounded at fixed
$O(1)$ simulation cost. (For this paired class the additive-error estimation of amplitudes, overlaps, and
number correlators under free-fermionic dynamics is moreover classically efficient~\cite{oh2026}; we
invoke that result only for those primitives, and make no sampling claim about the rotated orbit
discussed below.)
\end{theorem}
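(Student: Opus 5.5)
The plan is to exploit the fact that $\ket{\Psi_K}$ in Eq.~\eqref{eq:apsg} is a tensor product of $K$ commuting, even-parity two-mode-pair factors on disjoint orbital blocks. Every quantity in the statement then factorizes over geminals, and each one reduces to a single four-spin-orbital block $(b_{i\uparrow},b_{i\downarrow},a_{i\uparrow},a_{i\downarrow})$. The pair creators $\hat b^\dagger_{i\uparrow}\hat b^\dagger_{i\downarrow}$ and $\hat a^\dagger_{i\uparrow}\hat a^\dagger_{i\downarrow}$ are bosonic, so they commute across blocks. Consequently no Jordan--Wigner sign appears, provided the modes are ordered block by block. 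I fix that ordering once and use it throughout.

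\emph{One-body data.} I first compute the 1-RDM in the pairing basis. Within a block, the two determinants $\ket{b_\uparrow b_\downarrow}$ and $\ket{a_\uparrow a_\downarrow}$ differ by a double excitation. Hence every off-diagonal element $\langle c^\dagger_p c_q\rangle$ with $p\neq q$ vanishes, both within a block and, by the product structure, between blocks. The state has fixed $N$, so the anomalous correlators vanish as well. Thus $\gamma$ is diagonal, with $g=\cos^2\theta$ on the $b$ modes and $g=\sin^2\theta$ on the $a$ modes; at $\theta=\pi/4$ every one of the $4K$ spin-orbital occupations equals $1/2$. The one-body collapse Eq.~\eqref{eq:collapse} then gives $\mathcal F_1=4\sum_i g_i(1-g_i)=4\cdot4K\cdot\tfrac14=4K$.

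The spin-restricted natural orbitals have spatial occupations $n=1$ for all $2K$ orbitals. This gives $N_{\mathrm u}=\sum_a n_a(2-n_a)=2K$, hence $2N_{\mathrm u}=4K$. For the higher orders I use the same Youla data: $\lambda_j=|2g_j-1|=0$ for all $j$. Therefore Eq.~\eqref{eq:faf} gives $\mathcal F_k=2n_{\rm o}-0=4K$ for every $k$, and in particular $\mathcal F_2=4K$. This is the "every order" claim, with $2n_{\rm o}=4K$ modes.

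\emph{Support and bond dimension.} Expanding the product yields $2^K$ distinct pairing-basis determinants, each with amplitude $\pm2^{-K/2}\neq0$, so $|\mathcal S|_{\mathrm{pair}}=2^K$. For $\chi$, I take the block ordering and examine the two kinds of cut.
\begin{itemize}
\item A cut between blocks separates a product state, so its Schmidt rank is $1$.
\item A cut inside a block factorizes as (Schmidt rank of that single geminal across the cut)$\,\times 1$. A single geminal is a two-term state, so this rank is at most $2$; the split $(b_\uparrow b_\downarrow\,|\,a_\uparrow a_\downarrow)$ attains exactly $2$.
\end{itemize}
This gives $\chi\le2$ with an explicit MPS of bond dimension $2$. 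Minimality follows because in \emph{any} ordering some cut separates the modes of an entangled geminal, and every nontrivial split of a geminal has rank $\ge2$. Hence $\chi=2$, consistent with Prop.~\ref{prop:chibound}.

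\emph{Efficiency.} Amplitudes are products of $K$ local factors, so an amplitude is evaluated in $O(K)$. Born sampling in the pairing basis draws each geminal independently, choosing $b$ or $a$ with probability $1/2$, again in $O(K)$; MPS contraction at $\chi=2$ is likewise $O(K)$. The parenthetical on additive-error estimation is simply cited from Ref.~\cite{oh2026} and is not reproved.

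I expect the main obstacle to be definitional rather than computational. One issue is making precise "minimal MPS bond dimension": taken as minimal over orderings of the maximal Schmidt rank over cuts, it requires the lower-bound argument above. The other is checking that fermionic parity signs do not spoil the factorization when the Schmidt rank is computed in a mode ordering that interleaves blocks. I would handle the latter by noting that each factor has even parity, so the Jordan--Wigner string acts trivially on each geminal's contribution.
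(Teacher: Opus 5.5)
Your proposal is correct and follows the paper's proof essentially step for step. Both use the product structure over the $K$ disjoint blocks, the diagonal 1-RDM with occupations $\cos^2\theta,\sin^2\theta$ (all $1/2$ at $\theta=\pi/4$) giving $\mathcal F_1=2N_{\mathrm u}=4K$ via Eq.~\eqref{eq:collapse}, the vanishing covariance ($\lambda_j=0$) giving $\mathcal F_k=4K$ for every $k$, the $2^K$ pairing-basis determinants, and a bond-dimension-$2$ MPS in the block ordering. Beyond the paper, you also argue that no mode ordering beats $\chi=2$, since every nontrivial split of a cat geminal has Schmidt rank exactly $2$; the paper only asserts $\chi=2$ ``in this ordering.''
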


\begin{proof}[Proof]
Each geminal is supported on its own four spin-orbitals, so \eqref{eq:apsg} is a product state across
the $K$ blocks; its reduced one-particle density matrix is block diagonal with spatial natural
occupations $(n_{b_i},n_{a_i})=(2\cos^2\theta,\,2\sin^2\theta)$. Hence
$N_{\mathrm u}=\sum_a n_a(2-n_a)=K\!\left[2\cos^2\theta(2-2\cos^2\theta)+2\sin^2\theta(2-2\sin^2\theta)\right]
=8K\cos^2\theta\sin^2\theta$, giving $N_{\mathrm u}=2K$ and $\mathcal F_1=2N_{\mathrm u}=4K$ at
$\theta=\pi/4$ by the collapse identity Eq.~\eqref{eq:collapse}. The Majorana covariance is likewise
block diagonal. At $\theta=\pi/4$ a single maximally-paired geminal is the two-determinant cat, whose
covariance \emph{vanishes identically} ($\Gamma=0$): all its singular values are zero, so every order
saturates its maximum, $\mathcal F_k=2n_{\rm o}=4$ per geminal for \emph{all} $k$, giving
$\mathcal F_2=4K$. This equality of orders is a knife-edge of maximal pairing; away from it the orders
separate ($\theta=\pi/6$ gives $\mathcal F_1=3,\ \mathcal F_2=3.75$ per pair, both large on the same
family). In the pairing basis each geminal is a two-configuration superposition, so the product carries
$2^{K}$ determinants, $|\mathcal S|_{\mathrm{pair}}=2^{K}$. Finally, as a product of two-orbital factors
the state is a matrix-product state whose every bond crosses at most one geminal, so $\chi=2$ in this
ordering; contraction and Born sampling then cost $O(K)$. All four values are reproduced to machine
precision by exact diagonalization for $K=1,\dots,4$.
\end{proof}

\noindent The witness (Fig.~\ref{fig:witness}) is not the trivial ``a product state is easy''; its force is in the \emph{orbit}.
Because $\mathcal F_1$ and $\mathcal F_2$ are invariant under number-conserving free-fermionic
(orbital-rotation) unitaries $U$~\cite{sierant2026}, the entire orbit $\{U\ket{\Psi_K}\}$ carries the
\emph{same} $(\mathcal F_1,\mathcal F_2)=(4K,4K)$ while the classical sampling cost in the fixed
computational basis---the bond dimension and Born-sampling time---ranges from $O(K)$ at the localized
representative up to exponential (the minimal determinant support is $|\mathcal S|_{\mathrm{pair}}=2^{K}$
throughout). Two conclusions follow, each stated
in the only direction the construction supports. \emph{(i) A Gaussian-invariant fermionic magic is not a
sufficient certificate of hardness and does not lower-bound the cost:} at the localized representative
$\chi=2$ (trivially easy) yet $\mathcal F_k=4K\to\infty$; large magic confers no hardness. \emph{(ii)
$\mathcal F_k$ does not determine the cost:} it is constant on an orbit whose sampling cost varies from
$O(K)$ to exponential, so the cost cannot be a function of it. Indeed in a generic rotated basis the
computational-basis \emph{sampling} of $\ket{\Psi_K}$ is plausibly classically hard---precisely the
free-fermionic-linear-optics-with-magic-input regime of the Fermion Sampling advantage
scheme~\cite{fermionsampling2022}---while $\mathcal F_k$ is unchanged along the orbit: any hardness that
appears is a property of the sampler's fixed frame, invisible to the invariant magic. The construction is
silent on whether magic is \emph{necessary} for hardness (it may well be, and the paired stratum's
residual structure is where~\cite{oh2026} locates tractability); it refutes only that magic is
\emph{sufficient} for, or \emph{determines}, the cost---the coefficient-insensitivity of
Prop.~\ref{prop:obstruction} made constructive. What does track the cost is the bond dimension in the
operative sampling basis---not orbit-invariant. It \emph{lower-bounds} $|\mathcal S|$ and can be
exponentially loose (this very witness has $\chi=2$ while $|\mathcal S|_{\mathrm{pair}}=2^{K}$), yet
across the realized molecular suite and Hubbard set it co-varies with $|\mathcal S|$ (Spearman
$\rho=0.90/0.57$) and with the natural-orbital participation entropy that upper-bounds Gaussian-basis
simulation cost~\cite{tarabunga2026}. Fermionic magic diagnoses correlation; entanglement in the sampler's frame
lower-bounds and tracks its price.

\begin{figure}[htbp]\centering% Native pgfplots figure — decoupling witness. F1, F2 and |S| are exact from apsg_witness.py.
% chi=2 is the MINIMAL (theorem) MPS bond dimension of the APSG pairing basis, constant in K; the
% json 'chi' field records 2,1,2,1 because the spin-orbital cut at position 2K lands between geminal
% blocks for even K (rank 1) vs bisecting one for odd K (rank 2) -- a cut-placement artifact, not the
% minimal bond dimension. See the caption in theorem_b2.tex for the exact statement.
% Fonts/line style inherit the document (lmodern) => harmonises with the paper.
\begin{tikzpicture}
\definecolor{cF1}{HTML}{D55E00}       % one-body magic F1 (orange, matches scaling's F1)
\definecolor{cF2}{RGB}{123,63,160}    % higher-order magic
\definecolor{cS}{HTML}{0072B2}        % determinant support |S| (blue, matches scaling's |S|/fraction)
\definecolor{cX}{RGB}{38,120,74}      % bond dimension (cost)
\begin{groupplot}[
  group style={group size=2 by 1, horizontal sep=1.7cm},
  width=0.52\linewidth, height=6.1cm, paperaxis,
  title style={at={(0,1)},anchor=south west,font=\bfseries\small,yshift=1pt},
]
% ---------- (a) scaling with K ----------
\nextgroupplot[
  ymode=log, xmin=0.6, xmax=4.5, ymin=0.8, ymax=48,
  xtick={1,2,3,4}, ytick={1,2,4,8,16,32}, yticklabels={1,2,4,8,16,32},
  xlabel={number of geminals $K$}, ylabel={resource value}, title={(a)},
  legend style={at={(0.5,0.985)},anchor=north,legend columns=2,font=\scriptsize,
                /tikz/every even column/.append style={column sep=6pt}},
  legend cell align=left,
]
\path[fill=cX,fill opacity=0.08] (axis cs:0.6,0.8) rectangle (axis cs:4.5,2);
\node[font=\scriptsize\itshape,align=center] at (axis cs:3.15,1.13)
  {classically tractable\\(bond dim.\ $\chi=2$)};
\addplot[cS,mark=*,mark size=1.9pt,line width=1pt] coordinates {(1,2)(2,4)(3,8)(4,16)};
\addlegendentry{support $|\mathcal{S}|=2^{K}$}
\addplot[cF1,mark=square*,mark size=1.7pt,line width=1pt] coordinates {(1,4)(2,8)(3,12)(4,16)};
\addlegendentry{$\mathcal{F}_1=2N_{\mathrm u}=4K$}
\addplot[cF2,mark=triangle*,mark size=2.3pt,line width=1pt,dashed] coordinates {(1,4)(2,8)(3,12)(4,16)};
\addlegendentry{$\mathcal{F}_2=4K$}
\node[font=\scriptsize,black,anchor=west] at (axis cs:0.72,6.7) {$\mathcal{F}_1\!=\!\mathcal{F}_2$ (here $\theta{=}\tfrac{\pi}{4}$)};
\addplot[cX,mark=diamond*,mark size=2.2pt,line width=1pt] coordinates {(1,2)(2,2)(3,2)(4,2)};
\addlegendentry{bond dim.\ $\chi=2$}
% ---------- (b) theta sweep at K=3 ----------
\nextgroupplot[
  xmin=0, xmax=0.5, ymin=0, ymax=14.5,
  xtick={0,0.1,0.2,0.3,0.4,0.5}, ytick={0,2,4,6,8,10,12,14},
  xlabel={geminal pairing angle $\theta/\pi$}, ylabel={magic (order 1 and 2)}, title={(b)},
  legend style={at={(0.97,0.97)},anchor=north east,font=\scriptsize},
  legend cell align=left,
]
\draw[black!30,dashed,line width=0.5pt] (axis cs:0.25,0) -- (axis cs:0.25,12.6);
\addplot[cX,line width=1pt,forget plot] coordinates {(0,2)(0.5,2)};
\node[font=\scriptsize,anchor=south,align=center] at (axis cs:0.25,0.4)
  {$\chi=2$ throughout (tractable $\forall\,\theta$)};
\addplot[cF1,mark=square*,mark size=1.7pt,line width=1pt]
  coordinates {(0.02,0.1885)(0.10,4.1459)(0.25,12)(0.40,4.1459)(0.50,0)};
\addlegendentry{$\mathcal{F}_1=2N_{\mathrm u}$}
\addplot[cF2,mark=triangle*,mark size=2.3pt,line width=1pt,dashed]
  coordinates {(0.02,0.3740)(0.10,6.8594)(0.25,12)(0.40,6.8594)(0.50,0)};
\addlegendentry{$\mathcal{F}_2$}
\node[font=\scriptsize,anchor=north] at (axis cs:0.25,13.9) {maximal pairing};
\end{groupplot}
\end{tikzpicture}
\caption{\textbf{The decoupling witness (Theorem~\ref{thm:witness}).} Antisymmetrized products of
strongly orthogonal geminals, shown in their localized (pairing) basis, where the minimal bond dimension
is $\chi=2$ and the state is a classically tractable tensor network. \textbf{(a)}~As the number of
geminals $K$ grows, one-body magic $\mathcal F_1=2N_{\mathrm u}=4K$, higher-order magic
$\mathcal F_2=4K$, and pairing-basis determinant support $|\mathcal S|=2^{K}$ all diverge, yet the bond
dimension never exceeds $\chi=2$: the localized state never leaves the tractable region while magic of every
order runs to infinity. \textbf{(b)}~At $K=3$, sweeping the pairing angle drives both $\mathcal F_1$ and
$\mathcal F_2$ to their maximum at $\theta=\pi/4$---where the covariance vanishes and all orders
saturate---and $\mathcal F_2>\mathcal F_1$ off the peak, while $\chi=2$ throughout. Because
$\mathcal F_k$ is orbital-rotation invariant, the same values persist on the whole orbit, along which the
sampling cost ranges from $O(K)$ to exponential (Thm.~\ref{thm:witness}): magic does not determine the
price.}\label{fig:witness}\end{figure}

\FloatBarrier  % keep the witness figure (Fig.~\ref{fig:witness}) inside Appendix B, before Appendix C
\section{Reproducibility data for the molecular suite}
\label{app:repro}
Table~\ref{tab:repro} lists, for every molecule of Table~\ref{tab:molecules}, the equilibrium and
dissociation bond lengths, the active space, the $N$-electron (neutral, ground-state) sector dimension $D$, the ground-state
determinant support $|\mathcal S|$ (the number of determinants carrying $\ge99\%$ of the FCI weight), and
the minimal matrix-product-state bond dimension $\chi$ at both geometries. These are the quantities that
enter the $\mathcal F_1$--$|\mathcal S|$ and $\chi$--$|\mathcal S|$ correlations of
Sec.~\ref{sec:resource}, pooled over the $n=38$ equilibrium-and-dissociation points (Spearman
$\chi$--$|\mathcal S|$ $\rho=0.90$, $\mathcal F_1$--$|\mathcal S|$ $\rho=0.86$; Pearson
$\mathcal F_1$--$\log_2|\mathcal S|$ $r=0.80$). The $|\mathcal S|/D$ reported in
Table~\ref{tab:molecules} is a \emph{distinct} quantity---the fraction of the $(N{+}1)$ sector used in
the spectral \emph{reconstruction}, not the ground-state support tabulated here. Every active-space FCI
energy is converged to better than $10^{-12}$~Ha (the largest residual across all $38$ points is
$1.1\times10^{-13}$~Ha), comfortably below the $10^{-5}$~Ha quoted in the main text; the ground-state
energies, full Cartesian geometries, and orbital active-space indices are provided as ancillary files.

\begin{table}[htbp]\centering\small
\caption{\textbf{Reproducibility data for the nineteen-molecule suite} (cc-pVDZ, valence bond-breaking
active spaces). $R_{\rm eq},R_{\rm diss}$ in \AA{}; for the polyatomics ($\mathrm{BeH_2},\mathrm{H_2O},
\mathrm{NH_3},\mathrm{H_4},\mathrm{H_6}$) the representative varied bond length is shown and full
Cartesians are in the ancillary data. $D$: dimension of the $N$-electron (neutral, ground-state) sector,
in which the tabulated $|\mathcal S|$ lives; $|\mathcal S|$: ground-state determinant support ($\ge99\%$
FCI weight); $\chi$: minimal MPS bond dimension across the central spin-orbital cut. The $|\mathcal S|/D$
of Table~\ref{tab:molecules} is a distinct ratio, taken in the $(N{\pm}1)$ spectral sector.}
\label{tab:repro}
\begin{tabular}{llccccccc}\hline\hline
Molecule & CAS$(n_e,n_{\rm o})$ & $R_{\rm eq}$ & $R_{\rm diss}$ & $D$ & $|\mathcal S|^{\rm eq}$ & $|\mathcal S|^{\rm diss}$ & $\chi^{\rm eq}$ & $\chi^{\rm diss}$ \\ \hline
F$_2$ & (1+1,2) & 1.412 & 3.00 & 4 & 1 & 2 & 2 & 2 \\
N$_2$ & (3+3,6) & 1.098 & 2.40 & 400 & 7 & 80 & 7 & 32 \\
H$_2$O & (2+2,4) & 0.957 & 2.30 & 36 & 1 & 13 & 1 & 13 \\
HF & (1+1,2) & 0.917 & 2.30 & 4 & 1 & 4 & 1 & 4 \\
NH$_3$ & (3+3,6) & 1.012 & 2.33 & 400 & 1 & 68 & 4 & 42 \\
CO & (3+3,6) & 1.128 & 2.40 & 400 & 8 & 46 & 14 & 30 \\
H$_6$ & (3+3,6) & 0.900 & 2.00 & 400 & 9 & 91 & 8 & 36 \\
H$_4$ & (2+2,4) & 0.900 & 2.00 & 36 & 2 & 11 & 4 & 8 \\
H$_2$ & (1+1,2) & 0.741 & 2.00 & 4 & 1 & 2 & 1 & 2 \\
BeH$_2$ & (2+2,4) & 1.334 & 2.90 & 36 & 1 & 6 & 1 & 8 \\
C$_2$ & (4+4,6) & 1.243 & 2.60 & 225 & 13 & 6 & 11 & 2 \\
NO & (4+3,6) & 1.151 & 2.50 & 300 & 5 & 25 & 11 & 7 \\
O$_2$ & (5+3,6) & 1.208 & 2.60 & 120 & 11 & 4 & 4 & 2 \\
CN & (4+3,6) & 1.172 & 2.50 & 300 & 27 & 9 & 17 & 7 \\
OH & (3+2,4) & 0.970 & 2.40 & 24 & 1 & 5 & 1 & 4 \\
BH & (2+2,4) & 1.232 & 2.60 & 36 & 5 & 3 & 2 & 2 \\
BeH & (2+1,3) & 1.343 & 2.70 & 9 & 1 & 2 & 1 & 2 \\
LiH & (1+1,2) & 1.596 & 3.20 & 4 & 1 & 2 & 1 & 4 \\
LiF & (1+1,2) & 1.564 & 3.20 & 4 & 1 & 1 & 1 & 1 \\
\hline\hline\end{tabular}
\end{table}

\FloatBarrier  % flush all appendix floats (incl. Table~\ref{tab:repro}) BEFORE the bibliography

\end{document}